\theoremstyle{plain}
\newtheorem{definition}[thm]{Definition}
\newtheorem{theorem}[thm]{Theorem}
\newtheorem{proposition}[thm]{Proposition}
\newtheorem{example}[thm]{Example}
\newtheorem{corollary}[thm]{Corollary}
\newtheorem{remark}[thm]{Remark}
\newcommand{\sharpbin}{\ensuremath{\mathbin{\#}}}
\newcommand{\sharphatbin}{\ensuremath{\mathbin{\hat{\#}}}}
\newcommand{\CF}[1]{\ensuremath{\mathsf{CF}(#1)}}
\newcommand{\trans}[1]{\ensuremath{\,[\/{#1}\/\rangle}\,}
\newcommand{\hist}[1]{\ensuremath{\lfloor #1 \rfloor}}
\newcommand{\Pow}[1]{\ensuremath{\mathbf{2}^{#1}}}
\newcommand{\Powfin}[1]{\ensuremath{\mathbf{2}_\mathit{fin}^{#1}}}
\newcommand{\Powone}[1]{\ensuremath{\mathbf{2}_1^{#1}}}
\newcommand{\nat}{\ensuremath{\mathbb{N}}}
\newcommand{\pes}{\textsc{pes}}
\newcommand{\fes}{\textsc{fes}}
\newcommand{\ies}{\textsc{ies}}
\newcommand{\rces}{\textsc{rces}}
\newcommand{\ses}{\textsc{ses}}
\newcommand{\ges}{\textsc{ges}}
\newcommand{\dces}{\textsc{dces}}
\newcommand{\Ges}{\textsc{cdes}}
\newcommand{\cd}{\textsc{cd}}
\newcommand{\ctx}{\textsc{Cxt}}
\newcommand{\gesrel}{\ensuremath{\gg}}
\newcommand{\enab}[1]{\ensuremath{[\/{#1}\/\rangle}}
\newcommand{\setenum}[1]{\{#1\}}
\newcommand{\setcomp}[2]{\{{#1} \mid {#2}\}}
\newcommand{\card}[1]{\ensuremath{|#1|}}
\newcommand{\shrinkt}{\ensuremath{\lhd}}
\newcommand{\growt}{\ensuremath{\blacktriangleright}}
\newcommand{\shrink}[3]{#3\shrinkt[#1\rightarrow#2]}
\newcommand{\grow}[3]{#1\ensuremath{\blacktriangleright}[#2\rightarrow#3]}
\DeclareMathOperator{\ic}{ic} 
\DeclareMathOperator{\ac}{ac} 
\DeclareMathOperator{\dc}{dc} 
\DeclareMathOperator{\Ctr}{\mathcal{C}}
\DeclareMathOperator{\Rtc}{\mathcal{I}}
\newcommand{\pmv}[1]{\ensuremath{\mathsf{#1}}}
\newcommand{\Conf}[2]{\ensuremath{\mathsf{Conf}_{#2}(#1)}}
\newcommand{\shrset}[1]{\ensuremath{\mathsf{ShrMod}(#1)}}
\newcommand{\groset}[1]{\ensuremath{\mathsf{GroMod}(#1)}}
\newcommand{\drop}[2]{\ensuremath{\mathsf{Drop}({#1},{#2})}}
\newcommand{\agg}[2]{\ensuremath{\mathsf{Add}({#1},{#2})}}
\newcommand{\grafo}[2]{\ensuremath{\mathcal{G}_{#1}({#2})}}
\newcommand{\toges}[2]{\ensuremath{\mathcal{F}_{#1}({#2})}}
\newcommand{\inh}{\ensuremath{\vdash\hspace*{-3mm}\mbox{\raisebox{1pt}{$\multimap$}}}}
\newcommand{\De}[3]{\ensuremath{\inh({#1},{#2},{#3})}}
\newcommand{\event}[1]{\ensuremath{\pmv{#1}}}
\newcommand{\len}[1]{\ensuremath{\mathit{len}(#1)}}
\newcommand{\tuple}[1]{\ensuremath{\langle #1\rangle}}
\newcommand{\stati}{\ensuremath{\mathsf{S}}}
\newcommand{\instate}{\ensuremath{\mathit{s}_0}}
\newcommand{\ea}{\ensuremath{\mathsf{ea}}}
\newcommand{\ragg}[1]{\ensuremath{\mathsf{reach}(#1)}}
\newcommand{\toset}[1]{\ensuremath{\overline{#1}}}
\newcommand{\fl}[1]{\ensuremath{\mathsf{fl}(#1)}}
\newcommand{\maxfl}[1]{\ensuremath{\mathsf{maxfl}(#1)}}
\begin{document}

\title{Representing Dependencies in Event Structures}

\author[G.M. Pinna]{G. Michele Pinna} 
\address{Dipartimento di Matematica e Informatica, Universit\`a di Cagliari, Cagliari, Italy}  
\email{gmpinna@unica.it}

  \begin{abstract}
     Event structures where the causality may explicitly change during a computation have  
     recently gained the stage.
     In this kind of event structures the changes in the set of the causes of an event are triggered 
     by modifiers that may add or remove dependencies, thus making the happening of an event contextual. 
     Still the focus is always on the dependencies of the event.
     In this paper we promote the idea that the \emph{context} determined by the modifiers plays a 
     major role, and the context itself determines not only the causes but also what causality 
     should be. Modifiers are then used to understand when an event (or a set of events) can be added 
     to a 
     configuration, together with a set of events modeling dependencies, which will play a less important 
     role.  We show that most of the notions of Event Structure presented in literature can be 
     translated into this new kind of event structure, preserving the main notion, namely the one of 
     configuration.  
  \end{abstract}

\maketitle



\section{Introduction}\label{sec:intro}
The notion of causality is an intriguing one. In the sequential case, 
the intuition behind it is almost trivial: 
if the activity $\pmv{e}$ depends on the activity $\pmv{e}'$, then to 
happen the activity $\pmv{e}$ needs that $\pmv{e}'$ has already happened. This is easily
represented in Petri nets (\cite{Rei:PNI}), the transition  $\pmv{e}'$ \emph{produces} a token
that is \emph{consumed} by the transition $\pmv{e}$ (the net $N'$ in Figure~\ref{fig:uno}). 
The dependency is testified by the observation
that the activity  $\pmv{e}'$ always precedes the activity  $\pmv{e}$.
However this intuition does not reflect other possibilities.
If we abandon the sequential case and move toward possibly loosely cooperating system the notion
of causality become involved.
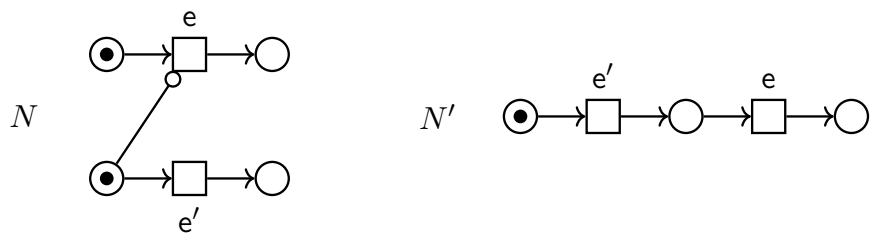
\begin{figure}[!h]
\centerline{\scalebox{1.1}{\begin{tikzpicture}
\tikzstyle{inhibitor}=[o-,thick]
\tikzstyle{pre}=[<-,thick]
\tikzstyle{post}=[->,thick]
\tikzstyle{transition}=[rectangle, draw=black,thick,minimum size=4mm]
\tikzstyle{place}=[circle, draw=black,thick,minimum size=4mm]
\node (n) at (-1,.75) {$N$};
\node (n1) at (4,.75) {$N'$};
\node[place,tokens=1] (p1) at (0,0) {};
\node[place,tokens=1] (p3) at (0,1.5) {};
\node[place] (p2) at (2,0) {};
\node[place] (p4) at (2,1.5) {};
\node[transition] (t1) at (1,0) [label=below:$\pmv{e}'$] {}
edge[pre] (p1)
edge[post] (p2);
\node[transition] (t2) at (1,1.5) [label=above:$\pmv{e}$] {}
edge[pre] (p3)
edge[inhibitor] (p1)
edge[post] (p4);
\node[place,tokens=1] (q1) at (5,.75) {};
\node[place] (q2) at (7,.75) {};
\node[place] (q3) at (9,.75) {};
\node[transition] (t3) at (6,.75) [label=above:$\pmv{e}'$] {}
edge[pre] (q1)
edge[post] (q2);
\node[transition] (t4) at (8,.75) [label=above:$\pmv{e}$] {}
edge[pre] (q2)
edge[post] (q3);
\end{tikzpicture}}}
\caption{Two Petri nets fostering different interpretations of the dependencies among events}\label{fig:uno}
\end{figure}
Consider the case of a Petri net with inhibitor arcs (\cite{JK:SIN}) where the precondition
of the transition $\pmv{e}'$ inhibits the transition $\pmv{e}$ (the net $N$ in Figure~\ref{fig:uno}). 
The latter to happens needs
that the transition $\pmv{e}'$ happens first, and the \emph{observation} testifies that the
activity $\pmv{e}$ needs that $\pmv{e}'$ has already happened, though resources are not exchanged 
between $\pmv{e}'$ and $\pmv{e}$, as it is done in the net $N'$ in Figure~\ref{fig:uno}.
In both cases the observation that the event $\pmv{e}'$ must happen first leads to state that
$\pmv{e}'$ precedes $\pmv{e}$ and this can be well represented with a partial order relation among
events.

This quite simple discussion suggests that the notion of causality may have many facets. 
In fact, if the dependencies are modeled
just with a well founded partial order, inhibitor arcs can be used to model 
these dependencies, but the notion of partial order does not capture precisely the
subtleties that are connected to the notion of causality.

To represent the semantics of concurrent and distributed systems 
the notion of \emph{event structure} plays a prominent role. 
Event structures have been introduced in \cite{NPW:PNES} and \cite{Win:ES} and since then 
have been considered as a cornerstone. 
The idea is simple: the activities of a system are the \emph{events} and their relationships are specified
somehow, \emph{e.g.} with a partial order modeling the \emph{enabling} and a 
predicate expressing when activities are
\emph{coherent} or not. 
Starting from this idea many authors have faced the problem of adapting this notion to many different
situations which have as a target the attempt to represent faithfully various situations.
This has triggered many diverse approaches. 
In \cite{Gun:GLCES} and \cite{Gu:CA} \emph{causal automata} are
discussed, with the idea that the conditions under which an event may happen are specified by a suitable 
logic formula, 
in \cite{GP87:POMCCF} and \cite{G88:MCPONTS} it is argued that a partial order may be not enough
or may be, in some situation, a too rigid notion, and this idea is used also in \cite{PP:NE} and
\cite{PP:NEPC} where the notion of \emph{event automata} is introduced, 
and it is used also in \cite{Pi:FI06} where an enabling/disabling relation for event 
automata is discussed.
Looking at the enabling relation, both \emph{bundle event structures} (\cite{Lan92:BES}) and 
\emph{dual event structures} (\cite{LBK:dual}) provide
a notion of enabling capturing \emph{or}-causality (the former exclusive \emph{or}-causality and 
the latter
non exclusive \emph{or}-causality).
\emph{Asymmetric event structures} (\cite{BCM:CNAED})
introduces a weaker notion of causality which models contextual arcs in Petri nets, or
in the case \emph{circular event structures} (\cite{BCPZ:CircCausES}) the enabling notion is tailored to
model also \emph{circular dependencies}. In \emph{flow event structures} (\cite{Bou:FES}) the partial 
order is required to hold only in configurations.
An approach 
where dependencies may change according to the presence of \emph{inhibitions} 
is proposed in \cite{BBCP:FCSPNRI} and \cite{BBCP:rivista}, where \emph{inhibitor event structures}
are studied.  
This short and incomplete discussion (the event structures spectrum is rather broad) 
should point out the variety of approaches present in literature.
It should be also observed that the majority of the approaches model causality with a relation
that can be reduced to a partial order, hence causality is represented stating what are the
events that should have happened before. 

In this paper we introduce yet another notion of event structure. 
Triggered by recent works on \emph{adding} or 
\emph{subtracting} dependencies among events based on the fact that apparently
unrelated events have happened (\cite{AKPN:DC,AKPN:lmcs18}), we argue that rather than focussing on how
to model these enrichment or/and impoverishment, it is much more natural to focus on the context
on which an event takes place. 
In fact it is the context that can determine the proper dependencies that are applicable at the state
where the event should take place and the context holds, 
and the context can also be used as well to forbid that the event is added to the state.
This new relation resembles the one used in inhibitor event structures, but 
it differs in the way the contexts are determined. In the case of inhibitor event structures 
the presence of a certain event (the inhibiting
context) was used to require that another one was present as well (representing the trigger able
to remove the inhibition). 
Here the flavour is different as it is more prescriptive: it is required that exactly a set of events 
is present and if this happens then also another one should be present as well.
It should be stressed that triggers and contexts may exchange their role.
Consider again the two nets depicted in Figure~\ref{fig:uno}, 
we may have that in both cases the trigger is 
determined by the happening of the event $\pmv{e}'$ and the context is the empty set, but we can 
consider as context the event $\pmv{e}'$  and the trigger as the empty set.
This simple relation, which we will call \emph{context-dependency} relation, 
suffices to cover the aforementioned notions.
It is worth observing that determining the context and the triggers associated to it is quite 
similar to
trying to understand the dependencies. Consider the net $N''$ in Figure~\ref{fig:due}.
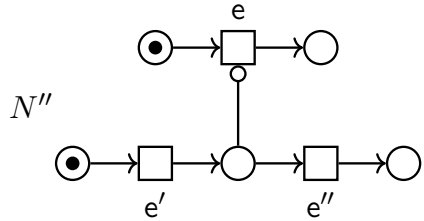
\begin{figure}[h]
\centerline{\scalebox{1.1}{\begin{tikzpicture}
\tikzstyle{inhibitor}=[o-,thick]
\tikzstyle{pre}=[<-,thick]
\tikzstyle{post}=[->,thick]
\tikzstyle{transition}=[rectangle, draw=black,thick,minimum size=4mm]
\tikzstyle{place}=[circle, draw=black,thick,minimum size=4mm]
\node (n) at (-.5,.7) {$N''$};
\node[place,tokens=1] (p1) at (0,0) {};
\node[place,tokens=1] (p3) at (1,1.4) {};
\node[place] (p2) at (2,0) {};
\node[place] (p4) at (3,1.4) {};
\node[place] (p5) at (4,0) {};
\node[transition] (t1) at (1,0) [label=below:$\pmv{e}'$] {}
edge[pre] (p1)
edge[post] (p2);
\node[transition] (t2) at (2,1.4) [label=above:$\pmv{e}$] {}
edge[pre] (p3)
edge[inhibitor] (p2)
edge[post] (p4);
\node[transition] (t3) at (3,0) [label=below:$\pmv{e}''$] {}
edge[pre] (p2)
edge[post] (p5);
\end{tikzpicture}}}
\caption{A net where the inhibition can be removed, and the dependency among events is 
contextual}\label{fig:due}
\end{figure}
Here $\pmv{e}$ may be added either to the empty set or to a set containing both 
$\pmv{e}'$ and $\pmv{e}''$. The context containing $\pmv{e}'$ only leads to require that the 
event $\pmv{e}''$ is present (in the spirit of the relation for inhibitor event structures), making
$\pmv{e}$ dependent on $\pmv{e}''$. However we could also have chosen to focus on contexts 
only and in this case
the context containing just $\pmv{e}'$ is ruled out among the contexts in which, 
together with some others
dependencies, $\pmv{e}$ may take place, and in this case the two contexts are $\emptyset$ and 
$\setenum{\pmv{e}', \pmv{e}''}$.
As hinted above, it will turn out that the context plays a more relevant 
role with respect to the dependencies,
as the context can be seen positively 
(it specifies under which conditions an event can be added, together
with the dependency) or negatively (it specifies  under which conditions an 
event cannot be added, and in this case
the event cannot be added simply stipulating that it depends on itself).

In this paper we will focus on event structures where the change of state is always triggered by 
the happening of a
single event, hence we will not consider steps (\emph{i.e.} non empty and finite subsets of 
events). 
However the generalization to steps is straightforward.
We will also assume that in the event structures considered, each configuration can be reached by the 
initial one.

This paper is an extended and revised version of \cite{Pi:coord}. We have added some examples  and
further discussed the notion of context-dependent event structure. We also have made
more precise its relationship with various kind of event structures, among them flow event structures
and inhibitor event structures, and 
we have better detailed the characteristics of the considered event automata.

\subsubsection*{Organization of the paper and contributions:}\ 
In the next section we will introduce and discuss the new brand of event structure, 
\emph{context-dependent} event structure,
which is the main contribution of this paper and we will provide various examples
to show how the new relation is used to model different situations. 
In Section~\ref{sec:es} 
we review and briefly analyze
some notions of event structures presented in literature, 
namely \emph{prime} event structure (\cite{Win:ES}), 
\emph{flow} event structure (\cite{Bou:FES}),
\emph{relaxed prime} event structure and \emph{dynamic 
causality} event structure (\cite{AKPN:DC}), \emph{inhibitor} event structure (\cite{BBCP:rivista}) and
event structure for \emph{resolvable conflicts} (\cite{GP:ESRC}), and,
in section~\ref{sec:relating}, we show that the each event structure
presented in section~\ref{sec:es} can be translated into this new kind of event structure. 
This will be obtained coding each event structure into an \emph{event automaton} (\cite{PP:NEPC})
and then showing how to associate a context-dependent event structure to each event automaton.
We will also briefly discuss the expressivity of inhibitor event structure with respect to
the event structure where dependencies may grow, adding a tiny to information on how 
the various kind of event structures are related. 
We will end the paper  with some conclusions and we will give some hints for further developments.

\subsubsection*{Notation:}\
Let $A$ be a set, with let $\rho$ we denote a sequence of elements 
belonging to $A$,
and with $\epsilon$ we denote the empty sequence. With $\overline{\rho}$ we denote the
set of elements of $A$ appearing in $\rho$. Thus  
$\overline{\rho} = \emptyset$ if $\rho = \epsilon$ and 
$\overline{\pmv{a}\rho'} = \setenum{\pmv{a}}\cup \overline{\rho'}$ if $\rho = \pmv{a}\rho'$.
Given a sequence $\rho = \pmv{a}_1\cdots\pmv{a}_{n}\cdots$ with $\len{\rho}$ we denote its length.
With $\rho_0$ we sometimes denote sequence $\epsilon$ and, if $\len{\rho}\geq 1$,  
for each $1 \leq i \leq \len{\rho}$ with $\rho_i$ we denote the sequence $\pmv{a}_1\cdots\pmv{a}_{i}$.
A sequence $\rho$ can be infinite (and then it can be seen as a mapping from $\nat$ to $A$). 
Let $A$ be a set, with $\Pow{A}$ we denote the subsets of $A$ and with $\Powfin{A}$ the finite 
subsets of $A$.


\section{Context-Dependent Event Structure}\label{sec:es-mie}
We introduce yet another notion of event structure, which is the main contribution 
of the paper.

We recall what an \emph{event} is. 
An event is an \emph{atomic} and \emph{individual} action which is able to change the 
state of a system.
Though this may appear totally obvious, we believe it is better to stress these characteristics, namely
atomicity, individuality and the ability to change the state of the system, which means that we
can observe events, and their happening as a whole. 

Event structures are intended to model \emph{concurrent} and \emph{distributed} systems where the
activities are represented by events by
defining relationships among these events such as \emph{causality} and
\emph{conflict}, and establishing the conditions on which a certain event can be added to
a state. 
The \emph{state} of a system modeled by an event structure is a \emph{subset} of events (those happened
so far, representing the accomplished activities), and this set of events
is often called \emph{configuration}. 
States can be enriched by adding other information beside the one
represented by the events that have determined the state, either adding information on the relationship
among the various events in the state, \emph{e.g.} adding dependencies among them (the state is then 
a partial order, \cite{Ren:concur92} or \cite{AKPN:lmcs18}) 
or adding suitable information to the whole state.

We pursue the idea that the happening of an event depends on a set of modifiers (the \emph{context}) 
and on a set of \emph{real} dependencies, which are activated by the set of modifiers. 

We recall that in this paper we will consider only \emph{unlabelled} event structures.
To simplify the presentation we retain a binary conflict relation. 
We first define what a \emph{conflict free} subset of events is.
\begin{definition}\label{de:conf-free}
Let $\event{E}$  be a set of events and let $\# \subseteq \event{E}\times\event{E}$ be an irreflexive 
and symmetric relation, called \emph{conflict} relation. 
Let $X\subseteq \event{E}$ be a subset of events, we say that $X$ is \emph{conflict free}, denoted
with $\CF{X}$, iff $\forall \pmv{e}, \pmv{e}'\in X$ it holds that 
$\neg(\pmv{e} \mathbin{\#} \pmv{e}')$.
\end{definition}
We introduce the notion of \emph{context-dependent} event structure.

\begin{definition}\label{de:mia-es}
 A \emph{context-dependent event structure} ({\Ges}) is a triple 
 $\mathit{E} = (\event{E}, \#, \gesrel)$ where 
 \begin{itemize}
  \item 
  $\event{E}$ is a set of \emph{events},
  \item 
  $\mathord{\#} \subseteq \event{E}\times \event{E}$ is an irreflexive and symmetric relation, called 
        \emph{conflict relation}, and
  \item 
  $\gesrel\ \subseteq \Pow{\pmv{A}}\times \event{E}$,  
        where $\pmv{A} \subseteq \Powfin{\event{E}}\times\Powfin{\event{E}}$,   
        is a relation, called
        the \emph{context-dependency} relation (\cd-relation), which is such that for each 
        $\pmv{Z}\gesrel \pmv{e}$ it holds that 
         \begin{itemize}
          \item 
                $\pmv{Z}\neq \emptyset$,
          \item 
                for each $(X,Y)\in \pmv{Z}$ it holds that $\CF{X}$ and
                $\CF{Y}$, 
                and
          \item for each $(X,Y), (X',Y')\in \pmv{Z}$ if $X = X'$ then $Y = Y'$.
         \end{itemize}
 \end{itemize}
\end{definition}
Each element of the \cd-relation $\gesrel$ will be called \emph{entry}, and for each
entry $\pmv{Z} \gesrel \pmv{e}$, we will call $(X,Y) \in \pmv{Z}$ the \emph{element} of the
entry.
Finally, for each element $(X,Y)$ of each entry, $X$ is the set of \emph{modifiers} and
$Y$ is the set of \emph{dependencies}.
The \cd-relation models, for each event, which are the possible contexts in which the event may happen 
(the modifiers of each element) and for each context which are the events that have to be occurred 
(the dependencies).
We stipulate that dependencies and modifiers are formed by non conflicting events, 
though this is not strictly needed, as
the relation can model also conflicts. 
Conflicts can be modeled making the event dependent on itself in the presence of a suitable
set of modifiers, as we will point out later.
We also require that, for each entry of the \cd-relation, two elements of the entry cannot have the
same modifiers.

The first step toward showing how this relation is used is formalized in 
the notion of enabling of an event. 
We have to determine, for each $\pmv{Z} \gesrel \pmv{e}$, which of the set of modifiers
$X_i$ should be considered, \emph{i.e.} which element of the entry should be used. 
To do so we define the \emph{context} associated to each 
entry of the \cd-relation.
\begin{definition}\label{de:mia-context-Z}
Let  $\mathit{E} = (\event{E}, \#, \gesrel)$  be a \Ges.
Let $\pmv{Z} \gesrel \pmv{e}$ be an entry of $\gesrel$, with 
$\pmv{Z} = \setenum{(X_1,Y_1), (X_2,Y_2), \dots}$, 
with $\ctx(\pmv{Z}\gesrel\pmv{e})$ we
denote the set of events $\bigcup_{i=1}^{|Z|} X_i$.
\end{definition}
The context associated to the entry $\pmv{Z}\gesrel\pmv{e}$ is the union of all the set of
modifiers for the given entry. 

In the definition of \emph{enabling} of an event $\pmv{e}$ at a subset of events $C$ it will be clear
what is the role of a context. 
\begin{definition}\label{de:mia-es-enabling}
 Let $\mathit{E} = (\event{E}, \#, \gesrel)$ be a \Ges\ and let $C\subseteq \event{E}$ 
 be a subset of events. 
 The event $\pmv{e}\not\in C$ is \emph{enabled} at $C$, denoted with $C\enab{\pmv{e}}$, 
 if for each entry $\pmv{Z} \gesrel \pmv{e}$, with 
 $\pmv{Z} = \setenum{(X_1,Y_1), (X_n,Y_n), \dots}$, there is an element 
 $(X_i,Y_i)\in \pmv{Z}$ such that
 \begin{itemize}
   \item $\ctx(\pmv{Z} \gesrel \pmv{e})\cap C = X_i$, and 
   \item $Y_i\subseteq C$.
 \end{itemize}  
\end{definition}
An event $\pmv{e}$, not present in a subset of event, is enabled at this subset whenever, for each entry
of the \cd-relation, there exists an element such that the set of modifiers is equal to the
intersection of the context of this entry and the subset of events.
Observe that requiring the non emptiness of the set $\pmv{Z}$ in $\pmv{Z} \gesrel \pmv{e}$ 
guarantees that 
an event $\pmv{e}$ may be enabled at some subset of events. 
The requirement that two different elements of an entry must have different sets of modifiers
is used to rule out the possibility that the same event is enabled by one element in more
than one different ways.
We do not make any further assumption on the elements of an entry: the sets
of modifiers must be different but they can be in any other relation among sets, in 
particular a set of modifiers can be contained into another. Consider the
events involved in net depicted in Figure~\ref{fig:due}, we have that the event (transition) $\pmv{e}$
can happen at the beginning (hence with $\emptyset$ as set of modifiers) or
when $\pmv{e}'$ and $\pmv{e}''$ have happened, and in any way this will be modeled, the set of 
modifiers would be
non empty (either containing just $\pmv{e}'$ or also $\pmv{e}''$), 
containing obviously the previous set of modifiers. 
As we have already pointed out, the \cd-relation could be used to express conflicts: 
$\pmv{e}\sharpbin\pmv{e'}$ could be modeled by adding
$\setenum{(\emptyset,\emptyset),(\setenum{\pmv{e}},\setenum{\pmv{e}'})} \gesrel \pmv{e}'$ and 
$\setenum{(\emptyset,\emptyset),(\setenum{\pmv{e}'},\setenum{\pmv{e}})} \gesrel \pmv{e}$ to the 
$\gesrel$ relation, 
and the presence of just one of them would model the
asymmetric conflict. The conflicts modeled in this way are \emph{persistent}, namely they 
cannot be changed.

We define now what is the state of system modeled by a \Ges. 
\begin{definition}
\label{de:mia-es-event-traces}
Let $\mathit{E} = (\event{E}, \#, \gesrel)$ be a \Ges. 
Let $C$ be a subset of $\event{E}$. 
We say that $C$ is a \emph{configuration} of the \Ges\ $\mathit{E}$ 
iff there exists a 
sequence of distinct events $\rho = \pmv{e}_1\cdots\pmv{e}_n\cdots$ over $\event{E}$ 
such that 
\begin{itemize}
  \item 
  $\overline{\rho} = C$,
  \item 
  $\CF{\overline{\rho}}$, and 
  \item 
  $\forall 1 \leq i \leq \len{\rho}.\ \overline{\rho}_{i-1} \enab{\pmv{e}_i}$.
\end{itemize}
With $\Conf{\mathit{E}}{\Ges}$ we denote 
the set of configurations of the \Ges\ $\mathit{E}$.
\end{definition}
This definition is standard, 
and it has an operational flavour as an ordering in executing the
events should be found such that at each step one of the events can be added as it is
enabled. 
The sequence of distinct events $\rho = \pmv{e}_1\cdots\pmv{e}_n\cdots$ will be sometimes called
\emph{event trace}.
In a \Ges\ the empty set is always a configuration, as it is conflict free and the empty 
sequence of events gives the proper events trace.

The following definition introduces a relation $\mapsto_{\Ges}$ among configurations stating when
a configuration can be \emph{reached} from another one by adding an event. 
\begin{definition}\label{de:mia-set-configuration}
Let $\mathit{E} = (\event{E}, \#, \gesrel)$ be a \Ges. 
Given two configurations $C, C'\in \Conf{\mathit{E}}{\Ges}$ such that 
$C'\setminus C = \setenum{\pmv{e}}$, we stipulate that 
$C \mapsto_{\Ges} C'$ iff $C\enab{\pmv{e}}$.
\end{definition}
We illustrate this new kind of event structure with some examples, that should clarify how the 
\cd-relation is used and give some hints on its \emph{expressivity}. 

\begin{figure}[ht]
\centerline{\begin{tabular}{ccc}
\begin{tikzpicture}
[bend angle=45, scale=0.9, 
  read/.style={-,shorten
    <=0pt,thick},
  pre/.style={<-,shorten
    <=0pt,>=stealth,>={Latex[width=1mm,length=1mm]},thick}, post/.style={->,shorten
    >=0,>=stealth,>={Latex[width=1mm,length=1mm]},thick},place/.style={circle, draw=black,
    thick,minimum size=4mm}, transition/.style={rectangle, draw=black!0,
    thick,minimum size=4mm}, invplace/.style={circle,
    draw=black!0,thick}]
\node[transition] (spe) at (0,1.5) {};  
\node[transition] (in) at (0,0) {$\emptyset$};
\node[transition] (a) at (1.8,1) {$\setenum{\pmv{a}}$}
edge[pre] (in);
\node[transition] (b) at (1.8,-1) {$\setenum{\pmv{b}}$}
edge[pre] (in);
\node[transition] (c) at (1.8,0) {$\setenum{\pmv{c}}$}
edge[pre] (in);
\node[transition] (bc) at (4,-1) {$\setenum{\pmv{b},\pmv{c}}$}
edge[pre] (c)
edge[pre] (b);
\node[transition] (ac) at (4,1) {$\setenum{\pmv{a},\pmv{c}}$}
edge[pre] (a)
edge[pre] (c);
\node[transition] (abc) at (7,0) {$\setenum{\pmv{a},\pmv{b},\pmv{c}}$}
edge[pre] (bc)
edge[pre] (ac);
\end{tikzpicture} & & \begin{tikzpicture}
[bend angle=45, scale=0.9, 
  read/.style={-,shorten
    <=0pt,thick},
  pre/.style={<-,shorten
    <=0pt,>=stealth,>={Latex[width=1mm,length=1mm]},thick}, post/.style={->,shorten
    >=0,>=stealth,>={Latex[width=1mm,length=1mm]},thick},place/.style={circle, draw=black,
    thick,minimum size=4mm}, transition/.style={rectangle, draw=black!0,
    thick,minimum size=4mm}, invplace/.style={circle,
    draw=black!0,thick}]
\node[transition] (spe) at (0,1.5) {};  
\node[transition] (in) at (0,0) {$\emptyset$};
\node[transition] (a) at (1.8,1) {$\setenum{\pmv{a}}$}
edge[pre] (in);
\node[transition] (b) at (1.8,-1) {$\setenum{\pmv{b}}$}
edge[pre] (in);
\node[transition] (ab) at (4,0) {$\setenum{\pmv{a},\pmv{b}}$}
edge[pre] (a)
edge[pre] (b);
\node[transition] (ac) at (4,1) {$\setenum{\pmv{a},\pmv{c}}$}
edge[pre] (a);
\node[transition] (bc) at (4,-1) {$\setenum{\pmv{b},\pmv{c}}$}
edge[pre] (b);
\node[transition] (abc) at (7,0) {$\setenum{\pmv{a},\pmv{b},\pmv{c}}$}
edge[pre] (ab)
edge[pre] (ac)
edge[pre] (bc);
\end{tikzpicture} \\ [3mm]
 (a) & & (b) \\ [6mm]
\begin{tikzpicture}
[bend angle=45, scale=0.9, 
  read/.style={-,shorten
    <=0pt,thick},
  pre/.style={<-,shorten
    <=0pt,>=stealth,>={Latex[width=1mm,length=1mm]},thick}, post/.style={->,shorten
    >=0,>=stealth,>={Latex[width=1mm,length=1mm]},thick},place/.style={circle, draw=black,
    thick,minimum size=4mm}, transition/.style={rectangle, draw=black!0,
    thick,minimum size=4mm}, invplace/.style={circle,
    draw=black!0,thick}]
\node[transition] (spe) at (0,1.5) {};  
\node[transition] (in) at (0,0) {$\emptyset$};
\node[transition] (a) at (1.8,1) {$\setenum{\pmv{a}}$}
edge[pre] (in);
\node[transition] (c) at (1.8,-1) {$\setenum{\pmv{b}}$}
edge[pre] (in);
\node[transition] (ac) at (4,1) {$\setenum{\pmv{a},\pmv{c}}$}
edge[pre] (a);
\node[transition] (bc) at (4,-1) {$\setenum{\pmv{b},\pmv{c}}$}
edge[pre] (c);
\end{tikzpicture} & & \begin{tikzpicture}
[bend angle=45, scale=0.9, 
  read/.style={-,shorten
    <=0pt,thick},
  pre/.style={<-,shorten
    <=0pt,>=stealth,>={Latex[width=1mm,length=1mm]},thick}, post/.style={->,shorten
    >=0,>=stealth,>={Latex[width=1mm,length=1mm]},thick},place/.style={circle, draw=black,
    thick,minimum size=4mm}, transition/.style={rectangle, draw=black!0,
    thick,minimum size=4mm}, invplace/.style={circle,
    draw=black!0,thick}]
\node[transition] (spe) at (0,1.5) {};  
\node[transition] (in) at (0,0) {$\emptyset$};
\node[transition] (a) at (1.8,1) {$\setenum{\pmv{a}}$}
edge[pre] (in);
\node[transition] (c) at (1.8,-1) {$\setenum{\pmv{b}}$}
edge[pre] (in);
\node[transition] (ab) at (4,0) {$\setenum{\pmv{a},\pmv{b}}$}
edge[pre] (a);
\end{tikzpicture} \\ [3mm]
 (c) & & (d) \\  [6mm]
\begin{tikzpicture}
[bend angle=45, scale=0.9, 
  read/.style={-,shorten
    <=0pt,thick},
  pre/.style={<-,shorten
    <=0pt,>=stealth,>={Latex[width=1mm,length=1mm]},thick}, post/.style={->,shorten
    >=0,>=stealth,>={Latex[width=1mm,length=1mm]},thick},place/.style={circle, draw=black,
    thick,minimum size=4mm}, transition/.style={rectangle, draw=black!0,
    thick,minimum size=4mm}, invplace/.style={circle,
    draw=black!0,thick}]
\node[transition] (spe) at (0,1.5) {};  
\node[transition] (in) at (0,0) {$\emptyset$};
\node[transition] (a) at (1.8,1) {$\setenum{\pmv{a}}$}
edge[pre] (in);
\node[transition] (c) at (1.8,-1) {$\setenum{\pmv{c}}$}
edge[pre] (in);
\node[transition] (b) at (1.8,0) {$\setenum{\pmv{b}}$}
edge[pre] (in);
\node[transition] (ac) at (4,0) {$\setenum{\pmv{a},\pmv{c}}$}
edge[pre] (a)
edge[pre] (c);
\node[transition] (ab) at (4,1) {$\setenum{\pmv{a},\pmv{b}}$}
edge[pre] (a)
edge[pre] (b);
\node[transition] (bc) at (4,-1) {$\setenum{\pmv{b},\pmv{c}}$}
edge[pre] (c);
\node[transition] (abc) at (7,0) {$\setenum{\pmv{a},\pmv{b},\pmv{c}}$}
edge[pre] (ab)
edge[pre] (ac)
edge[pre] (bc);
\end{tikzpicture} & & \begin{tikzpicture}
[bend angle=45, scale=0.9, 
  read/.style={-,shorten
    <=0pt,thick},
  pre/.style={<-,shorten
    <=0pt,>=stealth,>={Latex[width=1mm,length=1mm]},thick}, post/.style={->,shorten
    >=0,>=stealth,>={Latex[width=1mm,length=1mm]},thick},place/.style={circle, draw=black,
    thick,minimum size=4mm}, transition/.style={rectangle, draw=black!0,
    thick,minimum size=4mm}, invplace/.style={circle,
    draw=black!0,thick}]
\node[transition] (spe) at (0,1.5) {};  
\node[transition] (in) at (0,0) {$\emptyset$};
\node[transition] (a) at (1.8,1) {$\setenum{\pmv{a}}$}
edge[pre] (in);
\node[transition] (c) at (1.8,-1) {$\setenum{\pmv{c}}$}
edge[pre] (in);
\node[transition] (b) at (1.8,0) {$\setenum{\pmv{b}}$}
edge[pre] (in);
\node[transition] (ac) at (4,0) {$\setenum{\pmv{a},\pmv{c}}$}
edge[pre] (a)
edge[pre] (c);
\node[transition] (ab) at (4,1) {$\setenum{\pmv{a},\pmv{b}}$}
edge[pre] (a)
edge[pre] (b);
\node[transition] (bc) at (4,-1) {$\setenum{\pmv{b},\pmv{c}}$}
edge[pre] (c)
edge[pre] (b);
\end{tikzpicture} \\  [6mm]
(e) & & (f) \\ 
\end{tabular}}
\caption{The configurations of the \Ges\ of the
Examples~\ref{ex:new2res} (a), \ref{ex:shr-dep} (b),
\ref{ex:or-caus} (c), \ref{ex:as-conf} (d),  
\ref{ex:grow-dep} (e) and \ref{ex:new3conf} (f)}\label{fig:mia-examples}
\end{figure}

\begin{example}[Resolvable conflicts]\label{ex:new2res}
 Consider three events $\pmv{a}, \pmv{b}$ and $\pmv{c}$. All the events are 
 singularly enabled but $\pmv{a}$ and $\pmv{b}$ are in conflict unless $\pmv{c}$ has not happened
 (this are called \emph{resolvable} conflicts as the conflict between $\pmv{a}$ and $\pmv{b}$ is
 resolved by the execution of $\pmv{c}$).
 For the event $\pmv{a}$ we stipulate 
 $$\setenum{(\emptyset,\emptyset),(\setenum{\pmv{c}},\emptyset),(\setenum{\pmv{b}},\setenum{\pmv{c}})}
 \gesrel \pmv{a}$$
 that should be interpreted as follows: if the context is $\emptyset$ or $\setenum{\pmv{c}}$ then
 $\pmv{a}$ is enabled without any further condition (the $Y$ are the empty set), if the context
 is $\setenum{\pmv{b}}$ then also $\setenum{\pmv{c}}$ should be present.
 The set $\ctx(\setenum{(\emptyset,\emptyset),(\setenum{\pmv{c}},\emptyset),
 (\setenum{\pmv{b}},\setenum{\pmv{c}})}\gesrel \pmv{a})$ is $\setenum{\pmv{b},\pmv{c}}$. 
 Similarly, for the event
 $\pmv{b}$ we stipulate 
 $$\setenum{(\emptyset,\emptyset),(\setenum{\pmv{c}},\emptyset),(\setenum{\pmv{a}},\setenum{\pmv{c}})}\gesrel \pmv{b}$$ 
 which is justified as above, and such that 
 $\ctx(\setenum{(\emptyset,\emptyset),(\setenum{\pmv{c}},\emptyset),(\setenum{\pmv{a}},\setenum{\pmv{c}})}\gesrel \pmv{b})$ is $\setenum{\pmv{a},\pmv{c}}$.  
 Finally, for the event $\pmv{c}$, 
 we stipulate 
 $$\setenum{(\emptyset,\emptyset),(\setenum{\pmv{a}},\emptyset),(\setenum{\pmv{b}},\emptyset)}
 \gesrel \pmv{c}$$
 namely any context allows to add the event.
 In Figure~\ref{fig:mia-examples}(a) the configurations of this \Ges\ and how they are related are
 depicted.
 At the configuration $\setenum{\pmv{a}}$ it is not possible to add the event $\pmv{b}$ as 
 the entry for $\pmv{b}$ has as context $\setenum{\pmv{a},\pmv{c}}$, hence the pair
 selected is $(\setenum{\pmv{a}},\setenum{\pmv{c}})$ but 
 $\setenum{\pmv{c}}\not\subseteq \setenum{\pmv{a}}$: the event $\pmv{b}$ is not enabled, but it will
 eventually be enabled again when $\pmv{c}$ will be executed (at the 
 configuration $\setenum{\pmv{a},\pmv{c}}$).
\end{example}

\begin{example}[Removing dependencies]\label{ex:shr-dep}
 Consider three events $\pmv{a}, \pmv{b}$ and $\pmv{c}$, and assume that $\pmv{c}$ depends on 
 $\pmv{a}$ unless the event $\pmv{b}$ has occurred, and in this case this dependency is removed. 
 Thus there is a classic causality 
 between $\pmv{a}$ and $\pmv{c}$, but it can be dropped if $\pmv{b}$ occurs. Clearly $\pmv{a}$ and
 $\pmv{b}$ are always enabled. The entries of the \cd-relation are 
 $\setenum{(\emptyset,\emptyset)}\gesrel \pmv{a}$, 
 $\setenum{(\emptyset,\emptyset)}\gesrel \pmv{b}$ and
 $\setenum{(\emptyset,\setenum{\pmv{a}}),(\setenum{\pmv{b}},\emptyset)}\gesrel \pmv{c}$.
 The event $\pmv{c}$ is enabled at the configuration $\setenum{\pmv{b}}$ as 
 the entry for $\pmv{c}$ has as context $\setenum{\pmv{b}}$ and the element
 selected of this entry is $(\setenum{\pmv{b}},\emptyset)$. At the configuration 
 $\emptyset$ the event $\pmv{c}$ is not enabled as the pair selected would be
 $(\emptyset,\setenum{\pmv{a}})$ and 
 $\setenum{\pmv{a}}\not\subseteq \emptyset$.
 The configurations of this \Ges\ are in Figure~\ref{fig:mia-examples}(b).
\end{example}

\begin{example}[Or causality]\label{ex:or-caus}
 Consider three events $\pmv{a}$, $\pmv{b}$ and $\pmv{c}$, assume that $\pmv{a}$ and $\pmv{b}$ are in
 conflict and that $\pmv{c}$ depends on either $\pmv{a}$ or $\pmv{b}$.
 The \cd-relation is 
 $\setenum{(\emptyset,\emptyset)}\gesrel \pmv{a}$, 
 $\setenum{(\emptyset,\emptyset)}\gesrel \pmv{b}$, and
 $\setenum{(\setenum{\pmv{a}},\emptyset),(\setenum{\pmv{b}},\emptyset)}\gesrel \pmv{c}$.
 The configurations are those shown in Figure~\ref{fig:mia-examples}(c).
\end{example}

\begin{example}[Asymmetric conflict]\label{ex:as-conf}
 Consider two events $\pmv{a}$ and $\pmv{b}$, and assume that $\pmv{a}$ is in asymmetric conflict
 with $\pmv{b}$, \emph{i.e.} once that $\pmv{b}$ has happened, then event $\pmv{a}$ cannot be added, but 
 the vice versa is allowed.
 The \cd-relation is 
 $\setenum{(\emptyset,\emptyset),(\setenum{\pmv{b}},\setenum{\pmv{a}})}\gesrel \pmv{a}$ and
 $\setenum{(\emptyset,\emptyset)}\gesrel \pmv{b}$.
 The configurations are those shown in Figure~\ref{fig:mia-examples}(d).
 From the configuration $\setenum{\pmv{b}}$ it not is possible to add the event $\pmv{a}$ as 
 the entry for $\pmv{a}$ has as context $\setenum{\pmv{b}}$, the pair
 selected 
 is $(\setenum{\pmv{b}},\setenum{\pmv{a}})$ but at this configuration 
 the event $\pmv{a}$ is not present.
\end{example}

\begin{example}[Adding dependencies]\label{ex:grow-dep}
 Consider three events $\pmv{a}, \pmv{b}$ and $\pmv{c}$, and assume that $\pmv{c}$ depends on 
 $\pmv{a}$ just when the event $\pmv{b}$ has occurred, and in this case this dependency is added,
 otherwise $\pmv{c}$ may happen without depending on any other event.
 Thus the dependency
 between $\pmv{a}$ and $\pmv{c}$ is added if $\pmv{b}$ occurs. Again $\pmv{a}$ and
 $\pmv{b}$ are always enabled. The \cd-relation is 
 $\setenum{(\emptyset,\emptyset)}\gesrel \pmv{a}$, 
 $\setenum{(\emptyset,\emptyset)}\gesrel \pmv{b}$ and
 $\setenum{(\emptyset,\emptyset),(\setenum{\pmv{b}},\setenum{\pmv{a}})}\gesrel \pmv{c}$.
 The configurations are those displayed in Figure~\ref{fig:mia-examples}(e).
 At the configuration $\setenum{\pmv{b}}$ it not is possible to add the event $\pmv{c}$ as 
 the entry for $\pmv{c}$ has as context $\setenum{\pmv{b}}$, hence the element
 selected at the configuration $\setenum{\pmv{b}}$ 
 is $(\setenum{\pmv{b}},\setenum{\pmv{a}})$ and
 the event $\pmv{a}$ is not present as it should.
\end{example}

\begin{example}[Ternary conflict]\label{ex:new3conf}
 Consider three events $\pmv{a}, \pmv{b}$ and $\pmv{c}$. All the events are 
 singularly enabled and they are in a \emph{ternary} conflict, not a binary one.
 For the events $\pmv{a}$, $\pmv{b}$ and $\pmv{c}$ we stipulate 
 $\setenum{(\emptyset,\emptyset),(\setenum{\pmv{b},\pmv{c}},\setenum{\pmv{a}})}
 \gesrel \pmv{a}$
 $\setenum{(\emptyset,\emptyset),(\setenum{\pmv{a},\pmv{c}},\setenum{\pmv{b}})}
 \gesrel \pmv{b}$
 and
 $\setenum{(\emptyset,\emptyset),(\setenum{\pmv{a},\pmv{b}},\setenum{\pmv{c}})}
 \gesrel \pmv{c}$
 that should be interpreted as follows: if the context is $\emptyset$ then
 the event is enabled without any further condition (the $Y$ are the empty set), if the context
 is a subset of events with two elements then the event is not enabled.
 In Figure~\ref{fig:mia-examples}(f) the configurations of this \Ges\ and how they are related are
 depicted. 
\end{example}

\begin{remark}
 The way context and dependencies are represented in many case is not unique, as we will also see 
 in the remaining of the paper. 
 Consider for instance the example on the Or causality. The entry
 for $\pmv{c}$ is $\setenum{(\setenum{\pmv{a}},\emptyset),(\setenum{\pmv{b}},\emptyset)}\gesrel \pmv{c}$,
 but here the events on which $\pmv{c}$ depends are used as context. An alternative could have 
 been 
 $\setenum{(\setenum{\pmv{a}},\setenum{\pmv{a}}),(\setenum{\pmv{b}},\setenum{\pmv{b}})}\gesrel \pmv{c}$
 where the fact that $\pmv{c}$ depends  either on $\pmv{a}$ or $\pmv{b}$ is made more explicit.
\end{remark}

%
%
We end the presentation of this new kind of event structure adapting the notion of 
\emph{fullness} and \emph{faithfulness} of an event structure with a conflict relation 
(\cite{Bou:FES,AKPN:lmcs18}). Given a \Ges\  $\mathit{E}= (\event{E}, \#, \gesrel)$, and two events 
$\pmv{e}$ and $\pmv{e}'$, we say that that they are in \emph{semantic} conflict if
there is no configuration containing both, \emph{i.e.} $\forall C\in\Conf{\mathit{E}}{\Ges}$.\
$\setenum{\pmv{e},\pmv{e}'}\not\subseteq C$. This will denoted with
$\pmv{e}\sharphatbin\pmv{e}'$.

\begin{definition}\label{de:full-or-faithfull}
 Let $\mathit{E} = (\event{E}, \#, \gesrel)$  be a \Ges.
 Then 
 \begin{enumerate}
   \item $\mathit{E}$ is said to be \emph{full} if $\forall \pmv{e}\in\event{E}$. 
         $\exists C\in\Conf{\mathit{E}}{\Ges}$ such that $\pmv{e}\in C$, and
   \item $\mathit{E}$ is said to be \emph{faithful} if $\forall \pmv{e},\pmv{e}\in\event{E}$. 
         if $\pmv{e}\sharphatbin\pmv{e}'$ then $\pmv{e}\sharpbin\pmv{e}'$.   
 \end{enumerate}           
\end{definition}
Thus fullness means that each event is possible, namely it will executed, whereas
faithfulness implies that each symmetric conflict among two events is represented by the
conflict relation.
\begin{proposition}
 There exists a \Ges\ $\mathit{E}$ that is not full or faithful.
\end{proposition}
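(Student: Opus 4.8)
The statement is an existence claim, so the plan is simply to exhibit explicit small \Ges\ counterexamples; no general machinery is required. Reading the proposition as the disjunction it is, I would give one witness violating fullness and one violating faithfulness (either could in principle be a single structure failing both, but separating them makes the argument cleaner, and the faithfulness witness below is even full, which sharpens the point). The only things to check for each candidate are that it really is a \Ges\ in the sense of Definition~\ref{de:mia-es} (each $\pmv{Z}$ nonempty, every $X$ and $Y$ conflict free, distinct modifiers within an entry) and that its configurations are exactly what I claim, via Definition~\ref{de:mia-es-enabling} and Definition~\ref{de:mia-es-event-traces}.

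For the failure of fullness I would take a single event and force it to depend on itself. Concretely, let $\event{E}=\setenum{\pmv{e}}$, $\mathord{\#}=\emptyset$, and the single entry $\setenum{(\emptyset,\setenum{\pmv{e}})}\gesrel \pmv{e}$. This is a well-formed \Ges: the entry is nonempty, $\CF{\emptyset}$ and $\CF{\setenum{\pmv{e}}}$ hold (the latter because $\#$ is irreflexive), and there is only one element so the distinct-modifiers condition is vacuous. The key observation is that enabling of $\pmv{e}$ at some $C$ requires both $\pmv{e}\notin C$ and, through the only available element, $\setenum{\pmv{e}}\subseteq C$, which is contradictory. Hence $\pmv{e}$ is never enabled, so by Definition~\ref{de:mia-es-event-traces} the only configuration is $\emptyset$, and no configuration contains $\pmv{e}$. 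Thus $\mathit{E}$ is not full.

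For the failure of faithfulness I would encode a symmetric conflict purely through dependencies, without ever using $\#$, by symmetrising the asymmetric conflict of Example~\ref{ex:as-conf}. Take $\event{E}=\setenum{\pmv{a},\pmv{b}}$, $\mathord{\#}=\emptyset$, with entries $\setenum{(\emptyset,\emptyset),(\setenum{\pmv{b}},\setenum{\pmv{a}})}\gesrel \pmv{a}$ and $\setenum{(\emptyset,\emptyset),(\setenum{\pmv{a}},\setenum{\pmv{b}})}\gesrel \pmv{b}$. I would then compute the configurations: both $\pmv{a}$ and $\pmv{b}$ are enabled at $\emptyset$ (the element $(\emptyset,\emptyset)$ is selected since the context intersected with $\emptyset$ is $\emptyset$), but from $\setenum{\pmv{a}}$ the event $\pmv{b}$ is not enabled — its context is $\setenum{\pmv{a}}$, so $\ctx\cap\setenum{\pmv{a}}=\setenum{\pmv{a}}$ forces the element $(\setenum{\pmv{a}},\setenum{\pmv{b}})$, whose dependency $\setenum{\pmv{b}}$ is not contained in $\setenum{\pmv{a}}$ — and symmetrically $\pmv{a}$ is not enabled at $\setenum{\pmv{b}}$. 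Hence $\Conf{\mathit{E}}{\Ges}=\setenum{\emptyset,\setenum{\pmv{a}},\setenum{\pmv{b}}}$, so no configuration contains both events, giving $\pmv{a}\sharphatbin\pmv{b}$, while $\neg(\pmv{a}\sharpbin\pmv{b})$ because $\#$ is empty. Thus $\mathit{E}$ is not faithful (and is moreover full, since each event occurs in some configuration).

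The step I expect to carry the real content — though it is still only a short verification — is the enabling computation in the second example: one must check that the selection rule $\ctx(\pmv{Z}\gesrel\pmv{e})\cap C = X_i$ genuinely picks the \emph{second} element once the other event is present, and that its dependency set then fails to be included. The well-formedness checks and the fullness claim are routine. The conceptual point underlying both witnesses, which I would state explicitly, is that self-dependence and mutual dependence let the \cd-relation suppress reachability and induce semantic conflict without any recourse to $\#$, which is exactly what breaks fullness and faithfulness respectively.
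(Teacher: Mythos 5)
Your proof is correct, and like the paper's it works by exhibiting small explicit witnesses; the witnesses themselves, however, differ in an instructive way. For fullness the paper takes $\mathit{E} = (\setenum{\pmv{e}}, \emptyset, \emptyset)$, i.e.\ an event with \emph{no} entry in the \cd-relation, and asserts that $\emptyset$ is the only configuration; under a literal reading of Definition~\ref{de:mia-es-enabling} the quantification ``for each entry'' is then vacuous, so one could argue $\pmv{e}$ is enabled at $\emptyset$ after all. Your witness $\setenum{(\emptyset,\setenum{\pmv{e}})}\gesrel\pmv{e}$ sidesteps this entirely: the only element forces $\setenum{\pmv{e}}\subseteq C$ while enabling requires $\pmv{e}\notin C$, so non-enabledness is unambiguous (and it uses exactly the self-dependency device the paper itself advertises in the discussion after Definition~\ref{de:mia-es-enabling}). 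For faithfulness the paper keeps a nonempty syntactic conflict $\pmv{e}\sharpbin\pmv{e}'$ and lets the semantic conflict $\pmv{e}'\sharphatbin\pmv{e}''$ arise by inheritance through the dependency $\setenum{(\emptyset,\setenum{\pmv{e}})}\gesrel\pmv{e}''$, whereas you take $\mathord{\#}=\emptyset$ and manufacture the semantic conflict purely through the \cd-relation (the symmetrised version of Example~\ref{ex:as-conf}); both computations check out, yours additionally yielding a full structure. The paper's faithfulness example illustrates that semantic conflicts propagate along dependencies even when $\#$ is ``honest''; yours makes the sharper point that the \cd-relation alone can induce conflicts invisible to $\#$. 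Either is acceptable; there is no gap.
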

\begin{proof}
 For fullness consider the \Ges\ 
 $\mathit{E} = (\setenum{\pmv{e}}, \emptyset, \emptyset)$,
 The unique configuration is $\emptyset$. 
 For faithfulness consider 
 $\mathit{E} = (\setenum{\pmv{e},\pmv{e'},\pmv{e}''}, \pmv{e}\sharpbin\pmv{e'}, \setenum{\setenum{(\emptyset,\emptyset)}\gesrel \pmv{e}, \setenum{(\emptyset,\emptyset)}\gesrel \pmv{e}', 
 \setenum{(\emptyset,\setenum{\pmv{e}})}\gesrel \pmv{e}''})$. Clearly $\pmv{e}'$ and $\pmv{e}''$ are in
 conflict but this does not appear in the conflict relation.
\end{proof}
The notions of fullness can be adapted to each kind of event structure, the one of faithfulness just for
those where a conflict relation is defined.


\section{Event structures}\label{sec:es}
We have introduced a new notion of event structure that we should confront
with the others presented in literature (or at least some of them). 
Therefore we review some of the various definitions of event structures that appeared in literature. 

\subsection{Prime event structures:}\ 
Prime event structures are one among the first proposed and the most widely
studied (\cite{Win:ES}), especially for the connections with \emph{prime algebraic domains} and 
\emph{occurrence nets}.
The dependencies among events are modeled using a \emph{partial order} relation, the 
incompatibility among events is modeled using a symmetric and irreflexive relation, 
the conflict relation, and
it is required that the conflict relation is inherited along the partial order. 
\begin{definition}\label{de:pes-winskel}
 A 
 \emph{prime event structure ({\pes})} is a triple $\mathit{P} = (\event{E}, \leq, \#)$, where 
     \begin{itemize}
       \item 
        $\event{E}$ is a set of \emph{events},
       \item
        $\leq\ \subseteq \event{E}\times \event{E}$ is a 
        well founded \emph{partial order} called \emph{causality relation},
       \item 
        $\forall \pmv{e}\in \event{E}$ the set $\hist{e} = \setcomp{e'}{e'\leq e}$ is 
        finite, and    
       \item
        $\mathord{\#} \subseteq \event{E}\times \event{E}$ is an irreflexive and symmetric relation, called 
        \emph{conflict relation}, such that 
         \begin{itemize}
           \item 
             $\pmv{e}\sharpbin\pmv{e}'\leq \pmv{e}''\ \Rightarrow\ \pmv{e}\sharpbin\pmv{e}''$, and
           \item 
            $\leq \cap\ \# = \emptyset$. 
         \end{itemize}
      \end{itemize}       
\end{definition}
Observe that the finiteness of $\hist{e}$ for each event $\pmv{e}\in \event{E}$ implies 
well foundedness of the partial order relation. 
\begin{definition}\label{de:conf-pes}
 Let $\mathit{P} = (\event{E}, \leq, \#)$ be a \pes\ and $C\subseteq \event{E}$ be a subset of
 events. $C$ is a \emph{configuration} of $\mathit{P}$ iff $\CF{C}$ 
 and for each $\pmv{e}\in C$ it holds that $\hist{\pmv{e}}\subseteq C$. 
\end{definition}
The set of configuration of a \pes\ is denoted with 
$\Conf{\mathit{P}}{\pes}$. 
Clearly $(\Conf{\mathit{P}}{\pes}, \subseteq)$ is a partial order. 

\begin{definition}\label{de:mapsto-pes}
Let $\mathit{P} = (\event{E}, \leq, \#)$ be a \pes.
With $\mapsto_{\pes}$ we denote the relation 
over $\Conf{\mathit{P}}{\pes} \times \Conf{\mathit{P}}{\pes}$ defined as $C \mapsto_{\pes} C'$ iff 
$C\subset C'$ and $C' = C\cup\setenum{\pmv{e}}$
for some $\pmv{e}\in \event{E}$. 
\end{definition}

\subsection{Flow event structure:}\
In \emph{flow} event structures, introduced in \cite{Bou:FES}, the dependency and the 
conflict relations have little constraints.

\begin{definition}\label{de:flow-es}
 A \emph{flow} prime event structure (\emph{{\fes}}) 
 is a triple 
$\mathit{F} = (\event{E},\prec, \#)$, where 
      \begin{itemize}
       \item 
        $\event{E}$ is a set of \emph{events},
       \item
        $\prec\ \subseteq \event{E}\times \event{E}$ is an irreflexive relation called 
        the  \emph{flow relation}, and 
       \item
        $\#\subseteq \event{E}\times \event{E}$ is a symmetric \emph{conflict relation}.
      \end{itemize}  
\end{definition}
The flow relation is the one representing the dependencies among events, 
whereas the conflict relation is required
to be just symmetric, which means that an event $\pmv{e}$ could be in conflict with itself, meaning
that it cannot be executed. 
The dependencies are not so \emph{prescriptive} as in the case of \pes, as we will see in the 
definition of configuration.
Indeed, the notion of configuration plays a major role, as it is on this level that the conditions 
on the relations are placed.

\begin{definition}\label{de:conf-fes}
 Let $\mathit{F} = (\event{E}, \prec, \#)$ be a \fes\ and $C\subseteq \event{E}$ be a subset of
 events. $C$ is a \emph{configuration} of $\mathit{F}$ iff 
  \begin{itemize}
    \item $\CF{C}$, 
    \item for each $\pmv{e}\in C$, for each $\pmv{e}'\prec \pmv{e}$ either
          $\pmv{e}'\in C$ or there exists $\pmv{e}''\in C$ such that 
          $\pmv{e}''\prec \pmv{e}$ and $\pmv{e}'\sharpbin\pmv{e}''$, and
    \item $\prec^{\ast}\cap\ C\times C$ is a partial order.
  \end{itemize}           
\end{definition}
The intuition is that, for each event $\pmv{e}$ in the configuration, if an event 
which should be present in the configuration because it is \emph{suggested} by the flow
relation is not present then there is another event suggested by the flow relation
which conflicts with the absent one.
The set of configurations of a \fes\ $\mathit{F}$ is denoted with 
$\Conf{\mathit{F}}{\fes}$.
Again in \cite{Bou:FES} it is shown that 
$(\Conf{\mathit{F}}{\fes}, \subseteq)$ is not only a partial order but also a prime algebraic domain,
which establish a clear connection between \fes\ and \pes.

Also on these configurations we can state how to reach a configuration from another just adding an event.
\begin{definition}\label{de:mapsto-fes}
Let $\mathit{F} = (\event{E}, \prec, \#)$ be a \fes.
With $\mapsto_{\fes}$ we denote the relation 
over $\Conf{\mathit{F}}{\fes} \times \Conf{\mathit{F}}{\fes}$ defined as $C \mapsto_{\fes} C'$ iff 
$C\subset C'$ and $C' = C\cup\setenum{\pmv{e}}$
for some $\pmv{e}\in \event{E}$.
\end{definition} 

\subsection{Relaxed prime event structures:}\  
Some of the requirements of a 
\pes, the one on the dependencies among events (here called enabling) and the one regarding the 
conflicts among events
(which does not need to be saturated), can be relaxed yielding a \emph{relaxed} prime event structure
 (\cite{AKPN:DC,AKPN:lmcs18}). 
 This notion is introduced to allow a clear notion of dynamic causality and the idea is to focus on the 
 dependencies and conflicts that are somehow the \emph{generators} of the causal dependency
 relation and of the conflict relation of a \pes.
 In this definition the events that must be \emph{present} in a state to allow the execution of
 another one are the events in a (finite) subset called \emph{immediate causes} and often denoted
 with $\ic$. 
\begin{definition}\label{de:rpes}
A \emph{relaxed} prime event structure (\emph{{r\pes}}) 
 is a triple 
$(\event{E},\to, \#)$, where 
      \begin{itemize}
       \item 
        $\event{E}$ is a set of \emph{events},
       \item
        $\to\ \subseteq \event{E}\times \event{E}$ is the  \emph{enabling relation} such that
        $\forall \pmv{e}\in\event{E}$ the set $\ic(\pmv{e}) = \setcomp{\pmv{e'}}{\pmv{e'}\to\pmv{e}}$
        is finite, 
       \item 
         for every $\pmv{e}\in \event{E}$. $\to^{+} \cap \ic(\pmv{e}) \times \ic(\pmv{e})$ 
         is irreflexive, and 
       \item
        $\#\subseteq \event{E}\times \event{E}$ is an irreflexive and symmetric \emph{conflict relation}.
      \end{itemize}  
\end{definition}
The intuition is that the $\to$ relation plays the role of the causality relation and the 
conflict relation models conflicts among events, as before.
The immediate causes can be seen as a mapping $\ic\colon \event{E} \to \Powfin{\event{E}}$.

The above definition of relaxed prime event structure is slightly different from the one in
\cite{AKPN:DC} and \cite{AKPN:lmcs18}. There $\to$ is a relation without any further requirement
as it is done here. The requirement that $\to^{\ast} \cap \ic(\pmv{e}) \times \ic(\pmv{e})$ 
is there moved to the 
definition of configuration whereas the one that an event should have finite causes is here introduced
in the definition as we consider configurations without further constraint.
\begin{definition}\label{de:rpes-conf}
Let $\mathit{T} = (\event{E},\to, \#)$ be a r\pes\ and let $C\subseteq \event{E}$ be a subset of events. 
We say that $C$ is a \emph{configuration} of the r\pes\ $\mathit{T}$ iff  
\begin{itemize}
 \item
   $\CF{C}$, and 
 \item 
   for each $\pmv{e}\in C$. $\ic(\pmv{e}) \subseteq C$. 
\end{itemize} 
\end{definition}
The set of configuration of a r\pes\ is denoted with 
$\Conf{\mathit{T}}{r\pes}$.
We prove that this definition of  configuration implies that there is no causality cycle.
\begin{proposition}\label{pr:conf-rpes-are-po}
Let $\mathit{T} = (\event{E},\to, \#)$ be a r\pes\ and consider $C \in \Conf{\mathit{T}}{r\pes}$.
Then $\to^{\ast}\cap\ C\times C$ is a partial order.
\end{proposition}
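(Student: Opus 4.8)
The plan is to verify the three defining properties of a partial order for the relation $R = \to^{\ast}\cap\ C\times C$: reflexivity, transitivity and antisymmetry. The first two come essentially for free from the fact that $\to^{\ast}$ is the reflexive and transitive closure of $\to$. For $\pmv{e}\in C$ we have $\pmv{e}\to^{\ast}\pmv{e}$ and $(\pmv{e},\pmv{e})\in C\times C$, giving reflexivity; and if $\pmv{e}\to^{\ast}\pmv{e}'$ and $\pmv{e}'\to^{\ast}\pmv{e}''$ with all three events in $C$, then $\pmv{e}\to^{\ast}\pmv{e}''$ and $(\pmv{e},\pmv{e}'')\in C\times C$, giving transitivity. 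So the whole content of the statement is antisymmetry, which is exactly the assertion that there is no non-trivial $\to$-cycle among the events of $C$.

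To establish antisymmetry I would argue by contradiction. Suppose $\pmv{e},\pmv{e}'\in C$ with $\pmv{e}\neq\pmv{e}'$, $\pmv{e}\to^{\ast}\pmv{e}'$ and $\pmv{e}'\to^{\ast}\pmv{e}$; then $\pmv{e}\to^{+}\pmv{e}$, so there is a path $\pmv{e}=\pmv{a}_0\to\pmv{a}_1\to\cdots\to\pmv{a}_n=\pmv{e}$ with $n\geq 1$. The key observation is that such a cycle already violates the r\pes\ axiom: from the first step $\pmv{a}_0\to\pmv{a}_1$ we obtain $\pmv{a}_0\in\ic(\pmv{a}_1)$, while traversing the whole path yields $\pmv{a}_0\to^{+}\pmv{a}_0$; hence $(\pmv{a}_0,\pmv{a}_0)\in\to^{+}\cap\ \ic(\pmv{a}_1)\times\ic(\pmv{a}_1)$, contradicting the requirement that this relation be irreflexive, instantiated at $\pmv{a}_1\in\event{E}$. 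If one prefers to make the role of the configuration explicit, I would first note that $C$ is downward closed for $\to$ — if $\pmv{d}\to\pmv{e}$ and $\pmv{e}\in C$ then $\pmv{d}\in\ic(\pmv{e})\subseteq C$ — so that any cycle meeting $C$ lies entirely inside $C$, and the same contradiction applies verbatim. In fact the argument establishes the stronger fact that $\to^{+}$ is irreflexive on all of $\event{E}$, from which antisymmetry of $\to^{\ast}$, and hence of its restriction $R$, is immediate.

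The only delicate point, and so the main obstacle, is the bookkeeping in the antisymmetry case: one must spot that it is the vertex $\pmv{a}_1$ immediately following the start of the cycle whose set of immediate causes $\ic(\pmv{a}_1)$ witnesses the self-loop $\pmv{a}_0\to^{+}\pmv{a}_0$, so that the irreflexivity axiom can actually be invoked. Everything else — reflexivity, transitivity, and the reduction of antisymmetry to the non-existence of a cycle — is routine.
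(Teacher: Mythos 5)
Your proof is correct and follows essentially the same route as the paper's: assume a $\to$-cycle among events of $C$ and contradict the r\pes\ axiom that $\to^{+}\cap\ \ic(\pmv{e})\times\ic(\pmv{e})$ is irreflexive. Your write-up is in fact the more careful of the two, since you pin down the witness $\pmv{a}_1$ whose set of immediate causes exhibits the reflexive pair $(\pmv{a}_0,\pmv{a}_0)$, whereas the paper's two-line argument leaves this step implicit.
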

\begin{proof}
Assume it is not the case. Then, for $\pmv{e}, \pmv{e}'\in C$, we have  
$\pmv{e} \to^{\ast} \pmv{e}'$ and $\pmv{e}' \to^{\ast} \pmv{e}$ with $\pmv{e} \neq \pmv{e}'$. 
But on each $\pmv{e}\in C$ we have that $\to^{+}$ is irreflexive on $\ic(\pmv{e})$
which contradicts the fact that $\to^{\ast}$ is a partial order on the configuration $C$.
\end{proof}

Similarly to \pes\ and \fes, we have that also 
$(\Conf{\mathit{T}}{r\pes}, \subseteq)$ is a partial order. 

As it should be expected, \pes\ and r\pes\ are related.
A \pes\ is also a r\pes: the causality relation is the enabling relation and the conflict relation
is the same one. $\pmv{e}$ is added to a configuration $C$ when its causes are
in $C$ and no conflict arises. 
For the vice versa, given a full r\pes\ $\mathit{T} = (\event{E},\to, \#)$, it is not difficult to see
that $(\event{E},\to^{\ast}, \hat{\#})$ is a \pes, where $\to^{\ast}$ is the reflexive and 
transitive closure
of $\to$ and $\hat{\#}$ is obtained by $\#$ stipulating that $\# \subseteq \hat{\#}$ and 
it is closed with respect to $\to^{\ast}$, \emph{i.e.} if 
$\pmv{e}\sharphatbin\pmv{e}' \to^{\ast} \pmv{e}''$
then $\pmv{e}\sharphatbin\pmv{e}''$. 
Indeed, the fact that $\to^{\ast}$ is a partial order is guaranteed by the fact that each 
event is executable, that $\to^{\ast}$ is well founded is implied by the finiteness of causes for each 
event $\pmv{e}\in\event{E}$ and 
$\hat{\#}$ is the semantic closure of $\#$: no new conflict is introduced. 
In case the r\pes\ is not full, we have to focus on the events appearing in a configuration. 

\begin{definition}\label{de:mapsto-rpes}
Let $\mathit{T} = (\event{E},\to, \#)$ be a r\pes.
With $\mapsto_{r\pes}$ we denote the relation 
over $\Conf{\mathit{T}}{r\pes} \times \Conf{\mathit{T}}{r\pes}$ defined as $C \mapsto_{r\pes} C'$ iff 
$C\subset C'$ and $C' = C\cup\setenum{\pmv{e}}$
for some $\pmv{e}\in \event{E}$. 
\end{definition}

\subsection{Dynamic causality event structures:}\ 
We now review a notion of event structure where causality may change (\cite{AKPN:DC,AKPN:lmcs18}). 
The idea is to enrich a r\pes\ with two relations, one modeling the 
shrinking causality (some dependencies are dropped) and the other the growing causality (some
dependencies are added).
The shrinking and the growing causality relations are ternary relations stipulating that the
happening of a specific event (the \emph{modifier}) allows to drop or add a specific cause 
(the contribution) for another event (the \emph{target}).

\begin{definition}\label{de:shr-rel}
 Let $\mathit{T} = (\event{E},\to, \#)$ be a r\pes. 
 A \emph{shrinking causality} relation is a ternary relation
 $\shrinkt\ \subseteq \event{E}\times\event{E}\times\event{E}$, whose elements 
 are denoted with $\shrink{\pmv{e}}{\pmv{e}''}{\pmv{e}'}$, such that
 \begin{itemize}
  \item $\setenum{\pmv{e}'} \cap \setenum{\pmv{e},\pmv{e}''} = \emptyset$, and
  \item $\pmv{e}\to \pmv{e}''$.
 \end{itemize}
 We say that $\shrinkt$ is a shrinking relation with respect to the enabling relation $\to$.
\end{definition}
Given $\shrink{\pmv{e}}{\pmv{e}''}{\pmv{e}'}$, 
$\pmv{e}'$ is called \emph{modifier}, $\pmv{e}''$ \emph{target} and
$\pmv{e}$ \emph{contribution}.
To drop an enabling we require that the enabling is present.

Associated to this relation we introduce a number of auxiliary subsets of events.
\begin{definition}\label{de:aux-shr-set}
Let $\shrinkt\ \subseteq \event{E}\times\event{E}\times\event{E}$ be a shrinking causality
relation with respect to an enabling relation $\to$. Then
\begin{enumerate}
  \item $\shrset{\pmv{e}''} = \setcomp{\pmv{e}'}{\shrink{\pmv{e}}{\pmv{e}''}{\pmv{e}'}}$ 
        is the set of modifiers for a given target $\pmv{e}''$, 
 \item $\drop{\pmv{e}'}{\pmv{e}''} =  \setcomp{\pmv{e}}{\shrink{\pmv{e}}{\pmv{e}''}{\pmv{e}'}}$
       is the set of contributions for a given modifier $\pmv{e}'$ and a given 
       target $\pmv{e}''$, and 
 \item $\dc(H,\pmv{e}'') = \bigcup_{\pmv{e'}\in H\cap\shrset{\pmv{e}}} \drop{\pmv{e}'}{\pmv{e}''}$
       is the set of \emph{dropped} causes, with respect to a subset of events $H\subseteq\event{E}$, 
       for the event $\pmv{e}$.
\end{enumerate}
\end{definition}
$\shrset{\pmv{e}''}$ is the set of all possible modifiers associated to the event 
$\pmv{e}''$, thus it contains the set of events that may change the dependencies for it,
and for each of them the set
$\drop{\pmv{e}'}{\pmv{e}''}$ contains the \emph{dropped} causes. Finally,
for a given event $\pmv{e}''$, the set $\dc(H,\pmv{e})$ contains the dropped causes
provided that the events in $H$ are already been executed.

\begin{definition}\label{de:gro-rel}
 Let $\mathit{T} = (\event{E},\to, \#)$ be a r\pes. 
 A  \emph{growing causality} relation is a ternary relation 
 $\growt\ \subseteq \event{E}\times\event{E}\times\event{E}$, whose elements 
 are denoted as $\grow{\pmv{e}'}{\pmv{e}}{\pmv{e}''}$, such that
  \begin{itemize}
  \item $\setenum{\pmv{e}'} \cap \setenum{\pmv{e},\pmv{e}''} = \emptyset$, and
  \item $\neg(\pmv{e}\to \pmv{e}'')$.
 \end{itemize}
 We say that $\growt$ is a growing relation with respect to the enabling relation $\to$.
\end{definition}
Given $\grow{\pmv{e}'}{\pmv{e}}{\pmv{e}''}$, 
$\pmv{e}'$ is called \emph{modifier}, $\pmv{e}''$ \emph{target} and
$\pmv{e}$ \emph{contribution}. To add a dependency among two events this dependency must be absent.

\begin{definition}\label{de:aux-grow-set}
Let $\growt\ \subseteq \event{E}\times\event{E}\times\event{E}$ be a growing causality
relation with respect to an enabling relation $\to$. Then
\begin{enumerate}
  \item $\groset{\pmv{e}''} = \setcomp{\pmv{e}'}{\grow{\pmv{e}'}{\pmv{e}}{\pmv{e}''}}$ is the set
        of modifiers for a given target $\pmv{e}''$,
  \item $\agg{\pmv{e}'}{\pmv{e}''} =  \setcomp{\pmv{e}}{\grow{\pmv{e}'}{\pmv{e}}{\pmv{e}''}}$
        is the set of contributions for a given modifier $\pmv{e}'$ and a given target 
        $\pmv{e}''$, and 
  \item $\ac(H,\pmv{e}) = \bigcup_{\pmv{e'}\in H\cap\groset{\pmv{e}}}  \agg{\pmv{e}'}{\pmv{e}}$
        is the set of \emph{added} causes, with respect to a subset of events $H\subseteq \event{E}$, 
        for the event $\pmv{e}$.
\end{enumerate}
\end{definition}

The two relations of shrinking and growing causality give the 
functions $\dc\colon \Powfin{\event{E}}\times \event{E} \to \Powfin{\event{E}}$. 
and $\ac\colon \Powfin{\event{E}}\times \event{E} \to \Powfin{\event{E}}$. 

\begin{definition}
\label{de:dces}
A \emph{dynamic causality event structure ({\dces})} is a quintuple
$\mathit{D} = (\event{E}, \to, \#, \shrinkt, \growt)$, where $(\event{E}, \to, \#)$ is a r\pes,
$\shrinkt\ \subseteq \event{E}\times\event{E}\times\event{E}$ is the \emph{shrinking causality} relation,
$\growt\ \subseteq \event{E}\times\event{E}\times\event{E}$ is the \emph{growing causality} relation, and
are such that for all $\pmv{e}, \pmv{e}', \pmv{e}'' \in \event{E}$
\begin{enumerate}
\item\label{dces-cond-1} 
  $\shrink{\pmv{e}}{\pmv{e}''}{\pmv{e}'} \wedge \nexists \pmv{e}''' \in
  \event{E}.\ \grow{\pmv{e}'''}{\pmv{e}}{\pmv{e}''}\Longrightarrow \pmv{e}\rightarrow \pmv{e}''$,
\item\label{dces-cond-2}  
  $\grow{\pmv{e}'}{\pmv{e}}{\pmv{e}''} \wedge \nexists \pmv{e}''' \in \event{E}. 
  \shrink{\pmv{e}}{\pmv{e}''}{\pmv{e}'''} \Longrightarrow \neg(\pmv{e}\rightarrow \pmv{e}'')$,
\item\label{dces-cond-3}  
  $\grow{\pmv{e}'}{\pmv{e}}{\pmv{e}''} \Longrightarrow \neg(\shrink{\pmv{e}}{\pmv{e}''}{\pmv{e}'})$, and
\item\label{dces-order}
  $\forall \pmv{e}, \pmv{e}'\in \event{E}.\ \nexists \pmv{e}'', \pmv{e}'''\in \event{E}.\ 
  \shrink{\pmv{e}}{\pmv{e}'}{\pmv{e}''}$ and $\grow{\pmv{e}'''}{\pmv{e}}{\pmv{e}'}$.
\end{enumerate}
\end{definition}
For further comments on this definition we refer to \cite{AKPN:DC} and \cite{AKPN:lmcs18}.
It should be observed, however, that
the definition we consider here is slightly 
less general of the one
presented there, as we add a further condition, the last one, 
which is defined in \cite{AKPN:DC-rep} and \cite{AKPN:lmcs18}, and  
does not allow that the same contribution can be added and removed by two different modifiers. 
These are called in \cite{AKPN:DC-rep} \emph{single state dynamic causality event structures} and rule
out the fact that some causality (or absence of) depends on the order of modifiers.
Conditions~1  and~2 simply state that in the case of the shrinking relation the dependency should 
be present, and in the case of the
growing the dependency should be absent; 
condition~3 says that if a dependency is added then it cannot be removed, or
a removed dependency cannot be added, and
the final condition express the fact that two modifiers, one growing and
the other shrinking,  cannot act on the same dependency.
Clearly a \dces\ where $\shrinkt$ and $\growt$ are empty is a r\pes.

\begin{definition}\label{de:dces-conf}
Let $\mathit{D} = (\event{E}, \to, \#, \shrinkt, \growt)$ be a \dces. 
Let $C$ be a subset of $\event{E}$.
We say that $C$ is a \emph{configuration} of the \dces\ 
iff there exists a 
sequence of distinct events 
$\rho = \pmv{e}_1\cdots \pmv{e}_n\cdots$ over $\event{E}$ 
such that  
  \begin{itemize}
      \item
        $\overline{\rho} = C$,
     \item 
        $\CF{\overline{\rho}}$, and 
    \item
       $\forall 1 \leq i \leq \len{\rho}.\ 
       ((\ic(\pmv{e}_i) \cup \ac(\overline{\rho_{i-1}}, \pmv{e}_i)) \setminus 
       \dc(\overline{\rho_{i-1}}, \pmv{e}_i))
       \subseteq \overline{\rho_{i-1}}$.
  \end{itemize}     
The set of configuration of a \dces\ is denoted with 
$\Conf{\mathit{D}}{\dces}$.
\end{definition}
To add an event to a subset of events we have to check that the added causes are
present ($\dc(\overline{\rho_{i-1}}, \pmv{e}_i)$) whereas the dropped causes
($\ac(\overline{\rho_{i-1}}, \pmv{e}_i)$) can be ignored, \emph{i.e.} they do not have to
be present.
 
The following definition introduce the obvious relation between configurations of a \dces.
\begin{definition}\label{de:mapsto-dces}
Let $\mathit{D} = (\event{E}, \to, \#, \shrinkt, \growt)$ be a \dces.
With $\mapsto_{\dces}$ we denote the relation 
over $\Conf{\mathit{D}}{\dces} \times \Conf{\mathit{D}}{\dces}$ defined as $C \mapsto_{\dces} C'$ iff 
$C\subset C'$, $C' = C\cup\setenum{\pmv{e}}$ for some $\pmv{e}\in \event{E}$ and
$((\ic(\pmv{e}) \cup \ac(C, \pmv{e})) \setminus \dc(C, \pmv{e}))\subseteq C$.
\end{definition}

 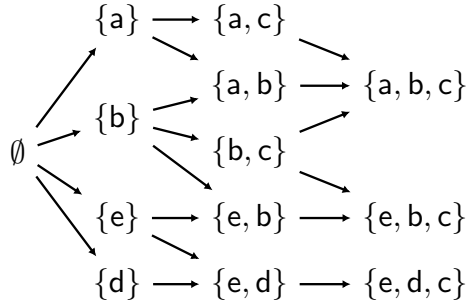
\begin{figure}[!ht]
 \centerline{\scalebox{1.1}{\begin{tikzpicture}
[bend angle=45, scale=0.8, 
  read/.style={-,shorten
    <=0pt,thick},
  pre/.style={<-,shorten
    <=0pt,>=stealth,>={Latex[width=1mm,length=1mm]},thick}, post/.style={->,shorten
    >=0,>=stealth,>={Latex[width=1mm,length=1mm]},thick},place/.style={circle, draw=black,
    thick,minimum size=4mm}, transition/.style={rectangle, draw=black!0,
    thick,minimum size=4mm}, invplace/.style={circle,
    draw=black!0,thick}]
\node[transition] (spe) at (0,2.5) {};  
\node[transition] (in) at (0,0) {$\emptyset$};
\node[transition] (a) at (1.5,2) {$\setenum{\pmv{a}}$}
edge[pre] (in);
\node[transition] (e) at (1.5,-1) {$\setenum{\pmv{e}}$}
edge[pre] (in);
\node[transition] (b) at (1.5,0.5) {$\setenum{\pmv{b}}$}
edge[pre] (in);
\node[transition] (d) at (1.5,-2) {$\setenum{\pmv{d}}$}
edge[pre] (in);
\node[transition] (ac) at (3.5,2) {$\setenum{\pmv{a},\pmv{c}}$}
edge[pre] (a);
\node[transition] (ab) at (3.5,1) {$\setenum{\pmv{a},\pmv{b}}$}
edge[pre] (a)
edge[pre] (b);
\node[transition] (bc) at (3.5,0) {$\setenum{\pmv{b},\pmv{c}}$}
edge[pre] (b);
\node[transition] (eb) at (3.5,-1) {$\setenum{\pmv{e},\pmv{b}}$}
edge[pre] (b)
edge[pre] (e);
\node[transition] (ed) at (3.5,-2) {$\setenum{\pmv{e},\pmv{d}}$}
edge[pre] (d)
edge[pre] (e);
\node[transition] (abc) at (6,1) {$\setenum{\pmv{a},\pmv{b},\pmv{c}}$}
edge[pre] (ab)
edge[pre] (ac)
edge[pre] (bc);
\node[transition] (ebc) at (6,-1) {$\setenum{\pmv{e},\pmv{b},\pmv{c}}$}
edge[pre] (eb)
edge[pre] (bc);
\node[transition] (edc) at (6,-2) {$\setenum{\pmv{e},\pmv{d},\pmv{c}}$}
edge[pre] (ed);
\end{tikzpicture}}}
 \caption{The configurations of the \dces\ of the Example~\ref{ex:dces}}\label{fig:dces}
 \end{figure}
 
 \begin{example}\label{ex:dces}
 Consider the set of events $\setenum{\pmv{a}, \pmv{b}, \pmv{c}, \pmv{d}, \pmv{e}}$,
 with $\pmv{b}\to \pmv{c}$,  $\shrink{\pmv{b}}{\pmv{c}}{\pmv{a}}$,
 $\grow{\pmv{d}}{\pmv{e}}{\pmv{c}}$, $\pmv{a}\sharpbin\pmv{e}$ and
 $\pmv{d}\sharpbin\pmv{b}$. $\pmv{a}$ and $\pmv{d}$  are the modifiers
 for the target $\pmv{c}$, the happening of $\pmv{a}$ has the effect that the cause $\pmv{b}$ may be dropped,
 and the one of $\pmv{d}$ that the cause $\pmv{e}$ should be added for $\pmv{c}$. 
 If the prefix of the trace is $\pmv{b}\pmv{c}$ (the target $\pmv{c}$ is executed
 before of one of its modifiers $\pmv{a}$ and $\pmv{d}$) then the final part of the trace
 is any either $\pmv{a}$ or $\pmv{e}$, and as $\pmv{d}\sharpbin\pmv{b}$ we have that $\pmv{d}$ cannot be added. 
 If the modifier $\pmv{a}$ is executed before $\pmv{c}$ then we have the traces
 $\pmv{a}\pmv{c}$ (as the immediate cause $\pmv{b}$ of $\pmv{c}$ is dropped by 
 $\pmv{a}$) followed by $\pmv{b}$ or $\pmv{d}$, 
 and if the modifier $\pmv{d}$ is executed, then before adding $\pmv{c}$, we need
 $\pmv{e}$ (the modifier $\pmv{d}$ add the immediate cause $\pmv{e}$ for $\pmv{c}$), 
 and in this case we cannot add $\pmv{b}$ for sure as it is in conflict with $\pmv{d}$
 or $\pmv{a}$ as it is in conflict with $\pmv{e}$.
 If both modifiers $\pmv{a}$ and $\pmv{d}$ happen, 
 then the event $\pmv{c}$ is permanently disabled, as it needs the
 contribution $\pmv{e}$ (growing cause) which is in conflict with $\pmv{a}$.
 In Figure~\ref{fig:dces} are shown the configurations of this \dces\ and the 
 $\mapsto_{\dces}$ relation.
\end{example}
 A shrinking event structure (\ses) is a \dces\ where the $\growt$ relation is empty and a growing
 event structure (\ges) is a \dces\ where the $\shrinkt$ relation is empty. 
 
 In \cite{AKPN:lmcs18} it is shown that many kind of event structures can be seen as 
 \dces, like \emph{bundle} event structure \cite{Lan92:BES} or \emph{dual} event structure 
 \cite{LBK:dual}. 

\subsection{Inhibitor event structures:}\
Inhibitor event structures (\cite{BBCP:rivista}) are equipped with a relation
$\inh\ \subseteq \Powone{\event{E}} \times \event{E} \times \Powfin{\event{E}}$ allowing
to model conflicts (even asymmetric) as well as temporary inhibitions. With $\Powone{\event{E}}$
we denote the subsets of events with cardinality at most one (the empty set or singletons).
The intuition behind this relation is the following: given 
$\De{a}{\pmv{e}}{A}$, the event $\pmv{e}$ is enabled at a configuration is whenever the configuration
contains the set $a$, then its intersection with $A$ is non empty. 
Hence the event in a non empty $a$ \emph{inhibits} the happening of  $\pmv{e}$ unless some event
in $A$ has happened as well. 
We stipulate that given $\De{a}{\pmv{e}}{A}$ the events in $A$ are pairwise conflicting 
(denoted with $\#(A)$).
Two events $\pmv{e}$ and $\pmv{e}'$ are in conflict if $\De{\setenum{\pmv{e}'}}{\pmv{e}}{\emptyset}$
and $\De{\setenum{\pmv{e}}}{\pmv{e}'}{\emptyset}$. An \emph{or}-causality relation $<$ is definable stipulating 
that $A < \pmv{e}$ if $\De{\emptyset}{\pmv{e}}{A}$, and that if $A < \pmv{e}$ and $B < \pmv{e}'$ for some
$\pmv{e}'\in A$ then also $B < \pmv{e}$. 
This relation should be interpreted as follows: $A < \pmv{e}$ means that 
if $\pmv{e}$ is present, then also an event in $A$ should be present.

\begin{definition}\label{de:ies}
 An \emph{inhibitor event structure ({\ies})} is a pair $\mathit{I} = (\event{E}, \inh)$, where 
 \begin{itemize}
  \item 
        $\event{E}$ is a set of \emph{events} and
  \item 
        $\inh\ \subseteq \Powone{\event{E}} \times \event{E} \times \Powfin{\event{E}}$ is a 
        relation such that for each $\De{a}{\pmv{e}}{A}$ it holds that $\#(A)$ and
        $a\cup A\neq\emptyset$.
 \end{itemize}       
\end{definition}
We briefly recall the intuition: consider an event $\pmv{e}$ and a triple in the $\inh$ relation
$\De{a}{\pmv{e}}{A}$. Then $\pmv{e}$ can be added provided that if the event in $a$ is present also one in $A$ should be present. 
\begin{definition}\label{de:ies-traces}
 Let $\mathit{I} = (\event{E}, \inh)$ be an \ies. 
 Let $C$ be a subset of $\event{E}$. 
 We say that $C$ is a \emph{configuration} of the \ies\ $\mathit{I}$ iff there exists a 
 sequence of distinct events $\rho = \pmv{e}_1\cdots\pmv{e}_n\cdots$ over $\event{E}$ 
 such that 
 \begin{itemize}
  \item $\overline{\rho} = C$, and 
  \item for each $i\leq  \len{\rho}$, for each $\De{a}{\pmv{e}_i}{A}$, it holds that
        $a \subseteq \overline{\rho_{i-1}}\ \Rightarrow\ \overline{\rho_{i-1}}\cap A\neq \emptyset$.
 \end{itemize}
 The set of configuration of a \ies\ is denoted with 
 $\Conf{\mathit{I}}{\ies}$.
\end{definition} 

\begin{definition}\label{de:mapsto-ies}
Let $\mathit{I} = (\event{E}, \inh)$ be an \ies.
With $\mapsto_{\ies}$ we denote the relation 
over $\Conf{\mathit{I}}{\ies} \times \Conf{\mathit{D}}{\ies}$ defined as $C \mapsto_{\ies} C'$ iff 
$C\subset C'$ and $C' = C\cup\setenum{\pmv{e}}$ for some $\pmv{e}\in \event{E}$.
\end{definition}
\begin{example}\label{ex:ies}
 Consider three events $\pmv{a}, \pmv{b}$ and $\pmv{c}$, $\De{\setenum{\pmv{a}}}{\pmv{c}}{\setenum{\pmv{b}}}$
 and $\De{\emptyset}{\pmv{b}}{\setenum{\pmv{a}}}$. The maximal event traces are
 $\pmv{c}\pmv{a}\pmv{b}$ and $\pmv{a}\pmv{b}\pmv{c}$. The event $\pmv{c}$ is inhibited when the
 event $\pmv{a}$ has occurred unless the event $\pmv{b}$ has occurred as well.
 The configurations are $\emptyset$, $\setenum{\pmv{a}}$, $\setenum{\pmv{c}}$,
 $\setenum{\pmv{a},\pmv{b}}$, $\setenum{\pmv{a},\pmv{c}}$ and $\setenum{\pmv{a},\pmv{b},\pmv{c}}$
 and are reached as follows: $\emptyset \mapsto_{\ies} \setenum{\pmv{a}}$, 
 $\emptyset \mapsto_{\ies} \setenum{\pmv{c}}$, 
 $\setenum{\pmv{a}} \mapsto_{\ies} \setenum{\pmv{a},\pmv{b}}$,
 $\setenum{\pmv{c}} \mapsto_{\ies} \setenum{\pmv{a},\pmv{c}}$, 
 $\setenum{\pmv{a},\pmv{b}}\mapsto_{\ies} \setenum{\pmv{a},\pmv{b},\pmv{c}}$ and
 $\setenum{\pmv{a},\pmv{c}}\mapsto_{\ies} \setenum{\pmv{a},\pmv{b},\pmv{c}}$.
 \begin{figure}[ht]
 \centerline{\scalebox{1.1}{\begin{tikzpicture}
[bend angle=45, scale=0.8, 
  read/.style={-,shorten
    <=0pt,thick},
  pre/.style={<-,shorten
    <=0pt,>=stealth,>={Latex[width=1mm,length=1mm]},thick}, post/.style={->,shorten
    >=0,>=stealth,>={Latex[width=1mm,length=1mm]},thick},place/.style={circle, draw=black,
    thick,minimum size=4mm}, transition/.style={rectangle, draw=black!0,
    thick,minimum size=4mm}, invplace/.style={circle,
    draw=black!0,thick}]
\node[transition] (spe) at (0,1.5) {};  
\node[transition] (in) at (0,0) {$\emptyset$};
\node[transition] (a) at (1.5,1) {$\setenum{\pmv{a}}$}
edge[pre] (in);
\node[transition] (c) at (1.5,-1) {$\setenum{\pmv{c}}$}
edge[pre] (in);
\node[transition] (ab) at (3.5,1) {$\setenum{\pmv{a},\pmv{b}}$}
edge[pre] (a);
\node[transition] (ac) at (3.5,-1) {$\setenum{\pmv{a},\pmv{c}}$}
edge[pre] (c);
\node[transition] (abc) at (6,0) {$\setenum{\pmv{a},\pmv{b},\pmv{c}}$}
edge[pre] (ab)
edge[pre] (ac);
\end{tikzpicture}}}
 \caption{The configurations of the \ies\ in the Example~\ref{ex:ies}}
 \end{figure}
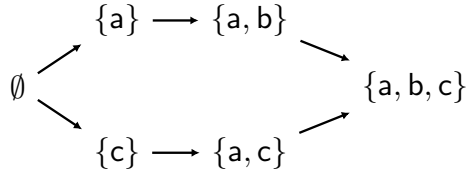
\end{example}

\subsection{Event structures with resolvable conflicts:}\ 
We finally recall the notion of event structure with resolvable conflicts (\cite{GP:ESRC}).

\begin{definition}\label{de:rces}
 An \emph{event structure with resolvable conflicts} (\rces) is the pair $\mathit{R} = (\event{E}, \vdash)$ where
 $\event{E}$ is a set of events and $\mathord{\vdash} \subseteq \Pow{\event{E}}\times \Pow{\event{E}}$ is the \emph{enabling}
 relation.
\end{definition}
No restriction is posed on the enabling relation. The intuition is that stipulating $X \vdash Y$ one state that
for all the events in $Y$ to occur, also the events in the set $X$ should have occurred first.

\begin{definition}\label{de:single-step}
 The \emph{single event} transition relation $\leadsto \subseteq \Pow{\event{E}}\times \Pow{\event{E}}$ of a
 \rces\ $\mathit{R} = (\event{E}, \vdash)$ is given by 
 $X \leadsto Y \Leftrightarrow (X \subseteq Y \land
 \card{Y\setminus X} \leq 1 \land \forall Z\subseteq Y.\ \exists W\subseteq X.\ W\vdash Z)$.
\end{definition}
With this notion it is possible to define what a configuration is: it 
is a subset $X$ of events such that $X\leadsto X$. The requirement that $X\leadsto X$ implies that each 
subset of events is enabled in the configuration.
\begin{definition}\label{de:rces-event-trace}
 Let $\mathit{R} = (\event{E}, \vdash)$ be a \rces. 
 Let $C$ be a subset of $\event{E}$. 
 We say that $C$ is a \emph{configuration} of the \rces\ $\mathit{R}$ iff there exists a 
 sequence of distinct events $\rho = \pmv{e}_1\cdots\pmv{e}_n\cdots\dots$ over $\event{E}$ 
 such that for each $1\leq i\leq \len{\rho}$ it holds that
  \begin{itemize} 
   \item 
   $\overline{\rho_{i-1}}$ and $\overline{\rho_{i}}$ are configurations, and
   \item 
   $\overline{\rho_{i-1}} \leadsto \overline{\rho_{i}}$.
  \end{itemize} 
  The set of configuration of a \rces\ is denoted with 
 $\Conf{\mathit{R}}{\rces}$.
\end{definition}
The enabling relation $\vdash$ is used not only to state under
which condition an event may happen, but also to stipulate when an event is \emph{deducible} from
a set of events, and this justifies why the  deduction symbol is used for this relation.

\begin{definition}\label{de:mapsto-rces}
Let $\mathit{R} = (\event{E}, \vdash)$ be a \rces.
With $\mapsto_{\rces}$ we denote the relation 
over $\Conf{\mathit{R}}{\rces}\times\Conf{\mathit{R}}{\rces}$ defined as
$C \mapsto_{\rces} C'$ iff 
$C\subset C'$ and $C' = C\cup\setenum{\pmv{e}}$
for some $\pmv{e}\in \event{E}$ and $C \leadsto C'$. 
\end{definition}
Observe that $\mapsto_{\rces}$ is, in this case, essentially $\leadsto$.

\begin{example}\label{ex:rces}
 Consider three events $\pmv{a}, \pmv{b}$ and $\pmv{c}$, and $\emptyset\vdash X$ where 
 $X\subseteq \setenum{\pmv{a}, \pmv{b}, \pmv{c}}$ with $X\neq \setenum{\pmv{a}, \pmv{b}}$ and
 $\setenum{\pmv{c}}\vdash \setenum{\pmv{a}, \pmv{b}}$. The intuition is that all the events are 
 singularly enabled but $\pmv{a}$ and $\pmv{b}$ are in conflict unless $\pmv{c}$ has not happened.
 In fact $\setenum{\pmv{a}, \pmv{b}}$ is not a configuration as taking $\setenum{\pmv{a}, \pmv{b}}$ as
 the $Z\subseteq \setenum{\pmv{a}, \pmv{b}}$ of the notion of single event transition relation, 
 there
 is no subset of $\setenum{\pmv{a}, \pmv{b}}$ enabling these two events.
 
 The configurations and how they are reached are those of the Example~\ref{ex:new2res}
 depicted in Figure~\ref{fig:mia-examples}(a).
\end{example}


\section{Embedding and comparing Event Structures}\label{sec:relating}
We now show that each of the event structure we have seen so far can be seen as a \Ges,
and also how to compare them.
For the sake of simplicity, we will consider full event structures, \emph{i.e} 
each event $\pmv{e}$ of the event structure is \emph{executable}, namely that there 
is at least a configuration containing it.

\subsection{Comparing Event Structures:}\ 
We start by devising how we can compare two event structures of any kind. The intuition
is obvious: two event structures are equivalent iff they have the same configurations and 
the $\mapsto$ relations
defined on configurations coincide. 
We recall the notion of \emph{event automaton} (\cite{PP:NEPC}).

\begin{definition}\label{de:event-automata}
 Let $\event{E}$ be a set of \emph{events}. An \emph{event automaton} over $\event{E}$ (\ea) 
 is the tuple $\mathcal{E} = \tuple{\event{E}, \stati, \mapsto, \instate}$ such that
 \begin{itemize}
   \item 
   $\stati \subseteq \Pow{\event{E}}$, and
   \item 
   $\mathord{\mapsto} \subseteq \stati\times\stati$ is such that $s\mapsto s'$ implies that
         $s \subset s'$.
 \end{itemize}
 $\instate\in \stati$ is the initial state.
\end{definition}
An event automaton is just a set of subsets of events and a reachability relation $\mapsto$ with the
minimal requirement that if two states $s, s'$ are related by the $\mapsto$ relation, namely
$s\mapsto s'$, then $s'$ is \emph{reached} by $s$ adding at least one event.

\begin{definition}\label{de:ea-simple}
 Let $\mathcal{E} = \tuple{\event{E}, \stati, \mapsto, \instate}$ be an \ea. We say
 that $\mathcal{E}$ is \emph{simple} if 
      $\forall \pmv{e}\in\event{E}$. $\exists s\in \stati$ such that 
      $s\cup\setenum{\pmv{e}}\in \stati$, 
      $s\mapsto s\cup\setenum{\pmv{e}}$ and $s\in \Powfin{\event{E}}$.
\end{definition}
In a simple event automaton, for each event, there is a finite state 
such that this can be reached by adding just this event.

\begin{definition}\label{de:ea-reach}
 Let $\mathcal{E} = \tuple{\event{E}, \stati, \mapsto, \instate}$ be an \ea, and 
 $s \in \stati$ be a state. With 
 $\ragg{s} = \setcomp{s'\in\stati}{s\mapsto s'}$ 
 we denote the subsets of states are 
 reached from $s$ in $\mathcal{E}$.
\end{definition}

\begin{definition}\label{de:ea-compl}
 Let $\mathcal{E} = \tuple{\event{E}, \stati, \mapsto, \instate}$ be an \ea. We
 say that the event automaton $\mathcal{E}$ is \emph{complete} iff 
 each state $s\in \stati$ can be reached by $\instate$.
\end{definition}
A complete event automaton is an event automaton where each states can be reached by
the initial state. Observe that
$\ragg{\cdot}$ of Definition~\ref{de:ea-reach} can be extended to an operator on subsets of states, and 
the operator defined in this way is clearly a monotone and continuous one. 
We can therefore calculate the \emph{least fixed point} of the operator $\ragg{\cdot}$, 
which will be denoted with $\mathit{lfp}(\ragg{\cdot})$, and 
completeness reduces to require that 
$\mathit{lfp}(\ragg{\setenum{\instate}}) = \stati$.

Event automata can easily express configurations of any kind of event structure, 
provided that for each kind a way to reach a configuration from another is given.
The kind of event structure is ranged over by 
$\mu, \mu'\in\setenum{\pes, \fes, r\pes, \dces, \ies, \rces, \Ges}$.

We first state the following theorem, which proof is almost straightforward.
\begin{theorem}\label{th:es-to-ea}
 Let $\mathit{X}$ be an event structure of kind $\mu$ over the set of events $\event{E}$. 
 Then $\grafo{\mu}{\mathit{X}} = \tuple{\event{E}, \Conf{\mathit{X}}{\mu}, \mapsto_{\mu}, \emptyset}$ is 
 an event automaton. 
\end{theorem}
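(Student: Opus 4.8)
The plan is to unwind Definition~\ref{de:event-automata} and check its three requirements for the tuple $\grafo{\mu}{\mathit{X}} = \tuple{\event{E}, \Conf{\mathit{X}}{\mu}, \mapsto_{\mu}, \emptyset}$, treating the seven kinds $\mu\in\setenum{\pes,\fes,r\pes,\dces,\ies,\rces,\Ges}$ uniformly and invoking the specific definition of configuration or of $\mapsto_{\mu}$ only where a genuine case distinction is needed. Concretely I must verify (i) $\Conf{\mathit{X}}{\mu}\subseteq\Pow{\event{E}}$, (ii) that $\mapsto_{\mu}$ is a relation over $\Conf{\mathit{X}}{\mu}\times\Conf{\mathit{X}}{\mu}$ with $C\mapsto_{\mu}C'$ forcing $C\subset C'$, and (iii) $\emptyset\in\Conf{\mathit{X}}{\mu}$, so that the declared initial state actually lies in the state set.

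Requirement (i) is immediate: for every kind the notion of configuration is stated as a property of a subset $C\subseteq\event{E}$, so $\Conf{\mathit{X}}{\mu}$ is by construction a family of subsets of $\event{E}$. Requirement (ii) is equally direct from the way each reachability relation was set up. Each of Definitions~\ref{de:mapsto-pes}, \ref{de:mapsto-fes}, \ref{de:mapsto-rpes}, \ref{de:mapsto-dces}, \ref{de:mapsto-ies} and~\ref{de:mapsto-rces} declares $\mapsto_{\mu}$ \emph{over} $\Conf{\mathit{X}}{\mu}\times\Conf{\mathit{X}}{\mu}$ and asks explicitly for $C\subset C'$ together with $C'=C\cup\setenum{\pmv{e}}$, so both the typing clause and the strict-inclusion clause of Definition~\ref{de:event-automata} hold immediately; for $\mu=\Ges$ the same follows from Definition~\ref{de:mia-set-configuration}, where $C'\setminus C=\setenum{\pmv{e}}$ with $\pmv{e}\notin C$ again yields $C\subset C'$. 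No further argument is needed here.

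The only point deserving care is requirement (iii), that $\emptyset$ be a configuration of every kind. For the kinds whose configurations are given by closure conditions ($\pes$, $\fes$, $r\pes$) this is vacuous: $\CF{\emptyset}$ holds and every universally quantified clause ranges over $\pmv{e}\in\emptyset$ and is trivially satisfied. For the kinds defined through event traces ($\dces$, $\ies$, $\Ges$) I would simply exhibit the empty sequence $\epsilon$ as witness, since $\overline{\epsilon}=\emptyset$, $\CF{\emptyset}$ holds, and the enabling condition is quantified over an empty range; this is exactly the observation already recorded for $\Ges$ after Definition~\ref{de:mia-es-event-traces}. The delicate case is $\rces$: I would lean on the event-trace formulation of Definition~\ref{de:rces-event-trace} and again take $\rho=\epsilon$, so that $\emptyset$ qualifies with a vacuously satisfied body, rather than on the alternative reading ``$X$ is a configuration iff $X\leadsto X$'', which via Definition~\ref{de:single-step} would demand $\emptyset\vdash\emptyset$ and is not guaranteed by the unconstrained relation $\vdash$. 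Reconciling these two readings of ``configuration'' — or arguing that the empty sequence is always admissible — is where the bulk of the (admittedly small) work lies; once $\emptyset\in\Conf{\mathit{X}}{\rces}$ is secured, all three requirements hold and $\grafo{\mu}{\mathit{X}}$ is an event automaton.
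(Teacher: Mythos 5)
Your proposal is correct and follows essentially the same route as the paper's proof: both simply unwind Definition~\ref{de:event-automata}, observing that $\emptyset$ belongs to $\Conf{\mathit{X}}{\mu}$ for every kind $\mu$ and that each of the cited $\mapsto_{\mu}$ definitions forces $C\subset C'$. Your handling of the $\rces$ case is in fact more careful than the paper's, which asserts $\emptyset\in\Conf{\mathit{X}}{\rces}$ without addressing the tension you identify between the ``$X\leadsto X$'' reading (which would require $\emptyset\vdash\emptyset$, not guaranteed by an unconstrained $\vdash$) and the event-trace reading of Definition~\ref{de:rces-event-trace}.
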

\begin{proof}
 First of all, we observe that for each event structure $\mathit{X}$ of kind $\mu$, with
 $\mu\in\{\pes, \fes,$ $r\pes, \dces, \ies, \rces, \Ges\}$, we have that the empty set belongs
 to $\Conf{\mathit{X}}{\mu}$. 
 Then, due to Definitions~\ref{de:mia-set-configuration}, \ref{de:mapsto-pes}, 
 \ref{de:mapsto-fes}, \ref{de:mapsto-rpes}, \ref{de:mapsto-dces}, \ref{de:mapsto-ies} and 
 \ref{de:mapsto-rces} we have that $s\mapsto_{\mu} s'$ implies that $s \subset s'$ as required
 by the Definition~\ref{de:event-automata}. Hence the thesis follows. 
\end{proof}
The event automata obtained by the configurations of each kind of event structure are simple and
complete. 

\begin{proposition}\label{pr:ea-conf-simple-complete}
 Let $\mathit{X}$ be an event structure of kind 
 $\mu$ over the set of events $\event{E}$ and  
 $\grafo{\mu}{\mathit{X}} = \tuple{\event{E}, \Conf{\mathit{X}}{\mu}, \mapsto_{\mu}, \emptyset}$ the
 associated event automaton. Then $\grafo{\mu}{\mathit{X}}$ is complete and simple.
\end{proposition}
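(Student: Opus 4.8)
The plan is to verify the two clauses of Definitions~\ref{de:ea-simple} and~\ref{de:ea-compl} separately, treating the event-structure kinds in two groups: those whose configurations are defined through event traces ($\mu\in\setenum{\dces,\ies,\rces,\Ges}$) and those defined directly by closure conditions ($\mu\in\setenum{\pes,\fes,r\pes}$). In both groups the common engine is the observation that a configuration can be \emph{built up} one event at a time, so that every intermediate stage is again a configuration and consecutive stages are related by $\mapsto_\mu$.

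For \emph{simplicity} I would start from the standing fullness hypothesis: every event $\pmv{e}\in\event{E}$ belongs to some configuration $C$. In the trace-based cases, pick a trace $\rho=\pmv{e}_1\cdots$ witnessing $C$, let $\pmv{e}=\pmv{e}_k$, and set $s=\overline{\rho_{k-1}}$; then $s$ is a finite configuration, $s\cup\setenum{\pmv{e}}=\overline{\rho_k}$ is a configuration, and $s\mapsto_\mu s\cup\setenum{\pmv{e}}$ holds because the enabling condition for $\pmv{e}_k$ is exactly what a valid trace guarantees (Definitions~\ref{de:mia-set-configuration}, \ref{de:mapsto-dces}, \ref{de:mapsto-ies}, \ref{de:mapsto-rces}). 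In the directly-defined cases I would instead take $s$ to be the finite history \emph{without} $\pmv{e}$: for a \pes, $s=\hist{\pmv{e}}\setminus\setenum{\pmv{e}}$, which is conflict-free and downward closed because $\pmv{e}$ is maximal in $\hist{\pmv{e}}$; for a r\pes\ the analogous set built from $\ic(\pmv{e})$, finite by definition and a configuration by Proposition~\ref{pr:conf-rpes-are-po}; for a \fes\ the corresponding finite sub-configuration of $C$ below $\pmv{e}$. In each case $s$ is finite, $s\cup\setenum{\pmv{e}}$ is a configuration and $s\mapsto_\mu s\cup\setenum{\pmv{e}}$, which is precisely what Definition~\ref{de:ea-simple} demands.

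For \emph{completeness} I would show that every configuration $C$ is reachable from $\emptyset$. In the trace-based cases the witnessing trace $\rho$ yields directly a chain $\emptyset=\overline{\rho_0}\mapsto_\mu\overline{\rho_1}\mapsto_\mu\cdots$ whose union is $C$, since each prefix $\overline{\rho_i}$ is a configuration and $\overline{\rho_{i-1}}\mapsto_\mu\overline{\rho_i}$. For the directly-defined kinds I would first manufacture such a trace by linearising $C$ along the relevant order --- $\leq$ for \pes, $\to^{\ast}\cap C\times C$ for r\pes\ (a partial order by Proposition~\ref{pr:conf-rpes-are-po}), and $\prec^{\ast}\cap C\times C$ for \fes\ --- so that every prefix is again a configuration, and then read off the same chain. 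When $C$ is finite this is a finite $\mapsto_\mu$-path and completeness is immediate; when $C$ is infinite, $C$ is the union (supremum) of its finite prefixes, and this directed union is captured by $\mathit{lfp}(\ragg{\setenum{\emptyset}})$ using the monotonicity and continuity of $\ragg{\cdot}$ noted after Definition~\ref{de:ea-compl}.

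The step I expect to be the main obstacle is producing, in the directly-defined cases, a linearisation of $C$ all of whose prefixes are genuine configurations, i.e.\ verifying that the closure conditions (downward closure for \pes, $\ic$-closure for r\pes, the flow condition of Definition~\ref{de:conf-fes} for \fes) are preserved at every prefix; finiteness of histories and of immediate causes is what makes this possible and also supplies the finite witness required for simplicity. The infinite-configuration case of completeness is a secondary technical point, resolved cleanly by the continuity of $\ragg{\cdot}$ once the chain of finite prefixes is in hand.
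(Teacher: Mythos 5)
Your proof is correct and follows essentially the same route as the paper's own argument, which simply asserts that simplicity follows from the definition of $\mapsto_{\mu}$ together with the fact that each event is added at a finite configuration, and that completeness follows because every configuration is reachable from the initial one. Your version is considerably more detailed (explicitly splitting the trace-based from the closure-based kinds, and handling infinite configurations via the continuity of $\ragg{\cdot}$), but it fills in exactly the steps the paper leaves implicit rather than taking a different path.
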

\begin{proof}
 Simplicity is implied by how the $\mapsto_{\mu}$ is defined for the various kind of event structures 
 and by the fact that in each of the considered kind of event structures an event is added at a 
 finite configuration.
 Completeness is a consequence that each configuration of the considered event structures can
 be reached by the initial configuration. 
\end{proof}
Using event automata we can decide when two event structures are equivalent.

\begin{definition}\label{de:es-equivalence}
  Let $\mathit{X}$ and $\mathit{Y}$ be event structures over the same set of events $\event{E}$ 
  of kind $\mu$ and $\mu'$ respectively. 
  We say that  $\mathit{X}$ and $\mathit{Y}$ are \emph{equivalent}, denoted with
  $\mathit{X} \equiv \mathit{Y}$, iff $\grafo{\mu}{\mathit{X}} = \grafo{\mu'}{\mathit{Y}}$.
\end{definition}

The relative expressivity among event structures 
is explicitly studied in \cite{AKPN:DC} and \cite{AKPN:lmcs18}. 
Informally a kind of event structure is more expressive with respect to another, when there is 
a configuration of the former that cannot be a configuration of the latter, whatever is done
with the various relations among events. 
Incomparable means that neither one is more expressive than the other or the vice versa.
We shortly summarize part of these findings, when considering finite configurations.
\pes\ and r\pes\ are equally expressive, whereas \ses\ and \ges\ are strictly more expressive than r\pes, 
and are 
incomparable one with respect to the other.
These two are both less expressive than \dces\ and \rces, which are incomparable.
The relative expressivity of other kinds of event structure has not been
investigated in that paper. 

Here we show that a $\ges$ can be seen as an \ies, adding a tiny piece of information to the
expressivity spectrum of \cite{AKPN:lmcs18}.

\begin{proposition}\label{pr:ges-into-ies}
 Let $\mathit{D} = (\event{E}, \#, \to, \growt)$ be a \ges.
 Let $\mathcal{I}(\mathit{D}) = (\event{E}, \inh)$ be an \ies\ where the relation
 $\inh$ is defined as follows:
 \begin{itemize}
  \item $\forall \pmv{e}, \pmv{e}'\in\event{E}$ such that $\pmv{e}\sharpbin\pmv{e}'$ we have
        $\De{\setenum{\pmv{e}}}{\pmv{e}'}{\emptyset}$ and $\De{\setenum{\pmv{e}'}}{\pmv{e}}{\emptyset}$,
  \item $\forall \pmv{e}, \pmv{e}'\in\event{E}$ such that $\pmv{e} \to \pmv{e}'$ we have
        $\De{\emptyset}{\pmv{e}'}{\setenum{\pmv{e}}}$, and
  \item $\forall \pmv{e}, \pmv{e}', \pmv{e}''\in\event{E}$ such that 
        $\grow{\pmv{e}}{\pmv{e}'}{\pmv{e}''}$ we have
        $\De{\setenum{\pmv{e}}}{\pmv{e}''}{\setenum{\pmv{e}'}}$.              
 \end{itemize} 
 Then $\mathcal{I}(\mathit{D})$ in an \ies\ and $\mathcal{I}(\mathit{D})\equiv\mathit{D}$.
\end{proposition}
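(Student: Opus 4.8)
The plan is to prove the statement in three stages: first that $\mathcal{I}(\mathit{D})$ is a well-formed \ies, then that the two structures have the same configurations, and finally that their single-event transition relations agree, so that the event automata $\grafo{\ges}{\mathit{D}}$ and $\grafo{\ies}{\mathcal{I}(\mathit{D})}$ are literally equal in the sense of Definition~\ref{de:es-equivalence}. The well-formedness check against Definition~\ref{de:ies} is routine: each triple produced by the three clauses has a first component that is empty or a singleton and a third component that is empty or a singleton, so $\inh \subseteq \Powone{\event{E}} \times \event{E} \times \Powfin{\event{E}}$; moreover in every clause $A$ is empty or a singleton, whence $\#(A)$ holds vacuously, and $a \cup A \neq \emptyset$ is immediate since at least one of $a$, $A$ is nonempty in each clause.

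The core of the argument is a local enabling lemma that I would isolate first. Fix a conflict-free $C \subseteq \event{E}$ and an event $\pmv{e} \notin C$, and recall that $\dc \equiv \emptyset$ in a \ges. I would show that $\pmv{e}$ meets every inhibitor constraint of $\mathcal{I}(\mathit{D})$ at $C$ (that is, $a \subseteq C \Rightarrow C \cap A \neq \emptyset$ for all $\De{a}{\pmv{e}}{A}$) if and only if $C \cup \setenum{\pmv{e}}$ is conflict-free and $(\ic(\pmv{e}) \cup \ac(C, \pmv{e})) \subseteq C$. The proof splits the triples with middle component $\pmv{e}$ according to the three clauses. The conflict clause contributes $\De{\setenum{\pmv{e}'}}{\pmv{e}}{\emptyset}$ for each $\pmv{e}'\#\pmv{e}$, whose satisfaction at $C$ means $\pmv{e}' \notin C$, i.e. exactly that $C \cup \setenum{\pmv{e}}$ stays conflict-free. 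The enabling clause contributes $\De{\emptyset}{\pmv{e}}{\setenum{\pmv{e}'}}$ for each $\pmv{e}' \in \ic(\pmv{e})$, forcing $\pmv{e}' \in C$, i.e. $\ic(\pmv{e}) \subseteq C$. The growing clause contributes $\De{\setenum{\pmv{e}'}}{\pmv{e}}{\setenum{\pmv{e}''}}$ for each $\grow{\pmv{e}'}{\pmv{e}''}{\pmv{e}}$, which requires that whenever a modifier $\pmv{e}' \in C$ the associated contribution $\pmv{e}''$ also lies in $C$; unfolding $\groset{\cdot}$, $\agg{\cdot}{\cdot}$ and $\ac$, this is precisely $\ac(C,\pmv{e}) \subseteq C$. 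Since these three families exhaust all triples with middle $\pmv{e}$, their conjunction is exactly the \ges-enabling condition.

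Next I would lift the lemma to configurations by induction along event traces. Given a \ges-configuration $C$ witnessed by a trace $\rho$ (Definition~\ref{de:dces-conf}), every prefix $\overline{\rho_{i-1}}$ is conflict-free and $\pmv{e}_i$ is \ges-enabled there, so by the lemma $\pmv{e}_i$ satisfies all inhibitor constraints at $\overline{\rho_{i-1}}$; hence the same $\rho$ witnesses $C \in \Conf{\mathcal{I}(\mathit{D})}{\ies}$. Conversely, given an \ies-configuration witnessed by $\rho$, I would first note that the conflict clause forces $\overline{\rho}$ to be conflict-free: if two conflicting events both occur, the later one violates its $\De{\cdot}{\cdot}{\emptyset}$ constraint. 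With conflict-freeness in hand I apply the lemma at each step to recover the \ges-enabling condition, obtaining $C \in \Conf{\mathit{D}}{\ges}$. Thus $\Conf{\mathit{D}}{\ges} = \Conf{\mathcal{I}(\mathit{D})}{\ies}$.

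The last and, I expect, most delicate step is matching the transition relations, and this is where the main obstacle lies. Here $\mapsto_{\ges}$ is the restriction of $\mapsto_{\dces}$ (Definition~\ref{de:mapsto-dces}), which additionally demands $(\ic(\pmv{e}) \cup \ac(C,\pmv{e})) \subseteq C$ for the added event $\pmv{e}$, whereas $\mapsto_{\ies}$ (Definition~\ref{de:mapsto-ies}) as phrased only asks that $C$ and $C' = C \cup \setenum{\pmv{e}}$ be configurations differing by one event. The delicate point is to show that, on the now-common set of configurations, the \ies single-event step carries exactly the enabling information of the \ges step; the natural route is to read $C \mapsto_{\ies} C'$ through the \ies-enabling of the added event at $C$ and then invoke the local lemma once more, so that $C \mapsto_{\ies} C'$ and $C \mapsto_{\ges} C'$ reduce to the very same condition $(\ic(\pmv{e}) \cup \ac(C,\pmv{e})) \subseteq C$. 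Care is needed precisely in the growing-causality case, since it is there that the presence of a modifier in $C$ without its contribution can pull apart ``both endpoints are configurations'' from ``the added event is genuinely enabled''; handling this case uniformly is the crux of establishing $\grafo{\ges}{\mathit{D}} = \grafo{\ies}{\mathcal{I}(\mathit{D})}$ and hence $\mathcal{I}(\mathit{D}) \equiv \mathit{D}$.
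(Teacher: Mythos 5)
Your first three stages are sound and essentially reproduce the paper's argument: the well-formedness check is the same, and your local enabling lemma (the inhibitor constraints for $\pmv{e}$ at a conflict-free $C$ hold iff $C\cup\setenum{\pmv{e}}$ is conflict-free and $\ic(\pmv{e})\cup\ac(C,\pmv{e})\subseteq C$) is a cleaner packaging of what the paper does inline when it observes that $\De{\setenum{\pmv{e}}}{\pmv{e}_i}{\setenum{\pmv{e}'}}$ prescribes exactly the growing-causality enabling. The two inductions along event traces also match the paper's.

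The gap is in your fourth stage, and you have in fact put your finger on a genuine obstruction rather than a merely delicate point. The route you propose --- reading $C\mapsto_{\ies}C'$ through the \ies-enabling of the added event at $C$ and reducing both step relations to the condition $\ic(\pmv{e})\cup\ac(C,\pmv{e})\subseteq C$ --- is not available, because Definition~\ref{de:mapsto-ies} makes no reference to enabling: $C\mapsto_{\ies}C'$ holds as soon as both sets are configurations and $C'=C\cup\setenum{\pmv{e}}$. Concretely, take three events with the single growing triple $\grow{\pmv{b}}{\pmv{a}}{\pmv{c}}$ (modifier $\pmv{b}$, contribution $\pmv{a}$, target $\pmv{c}$), no static causes and no conflicts; this is the situation of Example~\ref{ex:grow-dep} and Figure~\ref{fig:mia-examples}(e). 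Both $\setenum{\pmv{b}}$ and $\setenum{\pmv{b},\pmv{c}}$ are configurations of $\mathit{D}$ and of $\mathcal{I}(\mathit{D})$ (the latter via the trace $\pmv{c}\pmv{b}$), so $\setenum{\pmv{b}}\mapsto_{\ies}\setenum{\pmv{b},\pmv{c}}$; but $\setenum{\pmv{b}}\mapsto_{\ges}\setenum{\pmv{b},\pmv{c}}$ fails, since Definition~\ref{de:mapsto-dces} demands $\ac(\setenum{\pmv{b}},\pmv{c})=\setenum{\pmv{a}}\subseteq\setenum{\pmv{b}}$. Hence the two event automata differ in their $\mapsto$ components and the equality required by Definition~\ref{de:es-equivalence} cannot be established along the lines you sketch; one would have to either weaken equivalence to equality of configuration sets or build the enabling of the added event into $\mapsto_{\ies}$. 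You should be aware that the paper's own proof does not supply the missing argument either: it asserts that the two relations coincide ``as they are defined starting from the sets of configurations'', which is true of $\mapsto_{\ies}$ but not of $\mapsto_{\ges}$, so your identification of this step as the unresolved crux is accurate.
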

\begin{proof}
 Recall that a \ges\ is a \dces\ where the shrinking causality relation is empty. The enabling
 and all the related notions are those of a \dces\ without the part concerning the shrinking causality
 relation.
 
 The fact that $\mathcal{I}(\mathit{D})$ is indeed an \ies\ 
 derives from the fact that the growing causality
 is a kind a weak causality, which is captured by an \ies, thus the $\inh$ relation 
 of $\mathcal{I}(\mathit{D})$ obeys to
 the requirements posed, as each $A$ is either the empty set or a singleton, hence $\# A$.
 
 We now prove that the sets of configurations coincide.
 Take $C\in \Conf{\mathit{D}}{\ges}$, then there is a sequence of events
 $\rho = \pmv{e}_1\pmv{e}_2\cdots\pmv{e}_n\dots$ such that $\toset{\rho} = C$,
 $\CF{C}$ and for each $i\geq 1$ $\toset{\rho_{i-1}}\trans{\pmv{e}_i}$.
 We first show that $C$ is conflict free in the \ies\ interpretation.
 Consider $\pmv{e}\in C$, then $\pmv{e} = \pmv{e}_j$ for some $j\leq\len{\rho}$,
 and take any $\pmv{e}'$ in conflict with $\pmv{e}$ in $\mathit{D}$, then we
 have $\De{\setenum{\pmv{e}}}{\pmv{e}'}{\emptyset}$ and $\De{\setenum{\pmv{e}'}}{\pmv{e}}{\emptyset}$
 and then if $\pmv{e}\in C$ it is impossible that $\pmv{e}'$ is in $C$ as well.
 $\toset{\rho_{i-1}}\trans{\pmv{e}_i}$ in the case $\shrinkt = \emptyset$ reduces 
 to prove that for each $\grow{\pmv{e}}{\pmv{e}'}{\pmv{e}_i}$  if
 $\pmv{e}\in\toset{\rho_{i-1}}$ then also $\pmv{e}'\in\toset{\rho_{i-1}}$, but
 this is exactly what prescribed by $\De{\setenum{\pmv{e}}}{\pmv{e}_i}{\setenum{\pmv{e}'}}$.
 This means that $C\in  \Conf{\mathcal{I}(\mathit{D})}{\ies}$.
 
 Analogously take $C\in  \Conf{\mathcal{I}(\mathit{D})}{\ies}$. 
 It is conflict free
 and it is easy to see that, given the sequence 
 $\rho = \pmv{e}_1\pmv{e}_2\cdots\pmv{e}_n\dots$ such that $\toset{\rho} = C$, for all $i$ we
 have that $\toset{\rho_{i-1}}\trans{\pmv{e}_i}$ as 
 $\ic(\pmv{e}_i) \cup \ac(\overline{\rho_{i-1}}, \pmv{e}_i) \subseteq \overline{\rho_{i-1}}$.
 
 Finally consider $\mapsto_{\ies}$ and $\mapsto_{\ges}$ as defined in Definitions~\ref{de:mapsto-ies}
 and~\ref{de:mapsto-dces}. They clearly coincides, as they are defined starting from the sets of 
 configurations. But this implies that $\mathcal{I}(\mathit{D})\equiv\mathit{D}$.
\end{proof}
We observe that \ies\ are more expressive than \ges, as it is impossible to find a \ges\ such that the
configurations are the one depicted in Figure~\ref{fig:ies-not-ges}.
In fact a \ges\ cannot model this kind of or-causality, as shown also in \cite{AKPN:lmcs18}, 
whereas an \ies\ can.

\begin{figure}[ht]
\centerline{\scalebox{1.1}{\begin{tikzpicture}
[bend angle=45, scale=0.8, 
  read/.style={-,shorten
    <=0pt,thick},
  pre/.style={<-,shorten
    <=0pt,>=stealth,>={Latex[width=1mm,length=1mm]},thick}, post/.style={->,shorten
    >=0,>=stealth,>={Latex[width=1mm,length=1mm]},thick},place/.style={circle, draw=black,
    thick,minimum size=4mm}, transition/.style={rectangle, draw=black!0,
    thick,minimum size=4mm}, invplace/.style={circle,
    draw=black!0,thick}]
\node[transition] (spe) at (0,1.5) {};  
\node[transition] (in) at (0,0) {$\emptyset$};
\node[transition] (a) at (1.5,1) {$\setenum{\pmv{a}}$}
edge[pre] (in);
\node[transition] (c) at (1.5,-1) {$\setenum{\pmv{b}}$}
edge[pre] (in);
\node[transition] (ab) at (3.5,1) {$\setenum{\pmv{a},\pmv{c}}$}
edge[pre] (a);
\node[transition] (ac) at (3.5,-1) {$\setenum{\pmv{b},\pmv{c}}$}
edge[pre] (c);
\end{tikzpicture}}}
\caption{The configuration of the \ies\ with three events $\setenum{\pmv{a},\pmv{b},\pmv{c}}$ and
where the $\inh$ relation is 
$\De{\emptyset}{\pmv{c}}{\setenum{\pmv{a},\pmv{b}}}$, $\De{\setenum{\pmv{a}}}{\pmv{b}}{\emptyset}$
and $\De{\setenum{\pmv{b}}}{\pmv{a}}{\emptyset}$}\label{fig:ies-not-ges}
\end{figure}

\subsection{Embedding event structures into \Ges:}\ 
We prove now a more general result, namely that given any \emph{event automaton} $\mathcal{E}$, 
which is obtained by the configurations of any kind of event structure, it is possible to obtain a \Ges\ 
whose configurations are precisely the ones of the event automaton $\mathcal{E}$.
We start identifying, in an \ea, the events that are in \emph{conflict}.
The conflict relation we obtain is a \emph{semantic} conflict relation: two events
are in conflict iff they never appear together in a state.

\begin{definition}\label{de:ea-confl}
 Let $\mathcal{E} = \tuple{\event{E}, \stati, \mapsto, \instate}$ be an \ea. 
 We define a symmetric and irreflexive conflict relation $\#_{\ea}$ as follows:
 $\pmv{e}\ \#_{\ea}\ \pmv{e}'$ iff for each $s\in \stati$.\  
 $\setenum{\pmv{e}, \pmv{e}'}\not\subseteq s$.
\end{definition}

In order to obtain the \cd-relation we need some further definitions. 
Fixed an event $\pmv{e}$, the first one identifies the states where this event can be added, and 
the second one identifies the states where the event cannot be added.
\begin{definition}\label{de:ea-caus-pos}
 Let $\mathcal{E} = \tuple{\event{E}, \stati, \mapsto, \instate}$ be an \ea. 
 To each event $\pmv{e}\in \event{E}$ we associate the 
 set of states $\setcomp{s\in \stati}{s\cup\setenum{\pmv{e}}\in \stati\ \land\ s\in\Powfin{\event{E}}\ 
 \land\ s\mapsto s\cup\setenum{\pmv{e}}}$, which we denote with $\Ctr(\mathcal{E},\pmv{e})$.
\end{definition}

\begin{definition}\label{de:ea-caus-neg}
 Let $\mathcal{E} = \tuple{\event{E}, \stati, \mapsto, \instate}$ be an \ea. 
 To each event $\pmv{e}\in \event{E}$ we associate the 
 set of states $\setcomp{s\in \stati}{s\cup\setenum{\pmv{e}}\not\in \stati\ 
 \land\ s\in\Powfin{\event{E}}}$, which 
 we denote with $\Rtc(\mathcal{E},\pmv{e})$.
\end{definition}
Definition~\ref{de:ea-caus-pos} characterizes when an event is enabled giving the \emph{allowing} context,
whereas the Definition~\ref{de:ea-caus-neg} gives the context where the event cannot be added,
and it is called \emph{negative} context.
These two sets are used to obtain the \cd-relation.

\begin{theorem}\label{th:ea-to-Ges}
 Let $\mathcal{E} = \tuple{\event{E}, \stati, \mapsto, \instate}$ be a simple and complete \ea\ 
 such that $\event{E} = \bigcup_{s\in\stati} s$. 
 Then $\toges{\ea}{\mathcal{E}} = 
 (\event{E}, \#,$ $\gesrel)$ is a \Ges, where $\#$ is the relation $\#_{\ea}$ of Definition~\ref{de:ea-confl},
 and for each $\pmv{e}\in \event{E}$ we have 
 $\setcomp{(X,\emptyset)}{X\in \Ctr(\mathcal{E},\pmv{e})}\cup
 \setcomp{(X,\setenum{\pmv{e}})}{X\in \Rtc(\mathcal{E},\pmv{e})}\gesrel \pmv{e}$. 
 Furthermore $\mathcal{E} \equiv \grafo{\Ges}{\toges{\ea}{\mathcal{E}}}$.
\end{theorem}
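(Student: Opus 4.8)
The statement splits into two tasks: showing that $\toges{\ea}{\mathcal{E}}$ is a legitimate \Ges\ in the sense of Definition~\ref{de:mia-es}, and then showing that the induced event automaton $\grafo{\Ges}{\toges{\ea}{\mathcal{E}}}$ coincides with $\mathcal{E}$ (which is exactly what $\equiv$ unfolds to, by Definition~\ref{de:es-equivalence}). I would dispatch the first task directly. The relation $\#_{\ea}$ is symmetric and irreflexive by Definition~\ref{de:ea-confl}, so only the three clauses on $\gesrel$ need checking. For a fixed $\pmv{e}$, the entry is $\pmv{Z}=\setcomp{(X,\emptyset)}{X\in\Ctr(\mathcal{E},\pmv{e})}\cup\setcomp{(X,\setenum{\pmv{e}})}{X\in\Rtc(\mathcal{E},\pmv{e})}$. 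It is non-empty because $\mathcal{E}$ is simple (Definition~\ref{de:ea-simple}): simplicity yields a finite $s$ with $s\mapsto s\cup\setenum{\pmv{e}}$, hence $s\in\Ctr(\mathcal{E},\pmv{e})$. Every modifier set $X$ is a state of $\mathcal{E}$, so any two of its events co-occur in $X$ and therefore cannot be $\#_{\ea}$-related; thus $\CF{X}$, while each dependency set is $\emptyset$ or $\setenum{\pmv{e}}$ and so trivially conflict free. Finally $\Ctr(\mathcal{E},\pmv{e})$ and $\Rtc(\mathcal{E},\pmv{e})$ are disjoint (a state $s$ with $s\cup\setenum{\pmv{e}}\in\stati$ cannot satisfy $s\cup\setenum{\pmv{e}}\notin\stati$), so no modifier set carries two different dependency sets, giving the last clause.

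The heart of the argument is a single lemma: for every finite state $s\in\stati$ and every $\pmv{e}\notin s$, one has $s\enab{\pmv{e}}$ in $\toges{\ea}{\mathcal{E}}$ iff $s\mapsto s\cup\setenum{\pmv{e}}$ in $\mathcal{E}$. Unfolding Definition~\ref{de:mia-es-enabling}, enabling requires the element of $\pmv{Z}$ whose modifier set equals $\ctx(\pmv{Z}\gesrel\pmv{e})\cap s$ to have its dependency set contained in $s$. I would first record that $s$ is itself a modifier set whenever $s\cup\setenum{\pmv{e}}\in\stati$: it then lies in $\Ctr(\mathcal{E},\pmv{e})$ precisely when the transition $s\mapsto s\cup\setenum{\pmv{e}}$ is present, and in $\Rtc(\mathcal{E},\pmv{e})$ precisely when $s\cup\setenum{\pmv{e}}\notin\stati$. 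For the forward implication, if $s\mapsto s\cup\setenum{\pmv{e}}$ then $s\in\Ctr(\mathcal{E},\pmv{e})$, so $s\subseteq\ctx(\pmv{Z}\gesrel\pmv{e})$, whence $\ctx(\pmv{Z}\gesrel\pmv{e})\cap s=s$ selects the element $(s,\emptyset)$ with $\emptyset\subseteq s$, and $s\enab{\pmv{e}}$. Conversely, were the selected element drawn from $\Rtc(\mathcal{E},\pmv{e})$ its dependency set would be $\setenum{\pmv{e}}$, which cannot be contained in $s$ since $\pmv{e}\notin s$; hence the selected element comes from $\Ctr(\mathcal{E},\pmv{e})$, i.e. its modifier set is a state from which $\pmv{e}$ can legitimately be added.

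With the lemma in hand, the two components of the automaton match. For $\stati\subseteq\Conf{\toges{\ea}{\mathcal{E}}}{\Ges}$ I would use completeness (Definition~\ref{de:ea-compl}): any $s\in\stati$ is reached from the initial state $\emptyset$ along $\mapsto$, each step adding a single event at a finite state, so the lemma turns the path into an event trace witnessing $s$ as a configuration (conflict-freeness of $\toset{\rho}=s$ is immediate, all events co-occurring in the state $s$). For the reverse inclusion I would induct along an event trace of a \Ges-configuration, using the lemma at each step to promote $\toset{\rho}_{i-1}\enab{\pmv{e}_i}$ into the transition $\toset{\rho}_{i-1}\mapsto\toset{\rho}_i$ of $\mathcal{E}$, so that every prefix is a state of $\mathcal{E}$. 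The same lemma identifies $\mapsto_{\Ges}$ (Definition~\ref{de:mia-set-configuration}), which demands a one-event extension together with enabling, with the transition relation $\mapsto$ of $\mathcal{E}$. Equality of event sets and of the initial state $\emptyset$ being clear, this gives $\mathcal{E}=\grafo{\Ges}{\toges{\ea}{\mathcal{E}}}$, i.e. $\mathcal{E}\equiv\grafo{\Ges}{\toges{\ea}{\mathcal{E}}}$.

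The delicate point, which is the real content of the proof, is the converse direction of the lemma: turning an abstract enabling $s\enab{\pmv{e}}$ back into the concrete transition $s\mapsto s\cup\setenum{\pmv{e}}$. Enabling only forces $\ctx(\pmv{Z}\gesrel\pmv{e})\cap s$ to equal \emph{some} modifier set in $\Ctr(\mathcal{E},\pmv{e})$, and a priori this set could be a proper subset of $s$ rather than $s$ itself; one must therefore argue that the matching modifier set is exactly $s$ and that the witnessing transition is the one issuing from $s$. This is where I expect to rely on the fact (to be isolated explicitly) that the automata in question arise from configurations of event structures, for which a one-event extension between two states is always realised by a $\mapsto$-transition, so that $s\cup\setenum{\pmv{e}}\in\stati$ already forces $s\in\Ctr(\mathcal{E},\pmv{e})$. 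Pinning down and justifying precisely this property is the step I would treat most carefully, since without it the reconstructed automaton could in principle acquire transitions absent from $\mathcal{E}$.
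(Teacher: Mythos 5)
Your decomposition and every step you actually carry out coincide with the paper's proof. The verification of the \Ges\ axioms is the same (non-emptiness of the entry from simplicity, $\CF{X}$ because each modifier set is a state, disjointness of $\Ctr(\mathcal{E},\pmv{e})$ and $\Rtc(\mathcal{E},\pmv{e})$ for the last clause), and your forward implication is exactly the paper's: a finite state $s$ with $s\mapsto s\cup\setenum{\pmv{e}}$ lies in $\Ctr(\mathcal{E},\pmv{e})$, hence $s\subseteq\ctx(\pmv{Z}\gesrel\pmv{e})$, so $\ctx(\pmv{Z}\gesrel\pmv{e})\cap s=s$ and the element $(s,\emptyset)$ witnesses $s\enab{\pmv{e}}$; completeness then converts a path from $\instate$ into an event trace (note, in passing, that Definition~\ref{de:event-automata} only requires $s\subset s'$ for $s\mapsto s'$, so the existence of a \emph{single-event} path is itself an extra assumption both you and the paper make).

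The ``delicate point'' you isolate at the end is a genuine gap, and you are right that it is the real content of the converse: $s\enab{\pmv{e}}$ only forces $\ctx(\pmv{Z}\gesrel\pmv{e})\cap s$ to equal \emph{some} $X\in\Ctr(\mathcal{E},\pmv{e})$, and when $s$ is not itself a modifier set --- that is, when $s\cup\setenum{\pmv{e}}\in\stati$ but $\neg(s\mapsto s\cup\setenum{\pmv{e}})$ --- this intersection can be a proper subset of $s$ and produce a spurious enabling. The paper's proof does not close this step either; it simply asserts that $\mapsto_{\Ges}$ coincides with $\mapsto$. However, the auxiliary fact you propose to rely on (a one-event extension between states is always realised by a $\mapsto$-transition) is not derivable from simplicity and completeness, and it fails for some of the very automata $\grafo{\mu}{\cdot}$ the theorem is meant to cover, since for \dces, \rces\ and \Ges\ the relation $\mapsto_{\mu}$ carries an enabling side-condition. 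Concretely, take the \dces\ over $\setenum{\pmv{a},\pmv{b},\pmv{c}}$ with $\to$, $\#$, $\shrinkt$ empty and the single growing triple $\grow{\pmv{b}}{\pmv{c}}{\pmv{c}}$ (modifier $\pmv{b}$, target $\pmv{c}$, contribution $\pmv{c}$): then $\pmv{c}$ can never be added after $\pmv{b}$, so $\setenum{\pmv{b}}$ and $\setenum{\pmv{b},\pmv{c}}$ are both configurations (the latter via the trace $\pmv{c}\pmv{b}$) while $\setenum{\pmv{b}}\not\mapsto_{\dces}\setenum{\pmv{b},\pmv{c}}$. In the associated automaton one computes $\Ctr(\mathcal{E},\pmv{c})=\setenum{\emptyset,\setenum{\pmv{a}}}$ and $\Rtc(\mathcal{E},\pmv{c})=\emptyset$, so $\ctx(\pmv{Z}\gesrel\pmv{c})=\setenum{\pmv{a}}$, and at $C=\setenum{\pmv{b}}$ the intersection is $\emptyset$, the element $(\emptyset,\emptyset)$ applies, and $\pmv{c}$ is enabled: the reconstructed automaton acquires the transition $\setenum{\pmv{b}}\mapsto\setenum{\pmv{b},\pmv{c}}$ that $\mathcal{E}$ lacks. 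So this step cannot be ``justified'' from the stated hypotheses; it must either be added as a hypothesis on $\mathcal{E}$ or the construction must be amended (e.g.\ by also putting $(s,\setenum{\pmv{e}})$ into the entry for every finite state $s$ with $\pmv{e}\notin s$ and $\neg(s\mapsto s\cup\setenum{\pmv{e}})$, which makes every such $s$ a modifier set and restores the equivalence between $s\enab{\pmv{e}}$ and $s\mapsto s\cup\setenum{\pmv{e}}$). Your instinct to flag and isolate this property is exactly right; as it stands, neither your argument nor the paper's discharges it.
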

\begin{proof}
 $\toges{\ea}{\mathcal{E}}$ is clearly a \Ges: the conflict relation obtained 
 using the Definition~\ref{de:ea-confl} is trivially symmetric and it is 
 clearly irreflexive
 as each event appear in a state and the event automaton is simple and complete.
 For what concern the $\gesrel$ relation, given $\pmv{Z}\gesrel \pmv{e}$, clearly
 $\pmv{Z}\neq \emptyset$, furthermore for each $(X,Y)\in \pmv{Z}$ we have that
 $\CF{X}$, as $X$ is a finite state of the \ea, and also $\CF{Y}$ holds, as $Y$
 is either the empty set or e singleton, and finally, given two elements
 $(X,Y)$ and $(X',Y')$ of the entry $\pmv{Z}\gesrel \pmv{e}$, we have that
 always $X \neq X'$, and then also the last condition for the $\gesrel$ holds.
 
 We show now that, given a state $s\in \stati$, $s$ is also a configuration of
 the \Ges\  $\toges{\ea}{\mathcal{E}}$.
 As $\mathcal{E}$ is simple and complete, there exists a sequence of events
 $\rho = \pmv{e}_1\cdots\pmv{e}_n\cdots$ such that $\forall i \geq 1$
 $\toset{\rho_i}\in \stati$ and $\toset{\rho_i} \mapsto \toset{\rho_i}\cup\setenum{\pmv{e}_{i+1}}$.
 For each $i$, we have that $\ctx(\pmv{Z}\gesrel\pmv{e}_{i+1})\cap\toset{\rho}$ 
 is exactly $\toset{\rho_i}$, and the element
 $(\toset{\rho_i},\emptyset)$ belongs to the entry
 $\pmv{Z}\gesrel\pmv{e}_{i+1}$, hence $\toset{\rho_i}\trans{\pmv{e}_{i+1}}$ as required.
 Clearly the $\mapsto_{\toges{\ea}{\mathcal{E}}}$ coincides with $\mapsto$. 
 Finally we observe that the \Ges\ $\toges{\ea}{\mathcal{E}}$ is full and faithful. 
 
 The vice versa follows the same line. Consider a configuration 
 $C\in \Conf{\toges{\ea}{\mathcal{E}}}{\Ges}$, then there exists a sequence 
 $\rho = \pmv{e}_1\cdots\pmv{e}_n\cdots$ of events such that $\forall i \geq 1$
 it holds that  $\toset{\rho_i}\trans{\pmv{e}_{i+1}}$. We prove something
 stronger, namely that each $\toset{\rho_i}$ is a state, and we do by induction
 on the length of the sequence. The basis is trivial, assume then it hold for $n$, and
 we have that  $\toset{\rho_n}\trans{\pmv{e}_{n+1}}$. But the \ea\ $\mathcal{E}$ is simple and
 complete, and by the way the $\pmv{Z}\gesrel\pmv{e}_{n+1}$ is constructed,
 we have that $\toset{\rho_n}\cup\setenum{\pmv{e}_{n+1}}$ is a state and the thesis follows.
 Again $\mapsto_{\toges{\ea}{\mathcal{E}}}$ coincides with $\mapsto$, thus 
 $\mathcal{E} \equiv \grafo{\Ges}{\toges{\ea}{\mathcal{E}}}$.
\end{proof}

The theorem has a main consequence, namely that event automata and \Ges\ are equally expressive. 

\begin{example}
 Consider the \rces\ of the Example~\ref{ex:rces}. The associated event automaton is the one
 depicted in the Example~\ref{ex:new2res}.
 It has no conflict as all the three events are present in a 
 configuration together.
 The associated  \cd-relation, 
 obtained using Definition~\ref{de:ea-caus-pos} and Definition~\ref{de:ea-caus-neg}, is the following one, 
 which is a little different from the one devised in the Example~\ref{ex:new2res} as here it is obtained 
 from an event automaton.
 We synthesize 
 $$\{(\emptyset,\emptyset),(\setenum{\pmv{c}},\emptyset),(\setenum{\pmv{c},\pmv{b}},\emptyset),
 (\setenum{\pmv{b}},\setenum{\pmv{a}})\}\gesrel \pmv{a}$$
 because the set $\Ctr(\Conf{\mathit{R}}{\rces},\pmv{a})$ contains the sets $\emptyset$, $\setenum{\pmv{c}}$ and 
 $\setenum{\pmv{c},\pmv{b}}$, whereas the set of the \emph{negative context} $\Rtc(\Conf{\mathit{R}}{\rces},\pmv{a})$ 
 contains just $\setenum{\pmv{b}}$,
$$\{(\emptyset,\emptyset),(\setenum{\pmv{c}},\emptyset),(\setenum{\pmv{a},\pmv{c}},\emptyset),
 (\setenum{\pmv{a}},\setenum{\pmv{b}})\}\gesrel \pmv{b}$$
 as $\Ctr(\Conf{\mathit{R}}{\rces},\pmv{b})$ contains the sets $\emptyset$, $\setenum{\pmv{c}}$ and 
 $\setenum{\pmv{a},\pmv{c}}$, $\Rtc(\Conf{\mathit{R}}{\rces},\pmv{b})$ contains $\setenum{\pmv{a}}$, 
 and finally
 $$\setenum{(\emptyset,\emptyset),(\setenum{\pmv{a}},\emptyset),(\setenum{\pmv{b}},\emptyset)}
 \gesrel \pmv{c}$$
  as $\Ctr(\Conf{\mathit{R}}{\rces},\pmv{c})$ contains the sets $\emptyset$, $\setenum{\pmv{a}}$ and 
 $\setenum{\pmv{b}}$, and $\Rtc(\Conf{\mathit{R}}{\rces},\pmv{c})$ is the empty set.
\end{example} 
As a consequence of the Theorem~\ref{th:ea-to-Ges} we have the following result.
\begin{corollary}\label{co:every-evstruct-is-Ges}
  Let $\mathit{X}$ be an event structure of type $\mu$ and let $\grafo{\mu}{\mathit{X}}$ be the associated
  \ea. Then  $\toges{\ea}{\grafo{\mu}{\mathit{X}}}$ is \Ges, and 
  $\mathit{X} \equiv \toges{\ea}{\grafo{\mu}{\mathit{X}}}$.
\end{corollary}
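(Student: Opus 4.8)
The plan is to obtain the corollary as a direct instantiation of Theorem~\ref{th:ea-to-Ges}, applied to the particular event automaton $\mathcal{E} = \grafo{\mu}{\mathit{X}}$ associated to $\mathit{X}$. Since Theorem~\ref{th:ea-to-Ges} already establishes, for an \emph{arbitrary} simple and complete \ea\ covering all its events, that its $\toges{\ea}{\cdot}$-image is a \Ges\ equivalent to it, the only work left is to verify that $\grafo{\mu}{\mathit{X}}$ meets the three hypotheses of that theorem; no genuine computation is involved.

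First I would recall that, by Theorem~\ref{th:es-to-ea}, $\grafo{\mu}{\mathit{X}} = \tuple{\event{E}, \Conf{\mathit{X}}{\mu}, \mapsto_{\mu}, \emptyset}$ is indeed an event automaton, so the construction $\toges{\ea}{\cdot}$ is applicable to it and $\toges{\ea}{\grafo{\mu}{\mathit{X}}}$ is well defined. Next, by Proposition~\ref{pr:ea-conf-simple-complete}, this automaton is both \emph{simple} and \emph{complete}, which discharges two of the three hypotheses of Theorem~\ref{th:ea-to-Ges} at once, leaving only the coverage condition $\event{E} = \bigcup_{s\in\stati} s$ (with $\stati = \Conf{\mathit{X}}{\mu}$) to check.

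The remaining hypothesis asserts that every event of $\mathit{X}$ occurs in at least one configuration, which is precisely the \emph{fullness} assumption under which the whole of Section~\ref{sec:relating} operates; it therefore holds by hypothesis. I would flag that this is the one point where the appeal goes beyond the two cited results, and that it is essential rather than cosmetic: it is exactly the condition guaranteeing that the semantic conflict relation $\#_{\ea}$ of Definition~\ref{de:ea-confl} is irreflexive, since an event appearing in no state would be in conflict with itself. Thus fullness is what makes $\toges{\ea}{\grafo{\mu}{\mathit{X}}}$ a legitimate \Ges.

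With all three hypotheses in place, Theorem~\ref{th:ea-to-Ges} yields directly that $\toges{\ea}{\grafo{\mu}{\mathit{X}}}$ is a \Ges\ and that $\grafo{\mu}{\mathit{X}} \equiv \grafo{\Ges}{\toges{\ea}{\grafo{\mu}{\mathit{X}}}}$. Unfolding Definition~\ref{de:es-equivalence}, this last statement is the equality $\grafo{\mu}{\mathit{X}} = \grafo{\Ges}{\toges{\ea}{\grafo{\mu}{\mathit{X}}}}$, which is by definition $\mathit{X} \equiv \toges{\ea}{\grafo{\mu}{\mathit{X}}}$, completing the argument. The only subtlety worth emphasising is the implicit use of fullness; everything else is pure bookkeeping in chaining Theorem~\ref{th:es-to-ea}, Proposition~\ref{pr:ea-conf-simple-complete} and Theorem~\ref{th:ea-to-Ges}.
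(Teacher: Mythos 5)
Your proposal is correct and matches the paper's intent exactly: the corollary is presented as an immediate consequence of Theorem~\ref{th:ea-to-Ges}, obtained by chaining Theorem~\ref{th:es-to-ea} and Proposition~\ref{pr:ea-conf-simple-complete} and invoking the standing fullness assumption of Section~\ref{sec:relating} to discharge the coverage hypothesis $\event{E} = \bigcup_{s\in\stati} s$. Your explicit remark that fullness is what makes the semantic conflict relation $\#_{\ea}$ irreflexive is a useful clarification of a point the paper leaves implicit in the corollary itself (it appears only inside the proof of Theorem~\ref{th:ea-to-Ges}).
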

 The construction identifies properly the \emph{context} in which an event is allowed to happen, and this 
 context
 becomes the main ingredient of the \cd-relation, as the construction does not give the \emph{causes} but
 just the context. 
 If on the one hand this suggests that the context, rather than the causal dependencies, is the relevant
 ingredient,
 on the other hand it is less informative with respect to the usual causality definitions.

 We review some kind of event structures, showing that a more informative \cd-relation can be indeed
 obtained. We will focus only on few of them.

\subsection{Prime event structures:}\ In this case the idea is that causes of an event are just the
set of events that should be present in the configuration. 

\begin{proposition}\label{pr:pes-to-Ges}
 Let $\mathit{P} = (\event{E}, \leq, \#)$ be a \pes. Then $\toges{\pes}{\mathit{P}} = 
 (\event{E}, \#, \gesrel)$ is a \Ges, where 
 $\setenum{(\emptyset,\hist{\pmv{e}}\setminus\setenum{\pmv{e}})}\gesrel \pmv{e}$ for 
 each $\pmv{e}\in\event{E}$. Furthermore $\mathit{P} \equiv \toges{\pes}{\mathit{P}}$.
\end{proposition}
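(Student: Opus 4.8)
The plan is to verify two things in turn: that $\toges{\pes}{\mathit{P}}$ is a legitimate \Ges\ in the sense of Definition~\ref{de:mia-es}, and that it is equivalent to $\mathit{P}$ in the sense of Definition~\ref{de:es-equivalence}, i.e.\ that $\grafo{\pes}{\mathit{P}} = \grafo{\Ges}{\toges{\pes}{\mathit{P}}}$. For well-formedness, the conflict relation is copied verbatim from $\mathit{P}$, so it is irreflexive and symmetric by Definition~\ref{de:pes-winskel}. For the \cd-relation, each event $\pmv{e}$ carries the single entry $\setenum{(\emptyset,\hist{\pmv{e}}\setminus\setenum{\pmv{e}})}\gesrel\pmv{e}$, which is nonempty; its only element has empty (hence conflict free) modifier set, and the requirement that equal modifiers force equal dependencies holds vacuously for a singleton. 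The one point needing the \pes\ axioms is $\CF{\hist{\pmv{e}}\setminus\setenum{\pmv{e}}}$: if two causes $\pmv{e}_1,\pmv{e}_2\leq\pmv{e}$ were in conflict, conflict inheritance would give $\pmv{e}_1\sharpbin\pmv{e}$, contradicting $\mathord{\leq}\cap\mathord{\#}=\emptyset$ since $\pmv{e}_1\leq\pmv{e}$.

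Next I would record the shape of enabling for this \Ges. For the unique entry, $\ctx(\pmv{Z}\gesrel\pmv{e})=\emptyset$, so the condition $\ctx(\pmv{Z}\gesrel\pmv{e})\cap C = X_i$ of Definition~\ref{de:mia-es-enabling} is met by $X_i=\emptyset$, and enabling $C\enab{\pmv{e}}$ collapses to the single requirement $\hist{\pmv{e}}\setminus\setenum{\pmv{e}}\subseteq C$. This is exactly the left-closure condition read off one event at a time, and it is the bridge between the two models.

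For equivalence, by Theorem~\ref{th:es-to-ea} both automata share the event set $\event{E}$ and initial state $\emptyset$, so I must match the configuration sets and the $\mapsto$ relations. The inclusion $\Conf{\toges{\pes}{\mathit{P}}}{\Ges}\subseteq\Conf{\mathit{P}}{\pes}$ is immediate: a witnessing trace $\rho$ gives $\CF{\overline{\rho}}$ directly, and at each step $\hist{\pmv{e}_i}\setminus\setenum{\pmv{e}_i}\subseteq\overline{\rho}_{i-1}\subseteq C$, so every $\pmv{e}\in C$ has $\hist{\pmv{e}}\subseteq C$, i.e.\ $C$ is a \pes-configuration (Definition~\ref{de:conf-pes}). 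The converse is where the only real work lies: given a conflict free, left-closed $C$, I must exhibit an event trace. Since each $\hist{\pmv{e}}$ is finite and $\leq$ is well founded, $\leq$ restricted to $C$ is a well-founded order with finite down-sets, which admits an enumeration $\pmv{e}_1\pmv{e}_2\cdots$ in which every event is preceded by all of its strict causes; conflict-freeness is inherited from $C$, and the enabling at each step is precisely $\hist{\pmv{e}_i}\setminus\setenum{\pmv{e}_i}\subseteq\overline{\rho}_{i-1}$, which holds by construction. The hard part will be making this enumeration rigorous for infinite $C$ (building a single countable sequence that exhausts $C$ while respecting causality), which is where the finiteness of histories must be used carefully.

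Finally, having identified the configuration sets, I would match the step relations. By Definition~\ref{de:mapsto-pes}, $C\mapsto_{\pes}C'$ merely asks that $C,C'$ be configurations with $C'=C\cup\setenum{\pmv{e}}$ and $C\subset C'$; for such a pair, left-closure of $C'$ forces $\hist{\pmv{e}}\setminus\setenum{\pmv{e}}\subseteq C'\setminus\setenum{\pmv{e}}=C$, hence $C\enab{\pmv{e}}$, which is exactly $C\mapsto_{\Ges}C'$ of Definition~\ref{de:mia-set-configuration}. The reverse implication is trivial, since $\mapsto_{\Ges}$ also only relates configurations differing by one event. Thus the two relations coincide and $\grafo{\pes}{\mathit{P}}=\grafo{\Ges}{\toges{\pes}{\mathit{P}}}$, giving $\mathit{P}\equiv\toges{\pes}{\mathit{P}}$.
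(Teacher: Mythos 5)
Your proposal is correct and follows essentially the same route as the paper: reduce the \Ges-enabling condition to $\hist{\pmv{e}}\setminus\setenum{\pmv{e}}\subseteq C$, linearize \pes-configurations compatibly with $\leq$ (using finiteness of histories) to get event traces, read off left-closure from traces in the other direction, and observe the step relations coincide. You are in fact slightly more careful than the paper, which dismisses well-formedness as trivial, whereas you correctly note that $\CF{\hist{\pmv{e}}\setminus\setenum{\pmv{e}}}$ genuinely needs conflict inheritance together with $\mathord{\leq}\cap\mathord{\#}=\emptyset$; the remaining issue you flag about enumerating infinite configurations is present in the paper's own argument as well.
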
 
\begin{proof}
 $\toges{\pes}{\mathit{P}}$ is trivially a \Ges. We show that 
 $\Conf{\toges{\pes}{\mathit{P}}}{\Ges} = \Conf{P}{\pes}$.
 First we observe that, for each event $\pmv{e}$, we have that $\hist{\pmv{e}}\in \Conf{P}{\pes}$. 
 Clearly to $\hist{\pmv{e}}$ we can associate a sequence $\rho = \pmv{e}_1\cdots \pmv{e}_n$ with
 $\pmv{e}_n = \pmv{e}$ and for each $1 \leq i\leq n$ we have that 
 $\toset{\rho_i} \in \Conf{P}{\pes}$. This sequence can be used to show that also 
 each $\toset{\rho_i} \in \Conf{\toges{\pes}{\mathit{P}}}{\Ges}$. 
 Given a configuration $C\in \Conf{\mathit{P}}{\pes}$ we can associate
 to it a sequence $\rho = \pmv{e}_1\cdots \pmv{e}_n\cdots$ and for each $i$ we have again
 that $\toset{\rho_i} \in \Conf{P}{\pes}$. Now for each $\pmv{e}_{i+i}$ we have that
 $\toset{\rho_i} \cap 
 \ctx(\setenum{(\emptyset,\hist{\pmv{e}_{i+i}}\setminus\setenum{\pmv{e}_{i+i}})}\gesrel \pmv{e}_{i+i}) =
 \emptyset$ and clearly $\hist{\pmv{e}_{i+i}}\setminus\setenum{\pmv{e}_{i+i}}\subseteq \toset{\rho_i}$
 which means that $\pmv{e}_{i+i}$ is enabled at $\toset{\rho_i}$, hence 
 $C\in \Conf{\toges{\pes}{\mathit{P}}}{\Ges}$. This proves that each configuration of a \pes\ 
 $\mathit{P}$ is also a configuration of the \Ges\ $\toges{\pes}{\mathit{P}}$.
 The vice versa holds as well observing that, given a configuration
 $C\in \Conf{\toges{\pes}{\mathit{P}}}{\Ges}$ and the associated 
 $\rho = \pmv{e}_1\cdots \pmv{e}_n\cdots$, the fact that $\toset{\rho_i}\trans{\pmv{e}_{i+i}}$
 means that $\hist{\pmv{e}_{i+i}}\subseteq \toset{\rho_{i+1}} \subseteq C$.
 The $\mapsto_{\pes}$ and $\mapsto_{\Ges}$ associated at these event structures clearly coincide. 
 Hence $\mathit{P} \equiv \toges{\pes}{\mathit{P}}$.
\end{proof}
The \cd-relation defined in the previous proposition is a bit more informative with respect to
the one defined in Theorem~\ref{th:ea-to-Ges}. In fact it captures the intuition that in a \pes\
there are not modifiers. 
We stress that is not the unique way to associate to the causality relation $\leq$ of a \pes\  the 
$\gesrel$ relation: one alternative would have been to add 
$\setenum{(\emptyset,\setenum{\pmv{e}'})}\gesrel \pmv{e}$
for each $\pmv{e}' < \pmv{e}$ and another one would be 
$\setenum{(\hist{\pmv{e}}\setminus\setenum{\pmv{e}},\emptyset)}\gesrel \pmv{e}$ showing that the
events causally before $\pmv{e}$ are indeed the context allowing the event $\pmv{e}$ to happen.

\begin{example}
 Consider the \pes\ $(\setenum{\pmv{a}, \pmv{b}, \pmv{c}}, \leq, \#)$ where $\pmv{a}\leq \pmv{b}$ (we omit 
 the reflexive part of the $\leq$ relation),  
 $\pmv{a}\sharpbin\pmv{c}$ and $\pmv{b}\sharpbin\pmv{c}$. The event traces are 
 $\epsilon$, $\pmv{a}$, $\pmv{a}\pmv{b}$
 and $\pmv{c}$, and the associated configurations are $\emptyset$, 
 $\setenum{\pmv{a}}$, $\setenum{\pmv{a}, \pmv{b}}$
 and $\setenum{\pmv{c}}$ (the $\mapsto_{\pes}$ relation is obvious).
 The conflict relation is the same and the
 \cd-relation is $\setenum{(\emptyset,\emptyset)}\gesrel \pmv{a}$, 
 $\setenum{(\emptyset,\emptyset)}\gesrel \pmv{c}$
 and $\setenum{(\emptyset,\setenum{\pmv{a}})}\gesrel \pmv{b}$. As noticed before we could have 
 stipulated also
 $\setenum{(\setenum{\pmv{a}},\emptyset)}\gesrel \pmv{b}$ instead of $\setenum{(\emptyset,\setenum{\pmv{a}})}\gesrel \pmv{b}$ obtaining the same set of configurations and the same transition graph.
\end{example}

\subsection{Relaxed prime event structures:}\ Also in this case the idea is similar to
\pes, as the set of immediate causes of an event $\pmv{e}$ 
can be considered akin to the local configuration of $\pmv{e}$.

\begin{proposition}\label{pr:rpes-to-Ges}
 Let $\mathit{T} = (\event{E}, \#, \to)$ be a r\pes. Then $\toges{r\pes}{\mathit{T}} = 
 (\event{E}, \#, \gesrel)$ is a \Ges, where 
 $\setenum{(\emptyset,\ic(\pmv{e}))}\gesrel \pmv{e}$ for 
 each $\pmv{e}\in\event{E}$. Furthermore $\mathit{T} \equiv \toges{r\pes}{\mathit{T}}$.
\end{proposition}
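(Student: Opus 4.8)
The plan is to follow the pattern of Proposition~\ref{pr:pes-to-Ges}, the only difference being that the single dependency set attached to each event is now its set of immediate causes $\ic(\pmv{e})$ rather than $\hist{\pmv{e}}\setminus\setenum{\pmv{e}}$. First I would check that $\toges{r\pes}{\mathit{T}}$ is a \Ges. For each $\pmv{e}$ the entry is the singleton $\setenum{(\emptyset,\ic(\pmv{e}))}\gesrel\pmv{e}$, which is non-empty, trivially satisfies the distinct-modifiers clause (there is only one element), and has $\CF{\emptyset}$; the only genuine point is $\CF{\ic(\pmv{e})}$. Since we work with full event structures, $\pmv{e}$ lies in some configuration $C\in\Conf{\mathit{T}}{r\pes}$, and then $\ic(\pmv{e})\subseteq C$ together with $\CF{C}$ forces $\CF{\ic(\pmv{e})}$. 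The conflict relation is carried over unchanged, so it remains irreflexive and symmetric. Next I would record the computation that drives everything: since $\ctx(\setenum{(\emptyset,\ic(\pmv{e}))}\gesrel\pmv{e}) = \emptyset$, Definition~\ref{de:mia-es-enabling} collapses to
\[
 C\enab{\pmv{e}} \iff \ic(\pmv{e})\subseteq C,
\]
because the condition $\ctx\cap C = \emptyset$ holds automatically and only $Y=\ic(\pmv{e})\subseteq C$ survives.

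With this identity the equivalence of the two sets of configurations is almost immediate. For the inclusion $\Conf{\toges{r\pes}{\mathit{T}}}{\Ges}\subseteq\Conf{\mathit{T}}{r\pes}$, take a \Ges-configuration $C$ with witnessing sequence $\rho$: then $\CF{C}$ holds, and for every $\pmv{e}=\pmv{e}_i\in C$ the enabling $\toset{\rho_{i-1}}\enab{\pmv{e}_i}$ gives $\ic(\pmv{e}_i)\subseteq\toset{\rho_{i-1}}\subseteq C$, which is exactly the closure condition of Definition~\ref{de:rpes-conf}. For the converse $\Conf{\mathit{T}}{r\pes}\subseteq\Conf{\toges{r\pes}{\mathit{T}}}{\Ges}$, take a r\pes-configuration $C$. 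By Proposition~\ref{pr:conf-rpes-are-po} the relation $\to^{\ast}\cap\,C\times C$ is a partial order, so $C$ can be enumerated as $\rho=\pmv{e}_1\pmv{e}_2\cdots$ compatibly with it; then every $\pmv{e}'\in\ic(\pmv{e}_i)\subseteq C$ precedes $\pmv{e}_i$, \emph{i.e.}\ $\ic(\pmv{e}_i)\subseteq\toset{\rho_{i-1}}$, so by the identity above $\toset{\rho_{i-1}}\enab{\pmv{e}_i}$ and $C$ is a \Ges-configuration.

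Finally I would observe that the two transition relations coincide. Using the irreflexivity of $\to^{+}$ on $\ic(\pmv{e})$ one gets $\pmv{e}\notin\ic(\pmv{e})$, so whenever $C'=C\cup\setenum{\pmv{e}}$ are both configurations the closure condition yields $\ic(\pmv{e})\subseteq C'\setminus\setenum{\pmv{e}}=C$, hence $C\enab{\pmv{e}}$; this makes $\mapsto_{r\pes}$ of Definition~\ref{de:mapsto-rpes} and $\mapsto_{\Ges}$ of Definition~\ref{de:mia-set-configuration} agree on the common set of configurations, so by Definition~\ref{de:es-equivalence} we conclude $\mathit{T}\equiv\toges{r\pes}{\mathit{T}}$.

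The step requiring the most care is the enumeration used in the converse inclusion. For a finite configuration a linear extension of the partial order is a topological sort and the argument is routine, but for an infinite configuration a linear extension placing each (finite) set of immediate causes strictly before its event need not exist from the partial-order property alone. Here I would lean on the standing assumption that every configuration is reachable from $\emptyset$ — equivalently on the completeness of $\grafo{r\pes}{\mathit{T}}$ established in Proposition~\ref{pr:ea-conf-simple-complete} — so that the reaching sequence of single-event additions is itself the witnessing event trace $\rho$ with $\ic(\pmv{e}_i)\subseteq\toset{\rho_{i-1}}$ at each step.
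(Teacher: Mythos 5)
Your proposal is correct and follows essentially the same route as the paper, whose own proof is just a one-line reference to the \pes\ case with $\ic(\pmv{e})$ substituted for $\hist{\pmv{e}}\setminus\setenum{\pmv{e}}$; your reduction of enabling to $\ic(\pmv{e})\subseteq C$ and the two inclusions are exactly that argument spelled out. The only point where you go beyond the paper is the explicit appeal, for infinite configurations, to the standing assumption that every configuration is reachable from $\emptyset$ when producing the witnessing event trace --- a subtlety the paper silently elides, and which your resolution handles correctly.
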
 
\begin{proof}
 Like the one of Proposition~\ref{pr:pes-to-Ges}, observing that 
 $\ic(\pmv{e})$ plays the same role as $\hist{\pmv{e}}$.
\end{proof}
Observe that in case an event $\pmv{e}$ has no causes, the $\ic(\pmv{e})$ set is empty, as expected.
Again this is not the unique way to associate to the causality relation $\to$ of a r\pes\  the 
$\gesrel$ relation. We could have defined $\setenum{(\emptyset,\setenum{\pmv{e}'})}\gesrel \pmv{e}$
for each $\pmv{e}' \to \pmv{e}$, or  
$\setenum{(\ic(\pmv{e}),\emptyset)}\gesrel \pmv{e}$.

\subsection{Flow event structure:}\
To give a direct translation of a \fes\ into a \Ges\ we need some auxiliary notation. Let 
$\mathit{F} = (\event{E},\prec, \#)$, with 
$\fl{\pmv{e}} = \setcomp{\pmv{e}'\in \event{E}}{\pmv{e}' \prec \pmv{e}}$ we denote the set of events 
\emph{preceding} $\pmv{e}$. Observe that this set may contains conflicting events.
Starting form the set $\fl{\pmv{e}}$ we identifies the subsets of events that are
conflict free and maximal. Thus we have the
set $\maxfl{\pmv{e}} = \setcomp{X\subseteq \fl{\pmv{e}}}{\CF{X}\ \land\ 
\neg \CF{X\cup\setenum{\pmv{e}'}}\ \mathit{with}\ \pmv{e}'\in \fl{\pmv{e}}\setminus X}$
which contains all subsets $X$ of $\fl{\pmv{e}}$ which are conflict free ($\CF{X}$) and
maximal ($\neg \CF{X\cup\setenum{\pmv{e}'}}$ where $\pmv{e}'\in \fl{\pmv{e}}\setminus X$).
These subsets will play the same role played by the immediate causes for an event.

\begin{proposition}\label{pr:fes-to-Ges}
 Let $\mathit{F} = (\event{E}, \prec, \#)$ be a full and faithful \fes. 
 Then $\toges{\fes}{\mathit{F}} = 
 (\event{E}, \#, \gesrel)$ is a \Ges, where for each $\pmv{e}\in\event{E}$ we have
 $\setcomp{(X,\emptyset)}{X\in\maxfl{\pmv{e}}} \gesrel \pmv{e}$. Furthermore $\mathit{F} \equiv \toges{\fes}{\mathit{F}}$.
\end{proposition}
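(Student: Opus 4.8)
The plan is to prove the statement in two stages: first that $\toges{\fes}{\mathit{F}}$ satisfies the axioms of a \Ges, and then that it is equivalent to $\mathit{F}$ in the sense of Definition~\ref{de:es-equivalence}, i.e. that the two induced event automata coincide. The first stage is routine: for every $\pmv{e}$ the entry is $\pmv{Z}=\setcomp{(X,\emptyset)}{X\in\maxfl{\pmv{e}}}$; since the empty set is conflict free and every conflict free subset of $\fl{\pmv{e}}$ extends to a maximal one, $\maxfl{\pmv{e}}\neq\emptyset$, so $\pmv{Z}\neq\emptyset$; each $X\in\maxfl{\pmv{e}}$ is conflict free by construction and each second component $\emptyset$ is trivially conflict free; and distinct elements of $\pmv{Z}$ carry distinct first components (with common second component $\emptyset$), so the final requirement of Definition~\ref{de:mia-es} holds vacuously. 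Hence $\toges{\fes}{\mathit{F}}$ is a \Ges.

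The heart of the argument is a single enabling lemma. First I would observe that $\ctx(\pmv{Z}\gesrel\pmv{e})=\bigcup\maxfl{\pmv{e}}$, and that this union is exactly $\setcomp{\pmv{e}'\in\fl{\pmv{e}}}{\neg(\pmv{e}'\sharpbin\pmv{e}')}$, since a flow predecessor lies in some maximal conflict free subset precisely when it is not self conflicting. Here \emph{fullness} enters decisively: a self conflicting event can belong to no conflict free set, hence to no configuration, so fullness forces $\#$ to be irreflexive and therefore $\ctx(\pmv{Z}\gesrel\pmv{e})=\fl{\pmv{e}}$. Consequently, for any set $C$, the element $(X,\emptyset)$ witnessing $C\enab{\pmv{e}}$ must satisfy $X=\fl{\pmv{e}}\cap C$, so that $C\enab{\pmv{e}}$ holds iff $\fl{\pmv{e}}\cap C\in\maxfl{\pmv{e}}$. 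The lemma then states: if $C$ and $C\cup\setenum{\pmv{e}}$ are both \fes-configurations, then $\fl{\pmv{e}}\cap C$ is a maximal conflict free subset of $\fl{\pmv{e}}$, whence $C\enab{\pmv{e}}$. The maximality is precisely the flow clause of Definition~\ref{de:conf-fes} applied to $\pmv{e}$ in the configuration $C\cup\setenum{\pmv{e}}$: any $\pmv{e}'\in\fl{\pmv{e}}\setminus(\fl{\pmv{e}}\cap C)$ is absent from the configuration, so the flow clause supplies a conflicting witness already in $\fl{\pmv{e}}\cap C$ (again using irreflexivity of $\#$ to exclude self conflict).

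With the lemma in hand I would prove $\Conf{\mathit{F}}{\fes}=\Conf{\toges{\fes}{\mathit{F}}}{\Ges}$ by induction along event traces. For the left to right inclusion, given $C\in\Conf{\mathit{F}}{\fes}$ I would enumerate $C$ by a linear extension of the partial order $\prec^{\ast}\cap\,C\times C$, so that every $\prec^{\ast}$-predecessor of an event precedes it; each prefix $\overline{\rho}_i$ is then checked to be a \fes-configuration (conflict freeness and the partial order clause are inherited by restriction, and the flow clause is inherited because any witness in $C$ precedes the event and hence already lies in the prefix), and the lemma yields $\overline{\rho}_{i-1}\enab{\pmv{e}_i}$, so $C$ is a \Ges-configuration. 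For the converse, starting from a \Ges-configuration with trace $\rho$ I would show by induction that each $\overline{\rho}_i$ is a \fes-configuration: conflict freeness is immediate; the flow clause for old events is preserved because enlarging a configuration only makes it easier to satisfy; the flow clause for the freshly added $\pmv{e}_i$ is exactly $\fl{\pmv{e}_i}\cap\overline{\rho}_{i-1}\in\maxfl{\pmv{e}_i}$, guaranteed by $\overline{\rho}_{i-1}\enab{\pmv{e}_i}$; and absence of a $\prec^{\ast}$-cycle follows because any newly created cycle would, at some earlier step $k$, exhibit $\pmv{e}_i$ as a flow predecessor of $\pmv{e}_k$ lying outside the maximal set $X_k$ chosen there, whence maximality forces a conflict between $\pmv{e}_i$ and some event of $X_k\subseteq\overline{\rho}_i$, contradicting $\CF{\overline{\rho}}$.

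Finally, the relations $\mapsto_{\fes}$ and $\mapsto_{\Ges}$ coincide: both relate configurations $C\subset C'$ with $C'=C\cup\setenum{\pmv{e}}$, and the only extra demand of $\mapsto_{\Ges}$, namely $C\enab{\pmv{e}}$, is delivered by the enabling lemma once both sets are known to be configurations. This yields $\grafo{\fes}{\mathit{F}}=\grafo{\Ges}{\toges{\fes}{\mathit{F}}}$ and hence $\mathit{F}\equiv\toges{\fes}{\mathit{F}}$. I expect the main obstacle to be the verification of the partial order (acyclicity) clause in the right to left inclusion together with the exact matching of $\maxfl{\pmv{e}}$ against the flow clause of Definition~\ref{de:conf-fes}; both steps hinge on the absence of self conflicting events, which is exactly what \emph{fullness} buys us, while \emph{faithfulness} guarantees that the conflict relation transported to $\toges{\fes}{\mathit{F}}$ already captures every semantic conflict, so that no spurious configuration is created.
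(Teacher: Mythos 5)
Your proposal is correct and follows essentially the same route as the paper's own proof: verify the \Ges\ axioms, match the enabling condition $\fl{\pmv{e}}\cap C\in\maxfl{\pmv{e}}$ against the flow clause of Definition~\ref{de:conf-fes} in both directions along event traces (using a linearization of $\prec^{\ast}\cap C\times C$ one way and maximality of the selected $X$ to supply the conflicting witness the other way), and observe that the $\mapsto$ relations coincide. The only difference is one of packaging and care: you isolate the characterization of enabling as an explicit lemma, justify $\ctx(\pmv{Z}\gesrel\pmv{e})=\fl{\pmv{e}}$ via fullness, and sketch an argument for the acyclicity clause that the paper simply declares trivial.
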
 
\begin{proof}
 Clearly $\toges{\fes}{\mathit{F}}$ is a \Ges. The 
 conflict relation is irreflexive, and this is a consequence that $\mathit{F}$ is full and
 faithful, and each event $\pmv{e}$ is such that there is an
 entry $\pmv{Z}\gesrel \pmv{e}$, as in the case $\fl{\pmv{e}} = \emptyset$ we have
 that $\maxfl{\pmv{e}}$ contains just the emptyset.
 
 Now we prove that $\Conf{\toges{\fes}{\mathit{F}}}{\Ges} = \Conf{F}{\fes}$.
 Consider $C\in \Conf{F}{\fes}$. It is conflict free and the transitive and
 reflexive closure of $\prec$ is a partial ordering on $C$. Consider then  
 the sequence $\rho = \pmv{e}_1\pmv{e}_2\cdots\pmv{e}_n\cdots$ compatible with
 the $\prec^{*}$, we have to show that for each $i\geq 1$ we have that
 $\toset{\rho_i}\trans{\pmv{e}_{i+1}}$. Consider $\pmv{e}\prec \pmv{e}_{i+1}$ and
 assume that $\pmv{e}\not\in \toset{\rho_i}$, then there must be
 a $\pmv{e}'\sharpbin\pmv{e}$ such that $\pmv{e}'\prec \pmv{e}_{i+1}$ and 
 $\pmv{e}'\in \toset{\rho_i}$, and this for all the events in $\fl{\pmv{e}_{i+1}}$,
 which means that there must be a maximal subset of non conflicting events that
 are in the flow relation with $\pmv{e}_{i+1}$, but this implies that
 $\ctx(\pmv{Z}\gesrel \pmv{e}_{i+1})\cap \toset{\rho_i} = X \in \maxfl{\pmv{e}_{i+1}}$, and
 we can conclude that $C$ is a configuration of $\toges{\fes}{\mathit{F}}$.
 
 Consider $C\in \Conf{\toges{\fes}{\mathit{F}}}{\Ges}$. Then there exists a sequence 
 $\rho = \pmv{e}_1\pmv{e}_2\cdots\pmv{e}_n\cdots$ such that for each $i\geq 1$, 
 $\toset{\rho_{i-1}}\trans{\pmv{e}_{i}}$. This means that
 $\ctx(\pmv{Z}\gesrel \pmv{e}_{i})\cap \toset{\rho_{i-1}} = X \in \maxfl{\pmv{e}_{i}}$.
 Now consider $\pmv{e}\prec \pmv{e}_{i}$ and assume that $\pmv{e}\not\in\toset{\rho_{i-1}}$,
 as $X\in \maxfl{\pmv{e}_{i}}$ by maximality of $X$ we have that there is an
 event $\pmv{e}_j\in\toset{\rho_{i-1}}$ with $j<i$ such that $\pmv{e}_j\sharpbin\pmv{e}$.
 The fact that the reflexive and transitive closure of $\prec$ is a partial order 
 on $C$ is trivial. We can conclude that $C\in \Conf{F}{\fes}$.
 
 The $\mapsto$ relations induced by the two event structure trivially coincide,
 and the thesis $\mathit{F} \equiv \toges{\fes}{\mathit{F}}$ follows.
\end{proof}

\begin{example}\label{ex:fes-to-Ges}
 Consider the flow event structure $\mathit{F}$ with $4$ events $\pmv{e}_1, \pmv{e}_2, \pmv{e}_3$ and
 $\pmv{e}$ and where $\pmv{e}_i \prec \pmv{e}$, for $1\leq i\leq 3$, and
 $\pmv{e}_1\sharpbin\pmv{e}_2$. We have that $\maxfl{\pmv{e}_i} = \setenum{\emptyset}$ for
 $1\leq i\leq 3$ and
 $\maxfl{\pmv{e}} = \setenum{\setenum{\pmv{e}_1, \pmv{e}_3}, \setenum{\pmv{e}_2, \pmv{e}_3}}$
 and then we have $\setenum{(\emptyset,\emptyset)}\gesrel \pmv{e}_i$ for $1\leq i\leq 3$ and
 $\setenum{(\setenum{\pmv{e}_1, \pmv{e}_3},\emptyset),(\setenum{\pmv{e}_2, \pmv{e}_3},\emptyset)}
 \gesrel \pmv{e}$, and the conflict relation is $\pmv{e}_1\sharpbin\pmv{e}_2$.
 The configurations of this \fes\ (and of the \Ges\ $\toges{\fes}{\mathit{F}}$) are depicted in
 Figure~\ref{fig:fes-to-Ges}.
 
 \begin{figure}[ht]
 \centerline{\scalebox{1.1}{\begin{tikzpicture}
[bend angle=45, scale=0.9, 
  read/.style={-,shorten
    <=0pt,thick},
  pre/.style={<-,shorten
    <=0pt,>=stealth,>={Latex[width=1mm,length=1mm]},thick}, post/.style={->,shorten
    >=0,>=stealth,>={Latex[width=1mm,length=1mm]},thick},place/.style={circle, draw=black,
    thick,minimum size=4mm}, transition/.style={rectangle, draw=black!0,
    thick,minimum size=4mm}, invplace/.style={circle,
    draw=black!0,thick}]
\node[transition] (spe) at (0,1.5) {};  
\node[transition] (in) at (0,0) {$\emptyset$};
\node[transition] (a) at (1.8,1) {$\setenum{\pmv{e}_2}$}
edge[pre] (in);
\node[transition] (b) at (1.8,-1) {$\setenum{\pmv{e_1}}$}
edge[pre] (in);
\node[transition] (c) at (1.8,0) {$\setenum{\pmv{e_3}}$}
edge[pre] (in);
\node[transition] (bc) at (4,-1) {$\setenum{\pmv{e}_1,\pmv{e}_3}$}
edge[pre] (c)
edge[pre] (b);
\node[transition] (ac) at (4,1) {$\setenum{\pmv{e}_2,\pmv{e}_3}$}
edge[pre] (a)
edge[pre] (c);
\node[transition] (abc) at (7,1) {$\setenum{\pmv{e}_1,\pmv{e}_3,\pmv{e}}$}
edge[pre] (ac);
\node[transition] (abc) at (7,-1) {$\setenum{\pmv{e}_2,\pmv{e}_3,\pmv{e}}$}
edge[pre] (bc);
\end{tikzpicture}}}
 \caption{The configurations of the \fes\ and \Ges\ in the 
 Example~\ref{ex:fes-to-Ges}}\label{fig:fes-to-Ges}
 \end{figure}
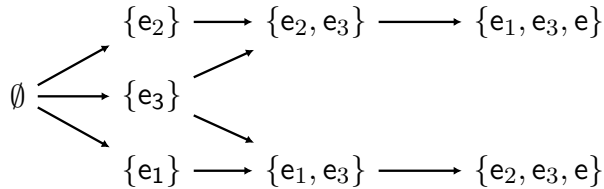
\end{example}

\subsection{Dynamic causality event structures:}\ 
The intuition in this case is to code all the possible subsets of modifiers for a given target,
and for each subset of modifiers, determine what is the set of enabling events. In this way 
the $\gesrel$ relation can be easily obtained.
\begin{proposition}\label{pr:dces-to-Ges}
 Let $\mathit{D} = (\event{E}, \#, \to, \shrinkt, \growt)$ be a \dces.\  $\toges{\dces}{\mathit{D}} = 
 (\event{E}, \#,$ ${\gesrel)}$ is a \Ges\ where the relation $\gesrel$ is defined as 
 $\setcomp{\big(X,\big(\ic(\pmv{e}) \setminus\bigcup_{\pmv{e}'\in X}\drop{\pmv{e}'}{\pmv{e}}\big)\ \cup\  
 \bigcup_{\pmv{e}'\in X}\agg{\pmv{e}'}{\pmv{e}}\big)}{X\subseteq \groset{\pmv{e}}\cup\shrset{\pmv{e}}}
 \gesrel \pmv{e}$ for each $\pmv{e}\in\event{E}$.
 Furthermore $\mathit{D}\equiv \toges{\dces}{\mathit{D}}$.
\end{proposition}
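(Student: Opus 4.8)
The plan is to reduce the whole equivalence to a single per-step identity: for any conflict-free $C\subseteq\event{E}$ and any $\pmv{e}\notin C$, the \Ges-enabling $C\enab{\pmv{e}}$ in $\toges{\dces}{\mathit{D}}$ holds exactly when the \dces-side step condition $((\ic(\pmv{e})\cup\ac(C,\pmv{e}))\setminus\dc(C,\pmv{e}))\subseteq C$ of Definition~\ref{de:dces-conf} holds. Once this is established, both configuration notions are carved out by the same event traces (both demand $\CF{\overline{\rho}}$ for the same relation $\#$, together with the step condition at each prefix $\overline{\rho_{i-1}}$), so $\Conf{\toges{\dces}{\mathit{D}}}{\Ges}=\Conf{\mathit{D}}{\dces}$; and since $\mapsto_{\Ges}$ and $\mapsto_{\dces}$ are both ``add a single event for which the step condition holds'', they coincide as well, yielding $\mathit{D}\equiv\toges{\dces}{\mathit{D}}$.

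First I would check that $\toges{\dces}{\mathit{D}}$ is a \Ges. For each $\pmv{e}$ the entry is nonempty, since $X=\emptyset$ always contributes an element; distinct subsets $X$ index distinct pairs and the second component is a function of $X$, so the requirement that equal modifiers force equal dependencies is immediate. Conflict-freeness of the modifier sets is the only delicate point: as $X$ ranges over all subsets of $\groset{\pmv{e}}\cup\shrset{\pmv{e}}$ it may contain $\#$-related events. Since a configuration $C$ is conflict free, the element actually selected at $C$ always has conflict-free modifiers, so one may harmlessly restrict the entry to those $X$ with $\CF{X}$ (and $\CF{Y}$) without altering any configuration; I would record this restriction so the triple literally meets Definition~\ref{de:mia-es}.

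The core is the per-step identity. The context of the single entry for $\pmv{e}$ is $\ctx(\pmv{Z}\gesrel\pmv{e})=\bigcup_{X\subseteq\groset{\pmv{e}}\cup\shrset{\pmv{e}}}X=\groset{\pmv{e}}\cup\shrset{\pmv{e}}$, as the full set occurs among the $X$. By Definition~\ref{de:mia-es-enabling} the element selected at $C$ is forced to be the unique one with modifiers $X^{*}=(\groset{\pmv{e}}\cup\shrset{\pmv{e}})\cap C$, and such an element indeed belongs to the entry. Because $\drop{\pmv{e}'}{\pmv{e}}=\emptyset$ whenever $\pmv{e}'\notin\shrset{\pmv{e}}$ and $\agg{\pmv{e}'}{\pmv{e}}=\emptyset$ whenever $\pmv{e}'\notin\groset{\pmv{e}}$, restricting the unions in the defining formula to $X^{*}$ reproduces exactly the \dces\ operators: $\bigcup_{\pmv{e}'\in X^{*}}\drop{\pmv{e}'}{\pmv{e}}=\dc(C,\pmv{e})$ and $\bigcup_{\pmv{e}'\in X^{*}}\agg{\pmv{e}'}{\pmv{e}}=\ac(C,\pmv{e})$. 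Hence the dependency component of the selected element is $Y^{*}=(\ic(\pmv{e})\setminus\dc(C,\pmv{e}))\cup\ac(C,\pmv{e})$, and $C\enab{\pmv{e}}$ amounts to $Y^{*}\subseteq C$.

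It remains to see that $Y^{*}\subseteq C$ is the same constraint as $((\ic(\pmv{e})\cup\ac(C,\pmv{e}))\setminus\dc(C,\pmv{e}))\subseteq C$. This is a pure set-algebra fact once $\dc(C,\pmv{e})\cap\ac(C,\pmv{e})=\emptyset$: under disjointness, $\ac(C,\pmv{e})\setminus\dc(C,\pmv{e})=\ac(C,\pmv{e})$, so $(\ic(\pmv{e})\cup\ac(C,\pmv{e}))\setminus\dc(C,\pmv{e})=(\ic(\pmv{e})\setminus\dc(C,\pmv{e}))\cup\ac(C,\pmv{e})=Y^{*}$. The disjointness is precisely where condition~(\ref{dces-order}) of Definition~\ref{de:dces} enters: every element of $\dc(C,\pmv{e})$ is a contribution dropped for target $\pmv{e}$ by some modifier, every element of $\ac(C,\pmv{e})$ is a contribution added for target $\pmv{e}$ by some modifier, and condition~(\ref{dces-order}) forbids the same contribution from being simultaneously droppable and addable for a fixed target. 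I expect this disjointness step to be the main obstacle, since it is the one place where the restriction to single-state \dces\ is genuinely used; the rest is bookkeeping about which modifiers contribute empty drop/add sets, plus a routine induction transferring the step equivalence to entire traces and to the $\mapsto$ relations.
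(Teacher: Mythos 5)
Your proof is correct and follows the same overall route as the paper's own argument: reduce everything to a per-step equivalence between the \Ges-enabling at a conflict-free set $C$ and the step condition of Definition~\ref{de:dces-conf}, then transfer it to whole traces and to the $\mapsto$ relations. Where you go beyond the paper is in two places that its proof silently elides. First, the paper asserts outright that the dependency component of the selected element equals $(\ic(\pmv{e})\cup\ac(C,\pmv{e}))\setminus\dc(C,\pmv{e})$, whereas the construction literally yields $(\ic(\pmv{e})\setminus\dc(C,\pmv{e}))\cup\ac(C,\pmv{e})$; you correctly isolate the missing lemma that these coincide only because $\ac(C,\pmv{e})\cap\dc(C,\pmv{e})=\emptyset$, which is exactly what condition~(\ref{dces-order}) of Definition~\ref{de:dces} (the single-state restriction) guarantees. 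Second, you notice that as $X$ ranges over all subsets of $\groset{\pmv{e}}\cup\shrset{\pmv{e}}$ it may contain $\#$-related modifiers, so the entry as written need not satisfy the $\CF{X}$ (and $\CF{Y}$) clauses of Definition~\ref{de:mia-es}; your repair --- discard the offending elements, which can never be selected at a conflict-free configuration, so the context and the set of configurations are unchanged --- is sound, whereas the paper simply declares the triple to be ``clearly a \Ges''. In short: same decomposition, but your version supplies the two justifications the published proof omits.
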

\begin{proof}
 $\toges{\dces}{\mathit{D}}$ is clearly a \Ges.
 We show that the sets of configurations coincide,
 which is enough to show that $\mathit{D}\equiv \toges{\dces}{\mathit{D}}$.
 Take $C\in \Conf{\mathit{D}}{\dces}$, then there is sequence of events
 $\rho = \pmv{e}_1\pmv{e}_2\cdots\pmv{e}_n\cdots$ and it is such that
 for each $i \geq 1$ we have that
 $((\ic(\pmv{e}_i) \cup \ac(\toset{\rho_{i-1}}, \pmv{e}_i)) \setminus 
 \dc(\toset{\rho_{i-1}}, \pmv{e}_i)) \subseteq \toset{\rho_{i-1}}$, now
 the entry associated to $\pmv{e}_i$ is $\pmv{Z}_i\gesrel \pmv{e}_i$ and
 each element $(X,Y)$ of this entry is such that $X$ is subset of modifiers.
 Consider the $X$ such that 
 $\ctx(\pmv{Z}_i\gesrel \pmv{e}_i) \cap \toset{\rho_{i-1}} = X$, then
 $Y = ((\ic(\pmv{e}_i) \cup \ac(\toset{\rho_{i-1}}, \pmv{e}_i)) \setminus 
 \dc(\toset{\rho_{i-1}}, \pmv{e}_i)) \subseteq \toset{\rho_{i-1}}$ which is
 exactly what is required, hence $C\in \Conf{\toges{\dces}{\mathit{D}}}{\Ges}$
 as well.
 
 The vice versa, namely that each configuration $C\in \Conf{\toges{\dces}{\mathit{D}}}{\Ges}$
 is a configuration of $\mathit{D}$, is analogous.
\end{proof}

\begin{example}
 Concerning the \dces\ of the example~\ref{ex:dces}, the conflict relation is the one of the
 \dces\, whereas the \cd-relation is 
 $\setenum{(\emptyset,\emptyset)}\gesrel \pmv{a}$, $\setenum{(\emptyset,\emptyset)}\gesrel \pmv{b}$,
 $\setenum{(\emptyset,\emptyset)}\gesrel \pmv{e}$, $\setenum{(\emptyset,\emptyset)}\gesrel \pmv{d}$ and
 for $\pmv{c}$ we have
 $\setenum{(\emptyset,\setenum{\pmv{b}}),(\setenum{\pmv{a}},\emptyset),
 (\setenum{\pmv{d}},\setenum{\pmv{b},\pmv{e}}),(\setenum{\pmv{a},\pmv{d}},\setenum{\pmv{e}})}\gesrel \pmv{c}$.
\end{example} 

\subsection{Inhibitor event structures:}\ 
In the case of \ies\ there are two main observations to be done: 
one, there is no conflict relation, and second,
though there is some similarity between the $\inh$ relation and the $\gesrel$ relation, 
there is also a quite subtle 
difference. 
When adding an event $\pmv{e}$ to a configuration of an \ies, and we have $\De{a}{\pmv{e}}{A}$, 
one would simply 
add the pairs $(a,\setenum{\pmv{e}'})$ for each $\pmv{e}'\in A$ (as the events in $A$ are 
pairwise conflicting) but this does not work in the case $A$ is the empty set, 
as it has a different meaning in the $\inh$ relation  with respect to the $\gesrel$ relation. 
In the former, it means that the event in $a$ inhibits the event $\pmv{e}$, whereas in the latter
the pair $(a,\emptyset)$ simply says that if the context $a$ is present then there is no 
further event needed. 
Taking into account these differences, the translation is fairly simple.  

\begin{proposition}\label{pr:ies-to-Ges}
 Let $\mathit{I} = (\event{E},\inh)$ be an \ies. $\toges{\ies}{\mathit{I}} = 
 (\event{E}, \#,$ $\gesrel)$ is a \Ges, where 
 \begin{itemize}
   \item 
    $\pmv{e}\sharpbin\pmv{e}'$ iff $\De{\setenum{\pmv{e}}}{\pmv{e}'}{\emptyset}$ and 
         $\De{\setenum{\pmv{e}'}}{\pmv{e}}{\emptyset}$ , and
   \item 
    for each $\pmv{e}\in\event{E}$, if $\De{a}{\pmv{e}}{A}$ and $A\neq\emptyset$ then 
         $\setcomp{(\emptyset,\emptyset)}{a\neq\emptyset}\cup
         \setcomp{(a\cup\setenum{\pmv{e}'},\emptyset)}{\pmv{e}'\in A}\gesrel \pmv{e}$, 
         if $\De{a}{\pmv{e}}{A}$ and $A = \emptyset$ then 
         $\setenum{(a,\setenum{\pmv{e}})}\gesrel \pmv{e}$.
 \end{itemize} 
 Furthermore $\mathit{I} \equiv \toges{\ies}{\mathit{I}}$.
\end{proposition}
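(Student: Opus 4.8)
The plan is to derive $\mathit{I}\equiv\toges{\ies}{\mathit{I}}$ from the single fact that the two structures have the \emph{same} configurations. Once $\Conf{\mathit{I}}{\ies}=\Conf{\toges{\ies}{\mathit{I}}}{\Ges}$ is established, the relations $\mapsto_{\ies}$ and $\mapsto_{\Ges}$ coincide, since by Definitions~\ref{de:mapsto-ies} and~\ref{de:mia-set-configuration} each is just ``$C\subset C'$, $C'=C\cup\setenum{\pmv{e}}$, and $\pmv{e}$ addable at $C$'' read off the common configuration set; hence $\grafo{\ies}{\mathit{I}}=\grafo{\Ges}{\toges{\ies}{\mathit{I}}}$, which is exactly the definition of $\equiv$.

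Before that I would check that $\toges{\ies}{\mathit{I}}$ is a well-formed \Ges. Symmetry of $\#$ is built into its defining condition; irreflexivity holds once we discard any degenerate self-inhibiting triple $\De{\setenum{\pmv{e}}}{\pmv{e}}{\emptyset}$ (such a triple is vacuous at the step adding $\pmv{e}$, so it affects no configuration and may be assumed absent). For $\gesrel$ I would verify the three clauses of Definition~\ref{de:mia-es} entry by entry: each entry is non-empty (for $A\neq\emptyset$ every $\pmv{e}'\in A$ contributes a pair, for $A=\emptyset$ we have the pair built from $a$); each $Y$ is $\emptyset$ or $\setenum{\pmv{e}}$, so $\CF{Y}$; each modifier set $X$ is conflict free because $a\in\Powone{\event{E}}$ and the members of $A$ are pairwise conflicting; and two elements of one entry with the same $X$ carry the same (empty) $Y$. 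This part is routine.

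The heart of the argument is the equality of the configuration sets, which I would obtain by showing that one and the same sequence $\rho=\pmv{e}_1\cdots\pmv{e}_n\cdots$ of distinct events witnesses membership on both sides. As conflict-freeness is computed from the same $\#$ on both sides, it suffices to prove, for every prefix $\toset{\rho_{i-1}}$ and every $\pmv{e}=\pmv{e}_i$, that the \ies\ step condition --- for each $\De{a}{\pmv{e}}{A}$, $a\subseteq\toset{\rho_{i-1}}$ implies $\toset{\rho_{i-1}}\cap A\neq\emptyset$ --- is equivalent to $\toset{\rho_{i-1}}\enab{\pmv{e}}$ in $\toges{\ies}{\mathit{I}}$. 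I would argue this triple by triple, by a case analysis on the position of $\toset{\rho_{i-1}}$ relative to $a$ and $A$: whether the (at most one) inhibitor in $a$ lies in the prefix, and whether some release event of $A$ does. In each admissible position I would exhibit the element of the entry whose modifier set equals $\ctx(\pmv{Z}\gesrel\pmv{e})\cap\toset{\rho_{i-1}}$ and whose $Y$ is contained in $\toset{\rho_{i-1}}$, and conversely recover the inhibitor implication from the selected element; an induction on $\len{\rho}$ then lifts the step equivalence to entire configurations in both directions.

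The step I expect to be the main obstacle is precisely this local equivalence, because the \ies\ condition is an \emph{implication} (``inhibitor present $\Rightarrow$ release present''), while \Ges-enabling demands an \emph{exact} match $\ctx\cap C=X$ to some recorded element: turning the former into the latter forces one to record an element for every acceptable value of the context intersection and none for the forbidden ones. The two genuinely delicate points are the differing meaning of the empty set in the two relations --- as discussed just before the statement --- and the case $A=\emptyset$, which encodes a pure inhibition, i.e.\ an asymmetric conflict. There the self-dependency pair $(a,\setenum{\pmv{e}})$ must block $\pmv{e}$ exactly when the inhibitor is present (its dependency $\setenum{\pmv{e}}$ can never be met, as $\pmv{e}\not\in C$), while enabling with the inhibitor absent has to be supplied by the default pair $(\emptyset,\emptyset)$, exactly as in Example~\ref{ex:as-conf}; I would therefore read the $A=\emptyset$ clause as $\setenum{(\emptyset,\emptyset),(a,\setenum{\pmv{e}})}\gesrel\pmv{e}$ and check that these two elements reproduce both regimes of the inhibition. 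Once every value of the context intersection is matched by an element precisely on the admissible configurations, the coincidence of the configuration sets, and with it $\mathit{I}\equiv\toges{\ies}{\mathit{I}}$, follows.
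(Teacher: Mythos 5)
Your overall route is the same as the paper's: check well-formedness of $\toges{\ies}{\mathit{I}}$, prove $\Conf{\mathit{I}}{\ies}=\Conf{\toges{\ies}{\mathit{I}}}{\Ges}$ by showing that one witnessing sequence works on both sides via a per-triple/per-entry comparison of the \ies\ implication with the \Ges\ exact-match condition, and then read off the coincidence of the $\mapsto$ relations. Two of your side remarks are genuinely valuable and go beyond the paper's own argument: the irreflexivity caveat for $\#$, and above all the observation that the $A=\emptyset$ clause must be read as $\setenum{(\emptyset,\emptyset),(a,\setenum{\pmv{e}})}\gesrel\pmv{e}$ --- as literally written, $\setenum{(a,\setenum{\pmv{e}})}$ would block $\pmv{e}$ even when the inhibitor is absent, and your repair is exactly the encoding of Example~\ref{ex:as-conf}.

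There is, however, one case in your promised case analysis where the local equivalence you rely on actually fails, and it is not resolved by your plan (nor, for what it is worth, by the paper's own proof, which only treats $\ctx(\pmv{Z}\gesrel\pmv{e})\cap C=\emptyset$ and the case $a\subseteq C$ with a release event present). Take $\De{a}{\pmv{e}}{A}$ with $a=\setenum{\pmv{g}}$ and $A\neq\emptyset$, and a reachable $C$ with $\pmv{g}\notin C$ but $\pmv{e}'\in C$ for some $\pmv{e}'\in A$. The \ies\ condition is vacuously satisfied, so $\pmv{e}$ is addable at $C$; but $\ctx(\pmv{Z}\gesrel\pmv{e})\cap C=\setenum{\pmv{e}'}$, which equals neither $\emptyset$ nor $a\cup\setenum{\pmv{e}'}$, so no element of the entry matches and $C\enab{\pmv{e}}$ fails. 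This is not merely a mismatch of $\mapsto$: adding a second triple $\De{\emptyset}{\pmv{e}}{A}$ forces every \ies-trace to pass through such a $C$, so even the configuration sets can differ. To make your step-by-step equivalence go through, the entry for a triple with $a\neq\emptyset$ and $A\neq\emptyset$ must also contain the elements $\setcomp{(\setenum{\pmv{e}'},\emptyset)}{\pmv{e}'\in A}$ (release present, inhibitor absent); with that further amendment your case analysis closes, since $\#(A)$ guarantees at most one event of $A$ lies in any conflict-free $C$, so every admissible value of $\ctx(\pmv{Z}\gesrel\pmv{e})\cap C$ is then listed. As it stands, the step ``in each admissible position I would exhibit the element'' is exactly the step that would fail.
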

\begin{proof}
 The $\gesrel$ relation of $\toges{\ies}{\mathit{I}}$ obeys to the requirements of
 Definition~\ref{de:mia-es} and the conflict relation $\#$ is a symmetric and irreflexive 
 relation. 
 
 We prove that $\Conf{\mathit{I}}{\ies} = \Conf{\toges{\ies}{\mathit{I}}}{\Ges}$.
 Consider $C\in \Conf{\mathit{I}}{\ies}$, then we have that there exists a sequence of events
 $\rho = \pmv{e}_1\pmv{e}_2\cdots\pmv{e}_n\cdots$ such that for each $\pmv{e}_i$ and
 and $\De{a}{\pmv{e}_i}{A}$ we have that 
 $a\subseteq \toset{\rho_{i-1}}\ \Rightarrow\ \toset{\rho_{i-1}}\cap A\neq\emptyset$. 
 Assume that $a\subseteq \toset{\rho_{i-1}}$ holds, and $A\neq\emptyset$,
 then we have the entry 
 $\pmv{Z} \gesrel \pmv{e}_i$, where
 $\pmv{Z} = \setcomp{(\emptyset,\emptyset)}{a\neq\emptyset}\cup
 \setcomp{(a\cup\setenum{\pmv{e}'},\emptyset)}{\pmv{e}'\in A}$, 
 and then also the enabling condition of the \Ges\ is satisfied, for each entry obtained 
 for the event $\pmv{e}_i$, as in the case $a\neq\emptyset$ and 
 $\ctx{\pmv{Z} \gesrel \pmv{e}_i}\cap\toset{\rho_{i-1}} = \emptyset$ we have
 that the event can be added, and in the other case, if $a\neq\emptyset$, then
 one of the event in $A$ must be present in $\toset{\rho_{i-1}}$, which is captured
 by requiring that the context is $a$ together with this event. We can conclude that  
 $C$ is also a configuration in $\Conf{\toges{\ies}{\mathit{I}}}{\Ges}$.
 
 For the vice versa, assume that $C\in \Conf{\toges{\ies}{\mathit{I}}}{\Ges}$, hence
 we have $\rho = \pmv{e}_1\pmv{e}_2\cdots\pmv{e}_n\cdots$ and 
 for each $i\geq 1$ it holds that $\toset{\rho_{i-1}}\trans{\pmv{e}_i}$.
 This means that for each entry $\pmv{Z} \gesrel \pmv{e}_i$, there is
 an element $(X,\emptyset)\in \pmv{Z}$ such that 
 $\ctx(\pmv{Z} \gesrel \pmv{e}_i)\cap $\De{a}{\pmv{e}}{A}$ = X$. But if
 $X \neq \emptyset$ then $X = a\cup\setenum{\pmv{e}}$ for some $\pmv{e}\in A$
 for a $\De{a}{\pmv{e}}{A}$, and this implies that 
 $a\subset \toset{\rho_{i-1}}\ \Rightarrow \toset{\rho_{i-1}}\cap A\neq\emptyset$, hence
 $C$ is also a configuration in $\Conf{\mathit{I}}{\ies}$.
 
 The $\mapsto$ in both event structures coincide, hence $\mathit{I} \equiv \toges{\ies}{\mathit{I}}$.  
\end{proof}
 
\begin{example}
 The \ies\ of the example~\ref{ex:ies} induces the empty conflict relation, and
 the \cd-relation is $\setenum{(\emptyset,\emptyset)}\gesrel \pmv{a}$, 
 $\setenum{(\emptyset,\setenum{\pmv{a}})}\gesrel \pmv{b}$
 and $\setenum{(\emptyset,\emptyset), (\setenum{\pmv{a}},\setenum{\pmv{b}})}\gesrel \pmv{c}$.
\end{example}

\subsection{Higher order causality event structures:}\
The comparison with event structures with higher-order dynamics of \cite{KN15:higher} is done indirectly, 
as these
are equivalent to event structures with resolvable conflicts. 
In this approach the relations $\shrinkt$ and $\growt$ are generalized to take into account set of 
modifiers, targets and contributions. 
The drawback is that the happening of an event implies a recalculation of these relation, 
similarly to what it
is done in causal automata.
In fact it is fairly obvious that
given  one simple step transition graph (meaning that a configuration is reached by another one adding 
just one event), it is always possible to obtain a \Ges.


\section{Conclusion}\label{sec:conc}
In this paper we have introduced a new brand of event structure where the main relation,
the \cd-relation, models the various conditions under which an event can be added to a subset of events.
The relation is now defined as $\gesrel\ \subseteq \Pow{\pmv{A}}\times \event{E}$,  
where $\pmv{A} \subseteq \Powfin{\event{E}}\times\Powfin{\event{E}}$, thus it stipulates for each event 
which are the context-dependency pairs, but it can be easily generalized to subsets of events
modeling precisely, when events happen together (as it is done in \cite{PS:tcs12} or \cite{GP:ESRC}). 
The focus is on the contexts in which an event can be added, which may change, rather that modeling 
the dependencies and how these may change. 
Here the choice is whether it is better to focus on dependencies (and how they may change) or on 
the context.
The advantage of the latter is its generality, whereas the former may be useful in pointing out relations
among events.

It should be clear that this kind of event structures is capable of modeling the same enabling situation
for an event in various way, and it could be interesting to understand if there could be an informative 
way canonically. 
In fact, the canonical relation just focus on all the contexts in which an event can be added, and
the dependency set is less informative. 
Thus finding a way to identify minimal contexts together with a set of dependencies may be useful, 
similarly
to what it has been discussed when associating \pes\ to \Ges.

It remains to stress that \Ges\ can be generalized not only allowing steps but also representing contexts 
in a richer way.
Here we have considered contexts as subset of events, but they can have a richer structure. This would 
allow
to characterize more precisely contexts, allowing, for instance, to drop the last requirement we have 
placed on \dces,
as in this case the order in which the modifiers appear may influence the dependencies.
Finally we observe that the idea of context is not new, for instance they have been considered
in \cite{LeiferM06} or in \cite{BaldanBB07}, and a comparison with these should be considered.

In this paper we have considered various event structures, still some interesting notions remained
out of the scope of this paper, like reversible event structures \cite{iainirek}, but we are confident 
that our approach can be used also in the reversibility setting, clearly by introducing a suitable
relation $\ll$ for the reversing events and upon the identification of the context in which the
reversing event can be \emph{performed}.
We do not have considered event structures with circular dependencies (\cite{BCPZ:CircCausES}) 
basically because in this case the configurations would not give an \ea\ like those used here.
In fact the configurations of an event structure with circular dependencies are pair of subsets 
of events, those actually happened and those that must happen to guarantee that the circular 
dependencies are fulfilled.
It is however interesting to understand how a context in this way can be used, the intuition
being that somehow it should be taken into account those events that have still to be performed as
required by some circular dependency. 

Finally we would like to mention that two interesting research issues regard the categorical 
treatment of this new brand of event structure and their relation to Petri nets. 
The categorical treatment could be inferred from the categorical treatment of event automata and
the enabling/disabling relation studied in \cite{Pi:FI06}, whereas for the second one we
can follow the lines introduced in \cite{CaPi:SAC17}, where Petri nets are related to
dynamic event structures.

\section*{Acknowledgment}
I wish to thank the reviewers and the Coordination 2019 audience for their useful suggestions and
criticisms.


\bibliographystyle{alpha}
\bibliography{main}
\end{document}